\title{Retracting Graphs to Cycles}
\author{Samuel Haney}{Duke University} {shaney@cs.duke.edu}{[orcid]}{}
\author{Mehraneh Liaee}{Northeastern University} {mehraneh@ccs.neu.edu}{[orcid]}{}
\author{Bruce M. Maggs}{Duke University \& Akamai Technologies} {bmm@cs.duke.edu}{[orcid]}{}
\author{Debmalya Panigrahi}{Duke University} {debmalya@cs.duke.edu}{[orcid]}{}
\author{Rajmohan Rajaraman}{Northeastern University} {rraj@ccs.neu.edu}{[orcid]}{}
\author{Ravi Sundaram}{Northeastern University} {koods@ccs.neu.edu}{[orcid]}{}
\authorrunning{S. Haney, M. Liaee, B. M. Maggs, D. Panigrahi, R. Rajaraman, R. Sundaram}
\keywords{Graph algorithms, Graph embedding, Planar graphs, Approximation algorithms.}
\algnewcommand\algorithmicinput{\textbf{Input:}}
\algnewcommand\algorithmicoutput{\textbf{Output:}}
\algnewcommand\Input{\item[\algorithmicinput]}%
\algnewcommand\Output{\item[\algorithmicoutput]}%
\newcommand{\junk}[1]{}
\newcommand{\eat}[1]{}
\newcommand{\BfPara}[1]{\noindent {\bf #1.}}
\begin{document}
\maketitle

\begin{abstract}
We initiate the algorithmic study of retracting a graph into a cycle in the graph, which seeks a mapping of the graph vertices to the cycle vertices, so as to minimize the maximum stretch of any edge, subject to the constraint that the restriction of the mapping to the cycle is the identity map. This problem has its roots in the rich theory of retraction of topological spaces, and has strong ties to well-studied metric embedding problems such as minimum bandwidth and $0$-extension.
\eat{
The concept of a retraction, originally from topology, has led to the development of rich theories in both topology and graph theory. In this paper we introduce and study the problem of finding a minimum stretch retract of a graph, from an algorithmic standpoint.

A {\it retraction} is a mapping from the vertices of a graph to given subgraph such that its restriction to the subgraph is the identity map. A retraction is said to be a {\it stretch-$k$} retraction if the {\it stretch} of the retraction, or the maximum distance between the images of the endpoints of any edge of the graph, is $k$, as measured in the subgraph. To the best of our knowledge the problem of finding a minimum stretch retraction has not been explicitly studied, though it is closely related to a host of problems in the general area of metric embeddings including well-known ones such as the $0$-extension problem and the bandwidth problem. 
}
Our first result is an $O(\min\{k, \sqrt{n}\})$-approximation for retracting any graph on $n$ nodes to a cycle with $k$ nodes. We also show a surprising connection to Sperner's Lemma that rules out the possibility of improving this result using certain natural convex relaxations of the problem. Nevertheless, if the problem is restricted to planar graphs, we show that we can overcome these integrality gaps using an optimal combinatorial algorithm, which is the technical centerpiece of the paper.  Building on our planar graph algorithm, we also obtain a constant-factor approximation algorithm for retraction of points in the Euclidean plane to a uniform cycle. 

\end{abstract}


\newpage
\section{Introduction}
Originally introduced in 1930 by K. Borsuk in his PhD thesis~\cite{Borsuk1930}, {\em retraction} is a fundamental concept in topology describing continuous mappings of a topological space into a subspace that leaves the position of all points in the subspace fixed. Over the years, this has developed into a rich theory with deep connections to fundamental results in topology such as Brouwer's Fixed Point Theorem \cite{hocking1961topology}.
Inspired by this success, graph theorists have extensively studied a discrete version of the problem in graphs, where a {\it retraction} is a mapping from the vertices of a graph to a given subgraph that produces the identity map when restricted to the subgraph (i.e., it leaves the subgraph fixed).  For a rich history of retraction in graph theory, we refer the reader to~\cite{hell2004graphs}.  Define the {\it stretch} of a retraction to be the maximum distance between the images of the endpoints of any edge, as measured in the subgraph. We use stretch-$k$ retraction to mean a retraction whose stretch is $k$; in particular, a stretch-$1$ retraction is a mapping where every edge of the graph is mapped to either an edge of the subgraph, or both its ends are mapped to the same vertex of the subgraph\footnote{In the literature, a stretch-1 retraction is often simply referred to as a retraction or a retract~\cite{hell2004graphs}.  Also, in many studies, a (stretch-1) retraction requires that the two end-points of an edge in the graph are mapped to two end-points of an edge in the subgraph.  These studies differentiate between the case where the subgraph being retracted to is reflexive (has self-loops) or irreflexive (no self-loops).  In this sense, our notion of graph retraction corresponds to their notion of retraction to a reflexive subgraph.}. 

In this paper, we study the algorithmic problem of finding a {\em minimum stretch retraction} in a graph.
This problem belongs to the rich area of metric embeddings, but somewhat surprisingly, has not received much attention in spite of the deep but non-constructive results in the graph theory literature. The graph retraction problem has a close resemblance to the well-studied $0$-extension problem~\cite{calinescu+kr:0-extension,karloff+kmr:0-extension,karzanov:0-extension} (and its generalizations such as metric labeling~\cite{kleinberg+t:label,chuzhoy+n:label}), which is also an embedding of a graph $G$ to a metric over a subset of terminals $H$ with the constraint that each vertex in $H$ maps to itself. The two problems differ in their objective: whereas $0$-extension seeks to minimize the {\em average} stretch of edges, graph retraction minimizes the {\em maximum} stretch. The different objectives lead to significant technical differences. For instance, a well-studied linear program called the earthmover LP has a nearly logarithmic integrality gap for $0$-extension.  In contrast, we show that a corresponding earthmover LP for graph retraction has integrality gap $\Omega(\sqrt{n})$. A well-studied problem in the metric embedding literature that considers the maximum stretch objective is the {\em minimum bandwidth} problem, where one seeks an isomorphic embedding of a graph into a line (or cycle) that minimizes maximum stretch. In contrast, in graph retraction, we allow homomorphic maps\footnote{A {\em homomorphic} map is one where an image can have multiple pre-images, while an {\em isomorphic} map requires that every image has at most one pre-image.} but additionally require a subset of vertices (the anchors) to be mapped to themselves.

From an applications standpoint, our original motivation for studying minimum-stretch graph retraction comes from a distributed systems scenario where the aim is to map processes comprising a distributed computation to a network of servers where some processes are constrained to be mapped onto specific servers. The objective is to minimize the maximum communication latency between two communicating processes in the embedding.  Such anchored embedding problems can be shown to be equivalent to graph retraction for general subgraphs, and arise in several other domains including VLSI layout, multi-processor placement, graph drawing, and visualization~\cite{held+krv:vlsi,hansen:embed,rotter+v:arrange}.

\subsection{Problem definition, techniques, and results}

We begin with a formal definition of the minimum stretch retraction problem.

\begin{definition}
Given an unweighted {\em guest}\/ graph $G = (V,E)$ and a {\em host}\/ subgraph $H = (A, E')$ of $G$, a mapping $f: V \rightarrow A$ is a retraction of $G$ to $H$ if $f(v) = v$ for all $v \in A$.  For a given retraction $f$ of $G$ to $H$, define the {\em stretch}\/ of an edge $e = (u,v) \in E(G)$ to be $d_H(f(u), f(v))$, where $d_H$ is the distance metric induced by $H$, and define the {\em stretch}\/ of $f$ to be the maximum stretch over all edges of graph $G$.  The goal of the {\em minimum-stretch graph retraction}\/ problem is to find a retraction of $G$ to $H$ with minimum stretch.  We refer to the vertices of $A$ as {\em anchors}. 
\end{definition}

The graph retraction problem is easy if the subgraph $H$ is acyclic (see, e.g., \cite{nowakowski+r:retract}); therefore, the first non-trivial problem is to retract a graph into a cycle. Indeed, this problem is NP-hard even when $H$ is just a 4-cycle~\cite{FEDER1998236}.  Given this intractability result, a natural goal is to obtain an algorithm for retracting graphs to cycles that {\em approximately} minimizes the stretch of the retraction.  This problem is the focus of our work.  While there has been considerable interest in identifying conditions under which retracting to a cycle with stretch 1 is tractable~\cite{Hahn1997,hell2004graphs,VIKAS2005406}, there has been no work (to the best of our knowledge) on deriving approximations to the minimum stretch\footnote{One direct implication of the NP-hardness proof is that approximating the maximum stretch to a multiplicative factor better than two is also NP-hard.}.

We consider the following lower bound for the problem: if anchors $u$ and $v$ are distance $\ell$ in $H$, and there exists a path of $p$ vertices in $G$ between $u$ and $v$, then every retraction has stretch at least $\ell / p$.
This lower bound turns out to be tight when $H$ is acyclic, which is the reason retraction to acyclic graphs is an easy problem.
However, this lower bound is no longer tight when $H$ is a cycle.
For example, consider a grid graph where $H$ is the border of the grid.
The lower bound given above says that any retraction has stretch at least $\Omega(1)$.
However, using the well-known Sperner's lemma, we show that the optimal retraction has stretch at least $\Omega(\sqrt{n})$.

Using just the simple distance based lower bound, we show that the gap on the grid is in fact the worst possible by giving a $O(\min\{k, \sqrt{n}\})$-approximation for the problem, where $k$ is the number of vertices of $H$.
Our algorithm works by first mapping vertices of the graph into a grid, then projecting vertices outward to the border 
from the largest {\em hole} in the grid, which is the largest region containing no vertices.  

\begin{theorem}
There is a deterministic, polynomial-time algorithm that computes a retraction of a graph to a cycle with stretch at most $\min\{k/2, O(\sqrt{n})\}$ times the optimal stretch, where $n$ and $k$ are respectively number of  vertices in the graph and cycle. 
\end{theorem}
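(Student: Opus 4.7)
My plan. The bound $\min\{k/2, O(\sqrt n)\}$ splits into two separate arguments. The $k/2$ factor is immediate: the diameter of the cycle $H$ is $\lfloor k/2 \rfloor$, so any valid retraction has stretch at most $k/2$. Combined with the bound $\mathrm{OPT} \geq 1$, which holds in every non-degenerate instance (i.e.\ whenever some connected component of $G$ contains two anchors at positive cycle distance), the ratio is at most $k/2$; such a retraction can be produced trivially by, e.g., mapping every non-anchor to its nearest anchor (breaking ties arbitrarily).

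For the more interesting $O(\sqrt n)$ bound I plan a two-step embed-then-project procedure. First compute the canonical distance-based lower bound $\tau^\ast = \max_{u,v \in A} d_H(u,v)/d_G(u,v) \leq \mathrm{OPT}$, which is polynomial-time computable. Then embed $G$ geometrically inside the planar region bounded by $H$, viewed as a $\Theta(k) \times \Theta(k)$ grid whose boundary vertices correspond to the anchors of $H$. The target property of the intermediate embedding $\pi : V(G) \to \mathbb{R}^2$ is that $\pi(a) = a$ for every anchor $a$ and $\|\pi(u) - \pi(v)\|_2 = O(\tau^\ast)$ for every edge $(u,v) \in E(G)$. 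A natural construction uses two nearly antipodal anchor pairs $a_N, a_S$ and $a_E, a_W$, setting coordinates from signed graph distances, e.g., $y_v \propto d_G(v, a_N) - d_G(v, a_S)$ and analogously for $x_v$; since each graph edge changes such a coordinate by at most a constant, after rescaling the grid the resulting Euclidean edge lengths are $O(\tau^\ast)$.

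The projection step locates a large empty region and projects outward from its center. Since the grid has area $\Theta(k^2)$ and holds at most $n$ image points, a pigeonhole argument over a coarse subgrid partition produces an empty disk $D$ of radius $R = \Omega(k/\sqrt n)$ sitting at constant relative distance from the boundary. Let $c$ be the center of $D$; I radially project each $\pi(v)$ onto $H$. For any edge $(u,v) \in E(G)$, both images have Euclidean distance at least $R$ from $c$, so the angular separation at $c$ is $O(\|\pi(u) - \pi(v)\|/R) = O(\tau^\ast / R)$, which translates to arc length $O(k \tau^\ast / R) = O(\tau^\ast \sqrt n)$ on a cycle of length $k$. Since $\tau^\ast \leq \mathrm{OPT}$, this yields the desired $O(\sqrt n)$-approximation.

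The main obstacle will be producing an intermediate embedding that simultaneously satisfies all three requirements: (i) anchors pinned at their prescribed boundary positions, (ii) every graph edge of Euclidean length $O(\tau^\ast)$, and (iii) sufficient unused interior area for a large empty disk at bounded relative distance from the boundary. The signed-BFS coordinate heuristic works cleanly when the anchor set contains well-separated antipodal pairs, but pathological anchor configurations (e.g., all anchors clustered on a short arc) and the possibility of non-planar guest graphs both complicate matters; handling them may require a rescaling/reduction step or a more flexible convex-relaxation-based construction. A secondary but important subtlety is verifying that the empty disk is not too close to the boundary, since if $D$ abuts $H$, the angular-separation bound degrades for edges whose images lie nearly on the radial ray through $c$.
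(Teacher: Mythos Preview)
Your overall architecture---embed $G$ into a $\Theta(k)\times\Theta(k)$ square with $H$ on the boundary, locate a large empty region by pigeonhole, then radially project to the boundary---is exactly the paper's approach, and your $k/2$ argument, hole-finding, and projection analysis are all essentially correct. The paper works in $L_\infty$ rather than $L_2$ and constrains the hole's center to lie within $L_\infty$-distance $k/16$ of the grid center (addressing precisely the ``hole too close to the boundary'' issue you flag), but these are refinements, not conceptual differences.

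The genuine gap is the intermediate embedding, and your signed-BFS proposal does not close it. Setting $y_v \propto d_G(v,a_N)-d_G(v,a_S)$ is $1$-Lipschitz across graph edges, but it does \emph{not} pin anchors to their prescribed boundary locations: for an anchor $a$ you only know $d_H(a,a_N)/\tau^\ast \le d_G(a,a_N)\le d_H(a,a_N)$, so the signed difference can land almost anywhere in a range of width $\Theta(k)$ once $\tau^\ast$ is large. Rescaling per anchor to force the correct position destroys the Lipschitz bound for non-anchors. (Incidentally, your worry about ``anchors clustered on a short arc'' is moot---$H$ is a $k$-cycle and its $k$ vertices \emph{are} the anchors, so antipodal pairs always exist; the obstruction is purely that $d_G$ need not track $d_H$.)

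The paper's fix is short and worth knowing. Work in $L_\infty$ and build $g$ incrementally: first place the anchors isometrically on the boundary of a $k/4\times k/4$ grid; then, having placed $v_1,\dots,v_{i-1}$, place $v_i$ at any point of
\[
\bigcap_{j<i} B_\infty\bigl(g(v_j),\,\tau^\ast\, d_G(v_j,v_i)\bigr).
\]
Axis-aligned squares have Helly number $2$, so this intersection is nonempty provided every pair of these balls meets; and if some pair $B_\infty(g(u),\tau^\ast d_G(u,v_i))$, $B_\infty(g(u'),\tau^\ast d_G(u',v_i))$ were disjoint, then $d_\infty(g(u),g(u'))>\tau^\ast(d_G(u,v_i)+d_G(u',v_i))\ge \tau^\ast d_G(u,u')$, contradicting the inductive hypothesis. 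This one-paragraph Helly argument gives you exactly properties (i)--(iii) simultaneously, with no case analysis and no planarity assumption on $G$.
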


Our results for retracting a general graph to a cycle appear in Section~\ref{sec:general}.
We also give evidence that the gap induced by Sperner's lemma on a grid graph is fundamental, showing an $\Omega(\min\{k, \sqrt{n}\})$ integrality gap for natural linear and semi-definite programming relaxations of the problem. To overcome this gap, we focus on the special case of planar graphs, of which the grid is an example. Retraction in planar graphs has been considered in the past, most notably in a beautiful paper of Quilliot \cite{QUILLIOT198561} who uses homotopy theory to characterize stretch-1 retractions of a planar graph to a cycle. Quillot's proof, however, does not yield an efficient algorithm.
In Section~\ref{sec:planar-graph-to-cycle}, we provide an exact algorithm for retraction in planar graphs by developing the gap induced by Sperner's lemma on a grid into a general lower bound on the optimal stretch for planar graphs. 

\begin{theorem}
\label{thm:planar-intro}
  There is a deterministic, polynomial-time algorithm that computes a retraction of a planar graph to a cycle with optimal stretch.
\end{theorem}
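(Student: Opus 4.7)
The plan is to reduce the problem to a sequence of feasibility questions. I would binary search over a candidate stretch value $D$, and design a polynomial-time decision procedure that either (i) outputs a retraction $f \colon V(G) \to V(H)$ of stretch at most $D$, or (ii) returns a succinct topological obstruction certifying that every retraction has stretch strictly greater than $D$. Taking the smallest $D$ that passes the test then yields an optimal retraction.

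For the lower-bound side, I would generalize the Sperner-based argument that the paper has already used to prove the $\Omega(\sqrt{n})$ bound on the grid. Fix a planar embedding of $G$ in which the cycle $H$ bounds a face; since $H$ is a cycle of the host this can always be arranged by choosing the embedding appropriately. Partition the anchors of $H$ into consecutive arcs of length at most $D$. Any stretch-$D$ retraction induces a labeling of $V(G)$ by arcs such that anchors get their own arc and the endpoints of every edge receive equal or cyclically adjacent labels. The infeasibility of such an arc-labeling is precisely a Sperner-style parity obstruction in the planar embedding: using an auxiliary triangulation of the embedding one can show that if no compatible labeling exists then some face of $G$ must have vertices forced into three pairwise non-adjacent arcs, which exhibits a short cycle of $G$ whose image traverses a large portion of $H$ -- a distance-based lower bound that matches $D$ up to the right constant.

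For the constructive side, I would take a feasible arc-labeling and turn it into an actual retraction of stretch at most $D$. Processing the interior vertices in an order guided by the planar embedding -- peeling from $H$ inward along a BFS over faces of the dual -- I would commit each vertex to a specific vertex of $H$ inside its arc, always choosing a target that keeps each newly incident edge within distance $D$ on the cycle. Because adjacent vertices are in equal or neighboring arcs, a valid choice always exists locally; planarity ensures that these local choices can be extended without conflict, since the rotational structure around each face forbids the crossing constraints that would force inconsistency.

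The main obstacle will be proving that the Sperner-based infeasibility condition is tight, i.e., that whenever no topological obstruction exists a retraction of the claimed stretch can actually be realized. This requires (a) the right generalization of the triangle-Sperner lemma to an arbitrary planar embedding with an anchored boundary cycle, producing a concrete witness face when feasibility fails, and (b) a face-by-face construction argument establishing that local labeling decisions during the inward sweep never violate a previously committed choice. Both parts exploit planarity in an essential way -- through the rotation system of the embedding and the absence of $K_5$ and $K_{3,3}$ minors -- which is why the method does not extend to general graphs, where the paper has already shown an $\Omega(\sqrt{n})$ integrality gap rules out such a clean result.
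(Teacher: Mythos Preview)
Your high-level strategy matches the paper's, but both halves of your decision procedure have genuine gaps that the paper closes with different machinery. On the obstruction side, ``some face has vertices forced into three pairwise non-adjacent arcs'' is not the right witness. The paper first reduces exactly to stretch $1$ by subdividing every non-$H$ edge into $D$ edges (so your arc coarsening, with its built-in factor-of-two slack---adjacent arcs of width $D$ can be $2D$ apart---is unnecessary and would only give an approximation), and then shows via a winding-number (``score'') argument that if a stretch-$1$ retraction exists, there is some inner face $F$ such that \emph{every} cycle surrounding $F$ has length at least $k=|V(H)|$. The correct infeasibility certificate is therefore global---a short surrounding cycle for \emph{each} inner face---not a local colour pattern on a single face boundary.

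On the constructive side, your inward BFS sweep with greedy commitment is not justified, and the failure mode you yourself flag (several already-committed neighbours pulling a vertex toward incompatible targets) really does arise; planarity alone does not rule it out. The paper's construction is entirely different: once a face $F$ with no short surrounding cycle is found, Menger's theorem in a distance-preserving triangulated supergraph yields $k$ vertex-disjoint curves from the boundary of $F$ to $H$ (using the fact that in a triangulated planar graph the minimum $s$--$t$ vertex cut equals the shortest $s$--$t$ separating cycle). These $k$ curves slice the annulus between $F$ and $H$ into $k$ regions, each touching exactly one edge of $H$, and mapping every vertex in a region to an endpoint of that edge gives the stretch-$1$ retraction directly---no sweep or greedy choice is needed. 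Neither the Menger/disjoint-paths construction nor the short-surrounding-cycle characterisation appears in your proposal, and these are the load-bearing steps.
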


Unfortunately, our techniques rely heavily on the planarity of the graph, and do not appear to generalize to arbitrary graphs. While we leave the question of obtaining a better approximation for general graphs open, we provide a more sophisticated linear programming formulation that captures the Sperner lower bound on general graphs as a possible route to attack the problem. 

We also study natural special cases and generalizations of the problem, all of which are presented in the appendix due to space limitations.  First, we consider a geometric setting, where a set of points in the Euclidean plane has to be retracted to a uniform cycle of anchors. By a uniform cycle of anchors we mean a set of anchors which are distributed uniformly on a circle in the plane. We obtain a constant approximation algorithm for this problem, by building on our planar graph algorithm, in Appendix~\ref{sec:2d_euclidean}.  We next consider retraction of a graph of bounded treewidth to an {\em arbitrary subgraph}, and obtain a polynomial-time exact algorithm in Appendix~\ref{sec:treewidth}.  Finally, we apply the lower bound argument of~\cite{karloff+kmr:0-extension} for $0$-extension to show in Appendix~\ref{app:hardness} that a general variant of the problem that seeks a retraction of an arbitrary weighted graph $G$ to a metric over a subset of the vertices of $G$ is hard to approximate to within a factor of $\Omega(\log^{1/4 - \epsilon} n)$ for any $\epsilon > 0$.   

\subsection{Related work}

\BfPara{List homomorphisms and constraint satisfaction}
The graph retraction problem is a special case of the {\em list homomorphism} problem introduced by Feder and Hell~\cite{FEDER1998236}, who established conditions under which the problem is NP-complete. 
Given graphs $G, H$, and $L(v) \subset V(H)$ for each $v \in V(G)$, a list homomorphism of $G$ to $H$ with respect to $L$ is a homomorphism $f: G \rightarrow H$ with $f(v) \in L(v)$ for each $v \in V(G)$. 

Several special cases of graph retraction and variants of list homomorphism have been subsequently studied (e.g.,~\cite{feder+hjkn:retract,hell2004graphs,vikas:retract,VIKAS2005406}).  These studies have
established and exploited the rich connections between list homomorphism and  Constraint Satisfaction Problems (CSPs).   Though approximation algorithms for CSPs and related problems such as Label Cover have been extensively studied, the objective pursued there is that of maximizing the number of constraints that are satisfied.  For our graph retraction problem, this would correspond to maximizing the number of edges that have stretch below a certain threshold.  Our notion of approximation in graph retraction, however, is the least factor by which the stretch constraints need to be relaxed so that all edges are satisfied.

\smallskip
\BfPara{$0$-extension, minimum bandwidth, and low-distortion embeddings}
From an approximation algorithms standpoint, the graph retraction problem is closely related to the $0$-extension and minimum bandwidth problems~\cite{FEIGE2000510,BLUM200025,gupta:bandwidth,vempala:projectVLSI,dubey+fu:bandwidth,Saxe1980DynamicProgrammingAF}. In the $0$-extension problem, one seeks to minimize the average stretch, which can be solved to an $O(\log k/\log\log k)$ approximation using a natural LP relaxation~\cite{calinescu+kr:0-extension,fakcharoenphol+hrt:0-extension}. In contrast, we give polynomial integrality gaps for the graph retraction problem. In the minimum bandwidth problem, the objective is to find an embedding to a line that minimizes maximum stretch, but the constraint is that the map must be isomorphic  rather than that the anchor vertices must be fixed. In a seminal result~\cite{FEIGE2000510}, Feige designed the first polylogarithmic-approximation using a novel concept of volume-respecting embeddings.  A slightly improved approximation was achieved in~\cite{dunagan+v:bandwidth} by combining Feige's approach with another bandwidth algorithm based on semidefinite-programming~\cite{BLUM200025}.  Interestingly, the minimum bandwidth problem is NP-hard even for (guest) trees, while graph retraction to (host) trees is solvable in polynomial time. Conversely, the bandwidth problem is solvable in time $O(n^b)$ for bandwidth $b$ graphs \cite{GurariS84}, while graph retraction to a cycle is NP-complete even when the host cycle has only four vertices.  Nevertheless, it is conceivable that volume-respecting embeddings, in combination with random projection, could lead to effective approximation algorithms for graph retraction to a cycle in a manner similar to what was achieved for VLSI layout on the plane~\cite{vempala:projectVLSI}.  

Also related are the well-studied variants of linear and circular arrangements, but their objective functions are average stretch, as opposed to maximum stretch. Finally, another related area is that of {\em low-distortion embeddings} (e.g.,~\cite{matousek:embed}), where recent work has considered embedding one specific $n$-point metric to another $n$-point metric~\cite{kenyon+rs:embed,matousek+s:embed,badoiu+dgrrrs:embed} similar to the graph retraction problem. But low-distortion embeddings typically require {\em non-contracting}\/ isomorphic maps, which distinguishes them significantly from the graph retraction problem.  

A related recent work studies low-distortion {\em contractions\/} of graphs~\cite{bernstein+ddkms:contract}.  Specifically, the goal is to determine a maximum number of edge contractions of a given graph $G$ such that for every pair of vertices, the distance between corresponding vertices in the contracted graph is at least a given affine function of the distance in $G$.  Several upper bounds and hardness of approximations are presented in~\cite{bernstein+ddkms:contract} for many special cases and problem variants.  While graph retraction and contraction problems share the notion of mapping to a subgraph, the problems are considerably different; for instance, in the graph retraction problem the subgraph $H$ is part of the input, and the objective is to minimize the maximum stretch.

\eat{
\smallskip
\BfPara{Low-distortion embeddings}
The technique of volume-respecting embeddings~\cite{FEIGE2000510,krauthgamer+lmn:embed} is one instance of the general paradigm of low-distortion metric embeddings that has been extensively used in solving diverse graph optimization problems~\cite{matousek:embed}.  While the research focus in metric embeddings is primarily on embedding one class of metrics in another class (e.g., general metrics to trees, general metrics to Euclidean, $\ell_2$ to $\ell_1$), recent research has also concerned the problem of embedding one specific $n$-point metric to another $n$-point metric~\cite{kenyon+rs:embed,matousek+s:embed}.  The embeddings that are sought here are one-to-one, and for most problems of interest, the distortion bounds as well as approximation factors likely achievable are polynomial in $n$~\cite{matousek+s:embed,badoiu+dgrrrs:embed}.  One significant technical difference between low-distortion embeddings and graph retraction is that the embeddings being sought in the former are {\em non-contracting}\/ while no such restriction is present for the retraction or the bandwidth problem, as a result of which approximation factors do not easily transfer from one problem to the other~\cite{badoiu+dgrrrs:embed}.
}



\section{Retracting an arbitrary graph to a cycle}
\label{sec:general}
In this section, we study the problem of retracting an arbitrary graph to a cycle over a subset of vertices of the graph.  Let $G$ denote the guest graph over a set $V$ of $n$ vertices, with shortest path distance function $d_G$.  Let $H$ denote the host cycle with shortest path distance function $d_H$ over a subset $A \subseteq V$ of $k$ anchors.  

Arguably, the simplest lower bound on the optimal stretch is the {\em distance-based} bound $\ell(G,H) = \max_{u,v \in A} d_H(u,v)/d_G(u,v)$, since every retraction places a path of length $d_G(u,v)$ in $G$ on a path of length at least $d_H(u,v)$ in $H$. 

We now present our algorithm (Algorithm~\ref{alg:grid_embedding}), which achieves a stretch of $\min\{k/2, \ell(G,H)\sqrt{n}\}$. Here, we give a high level overview of the algorithm.  The first step of algorithm is to embed the input graph $G$ into a grid of size $k/4 \times k/4$ subject to some constraints. The second step is to find the largest empty sub-grid $D$ such that no point is mapped inside of $D$ and center of $D$ is within a desirable distance from center of grid $M$. And final step is to project the points in grid $M$ to its boundary with respect to center of sub-grid $D$.  

\junk{Let $S$ denote a disk of unit radius and let $V$ denote a set of $n$ points in the disk, $k$ of which are uniformly placed on the boundary of the disk.  Let $D$ denote the largest ball inside $S$ that does not contain any point in $V$. Let $r$ denote the radius of $D$.  For any point $v$, let $R(v)$ denote the ray originating from the center of $D$ passing through $v$.  We define the {\em projection embedding}\/ $\pi$ as follows: for each point $v$, $\pi(v)$ is the anchor nearest in the clockwise direction to the intersection of $R(v)$ with the boundary of the disk.  
}

\begin{algorithm}
\caption{Algorithm for retracting an arbitrary graph to a cycle \label{alg:grid_embedding}}
\begin{algorithmic}
\Input {Graph $G$, host cycle $H$}
\Output {Embedding function $f$}
\State {{\bf Embedding in a grid:} Determine embedding $g$ from $G$ into a $k/4 \times k/4$ grid $M$ such that $H$ is embedded one-to-one to the boundary of $M$ and for every $u, v \in V$, $d_\infty(g(u),g(v)) \le \ell(G,H) d_G(u,v)$.}
\State {{\bf Find largest hole:} Find the largest square sub-grid $D$ of $M$ such that (a) its center $c$ is at $L_\infty$ distance at most $k/16$ from the center of $M$ and (b) there is no vertex $u$ in $G$ for which $g(u)$ is in the interior of $D$.}
\State{{\bf Projection embedding:} For all $v$ in $G$:
\begin{enumerate}
\item $R(v) \leftarrow$ ray originating from the center of $D$ and passing through $g(v)$.
\item $f(v) \leftarrow$ the anchor on the boundary of grid $M$ nearest in the clockwise direction to the intersection of $R(v)$ with the boundary of $M$.
\end{enumerate}
}
\junk{\State {Use the projection embedding of Section~\ref{sec:2d_euclidean.disk} to 
map each point $g(u)$ to a vertex $f(u)$ in the boundary $H$.}} \\
\Return $f$
\end{algorithmic}
\end{algorithm}
We now show how to implement the first step of Algorithm~\ref{alg:grid_embedding}.  Our goal is to embed each vertex \(u \in G\) to some point \(g(u)\) in a \(k/4 \times k/4\) grid such that for every \(u,v\), we have the following inequality, where $d_\infty(a,b)$ denotes the $L_\infty$ distance between $a$ and $b$. (That is, for two points $(x_1,y_1)$ and $ (x_2,y_2)$, $ d_\infty((x_1,y_1), (x_2,y_2)) = \max\{|x_1-x_2|, |y_1 - y_2|\}$.)
\begin{equation}
\label{eqn:grid_embedding}
  d_{\infty}(g(u),g(v)) \le \ell(G,H) d_G(u,v)
\end{equation}
Additionally, we require that \(H\) is embedded to the boundary of the grid, such that adjacent anchors lie on adjacent grid points.

\begin{lemma}
\label{lem:grid_embedding}
For every \(G\), we can find an embedding $g$ satisfying inequality~\ref{eqn:grid_embedding}.
\end{lemma}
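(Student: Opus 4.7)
The plan is to define $g$ via a ``tight-span''-style minimum over pinned anchor positions. First, I fix the anchor placement. The $k/4 \times k/4$ grid $M$ has exactly $k$ boundary grid points, so I map the four corner anchors $a_0, a_{k/4}, a_{k/2}, a_{3k/4}$ to the four corners of $M$ and place the remaining anchors at the integer boundary points in cyclic order, so that consecutive anchors along $H$ land on adjacent grid points. Write $(X_i, Y_i)$ for the resulting position of $a_i$. By construction, a single step along $H$ changes each of the $x$- and $y$-coordinates by at most $1$.

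For an arbitrary vertex $v \in V(G)$, I define
\[
g_x(v) \;=\; \min\!\Bigl(\tfrac{k}{4},\ \min_{i}\bigl(X_i + \ell(G,H)\cdot d_G(v, a_i)\bigr)\Bigr),
\]
and $g_y$ symmetrically using the $Y_i$; set $g(v) = (g_x(v), g_y(v))$. The fact that $g_x$ (and similarly $g_y$) is $\ell(G,H)$-Lipschitz with respect to $d_G$ follows immediately from the triangle inequality applied to the inner $\min$, and truncation at $k/4$ preserves this; hence $d_\infty(g(u), g(v)) \le \ell(G,H)\cdot d_G(u,v)$.

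The substantive step is to verify that $g$ pins the anchors to their intended positions, i.e., $g(a_j) = (X_j, Y_j)$ for every $j$. The upper bound $g_x(a_j) \le X_j$ is immediate (take $i=j$). For the matching lower bound I need $X_i + \ell(G,H)\cdot d_G(a_j, a_i) \ge X_j$ for every $i$, which reduces to the key inequality $|X_i - X_j| \le d_H(a_i, a_j)$ combined with the defining bound $d_H(a_i, a_j) \le \ell(G,H)\cdot d_G(a_i, a_j)$. The key inequality is exactly the observation made after the anchor placement: walking the shortest $H$-path from $a_i$ to $a_j$ uses $d_H(a_i, a_j)$ unit steps along the grid boundary, so the $x$-coordinate drifts by at most $d_H(a_i, a_j)$ in total; the same holds for $g_y$. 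Thus $g(a_j) = (X_j, Y_j)$, and in particular $H$ is embedded one-to-one onto the grid boundary with adjacent anchors on adjacent grid points.

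The main (mild) obstacle is engineering the anchor placement so that the key coordinate-Lipschitz inequality holds uniformly around the cycle. The symmetric $4$-fold correspondence between the four arcs of $H$ of length $k/4$ and the four sides of the grid is precisely what guarantees that both the $x$- and $y$-coordinates are $1$-Lipschitz along $H$; once this combinatorial arrangement is in place, the tight-span formula handles all non-anchor vertices automatically.
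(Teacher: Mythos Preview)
Your proof is correct but takes a genuinely different route from the paper. The paper argues incrementally: after placing the anchors on the boundary, it embeds the remaining vertices one by one, showing at each step that the intersection $\bigcap_{u \in U} B_\infty(g(u),\ell(G,H)\,d_G(u,v_i))$ is nonempty by invoking the fact that axis-aligned squares have Helly number~2, so pairwise intersection (which follows from the inductive hypothesis) suffices. Your argument instead writes down an explicit McShane--Whitney style extension: each coordinate of $g(v)$ is a minimum of affine functions of the anchor distances, automatically $\ell(G,H)$-Lipschitz in $d_G$, and you then verify directly that the anchors are fixed using $|X_i-X_j|\le d_H(a_i,a_j)\le \ell(G,H)\,d_G(a_i,a_j)$. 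Both arguments exploit the same underlying phenomenon---$L_\infty$ is an injective metric, so Lipschitz maps into it extend---but the paper reaches it through Helly's theorem while you reach it through an explicit coordinate-wise formula. Your version has the advantage of producing a closed-form embedding with no induction and no appeal to Helly; the paper's version makes the role of the $L_\infty$ geometry (via the Helly property of boxes) more transparent and would generalize more readily to other hyperconvex targets.
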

\begin{proof}
We incrementally construct the embedding $g$.  Initially, we place the anchors on the boundary of the grid so that the boundary is isometric to $d_H$.  (This can be done since $H$ is a cycle.)  Since $d_\infty(g(u),g(v)) \le d_H(u,v)$ and $d_H(u,v) \le \ell(G,H) d_G(u,v)$,  inequality~\ref{eqn:grid_embedding} holds for all anchors $u$ and $v$ in $H$.

We next inductively embed the remaining vertices of $G$.  Suppose we need to embed vertex \(v_i\), and vertices \(U = v_1,\ldots,v_{i-1}\) have already been embedded.  Assume inductively that the embedding of the vertices of \(U\) satisfies inequality~\ref{eqn:grid_embedding} for the vertices in $U$.  

Let \(B_\infty(g(u),r)\) denote the \(L_\infty\) ball around \(g(u)\) with radius \(r\) (note that these balls are axis-aligned squares).  Let $x$ be any point in \(\bigcap_{u \in U} B_\infty(g(u),\ell(G,H) d_G(u,v_i))\).  If we set $g(v_i) = x$, then inequality~\ref{eqn:grid_embedding} holds for all points in $U \cup \{v_i\}$.  We now show that this intersection is nonempty (it is straightforward to find an element in the intersection).  The set of axis aligned squares has Helly number 2\footnote{A family of sets has Helly number $h$ if any minimal subfamily with an empty intersection has $h$ or fewer sets in it.}; therefore it is enough to show that for every \(u,u' \in U\), \(B_\infty(g(u),\ell(G,H)d_G(u,v_i))\) and \(B_\infty(g(u'),\ell(G,H)d_G(u',v_i))\) intersect.  Otherwise,
\begin{align*}
  d_\infty(g(u),g(u')) > \ell(G,H)(d_G(u,v_i) + d_G(u',v_i)) \ge \ell(G,H)d_G(u,u').
\end{align*}
This contradicts our induction hypothesis that the set of vertices in \(U\) satisfies inequality~\ref{eqn:grid_embedding}, and completes the proof of the lemma.
\end{proof}
In the following lemma, we analyze the projection embedding step of the algorithm.
\begin{figure}[h]
  \centering
  \begin{subfigure}[c]{.3\textwidth}
    \includegraphics[width=\textwidth,page=1]{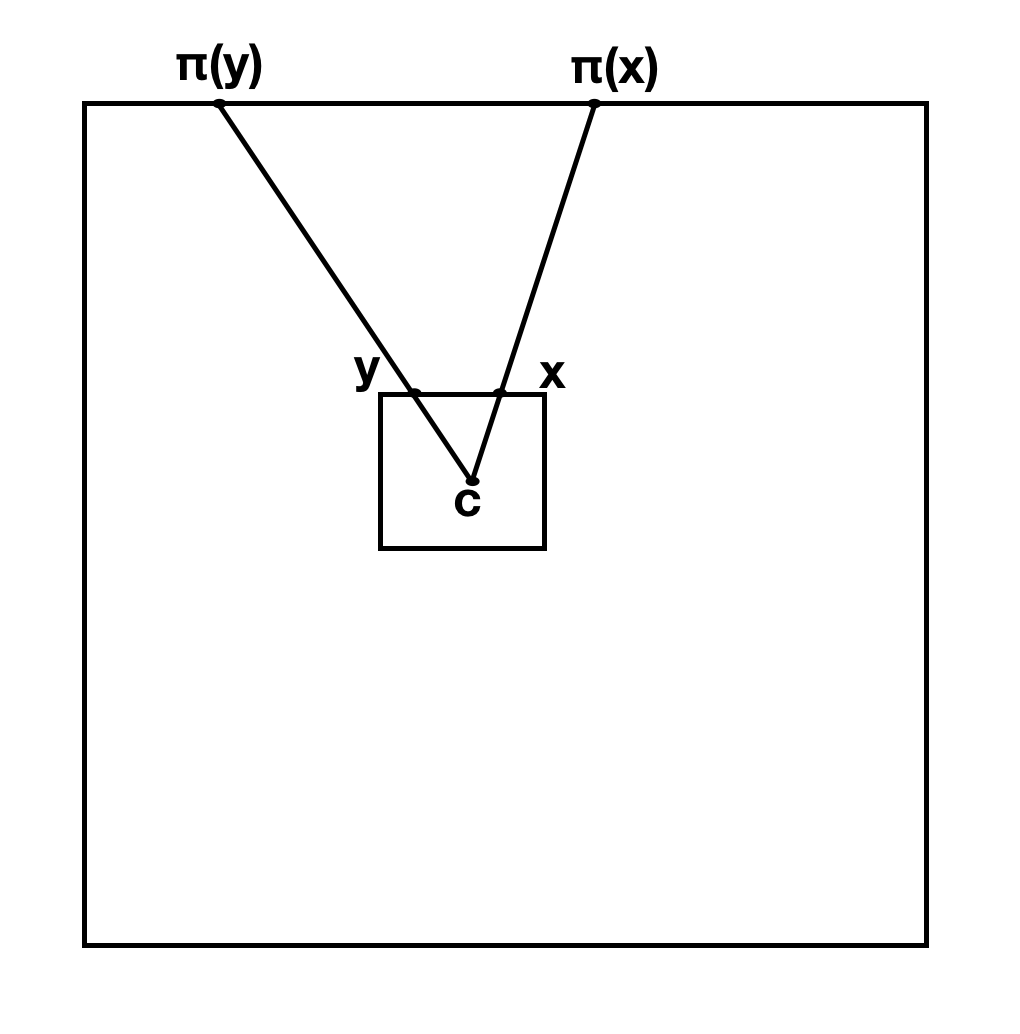}
    \caption{Points $x$ and $y$ are on the same side of square $D$, and points $\pi(x)$ and $\pi(y)$ are on one side of boundary of $M$ parallel to segment $\overline{xy}$.}
\label{fig:square_embedding_a}
\end{subfigure}
  \begin{subfigure}[c]{.3\textwidth}
   \includegraphics[width=1.05\textwidth,page=2]{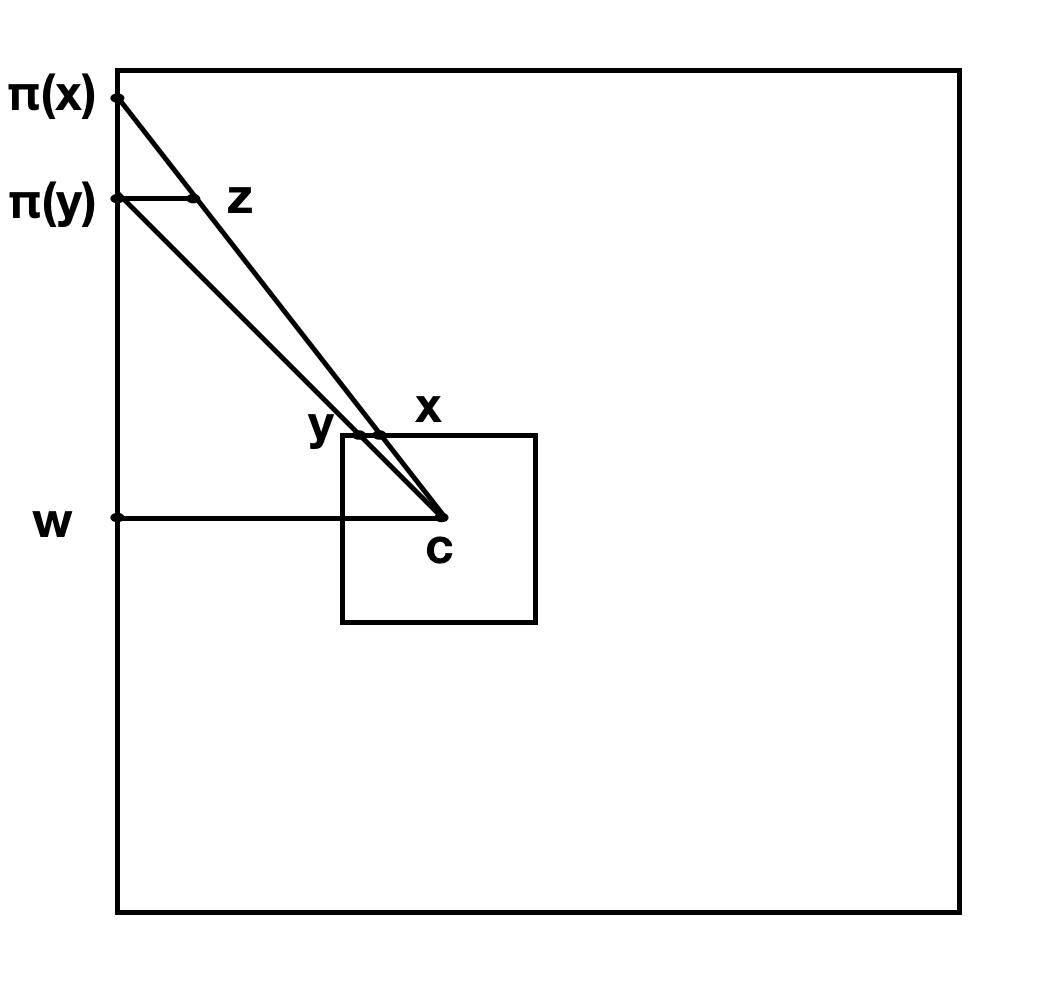}
    \caption{Points $x$ and $y$ are on the same side of square $D$, and points $\pi(x)$ and $\pi(y)$ are on one side of boundary of M orthogonal to segment $\overline{xy}$.}
    \label{fig:square_embedding_b}
  \end{subfigure}
  \begin{subfigure}[c]{.3\textwidth}
    \includegraphics[width=1.05\textwidth]{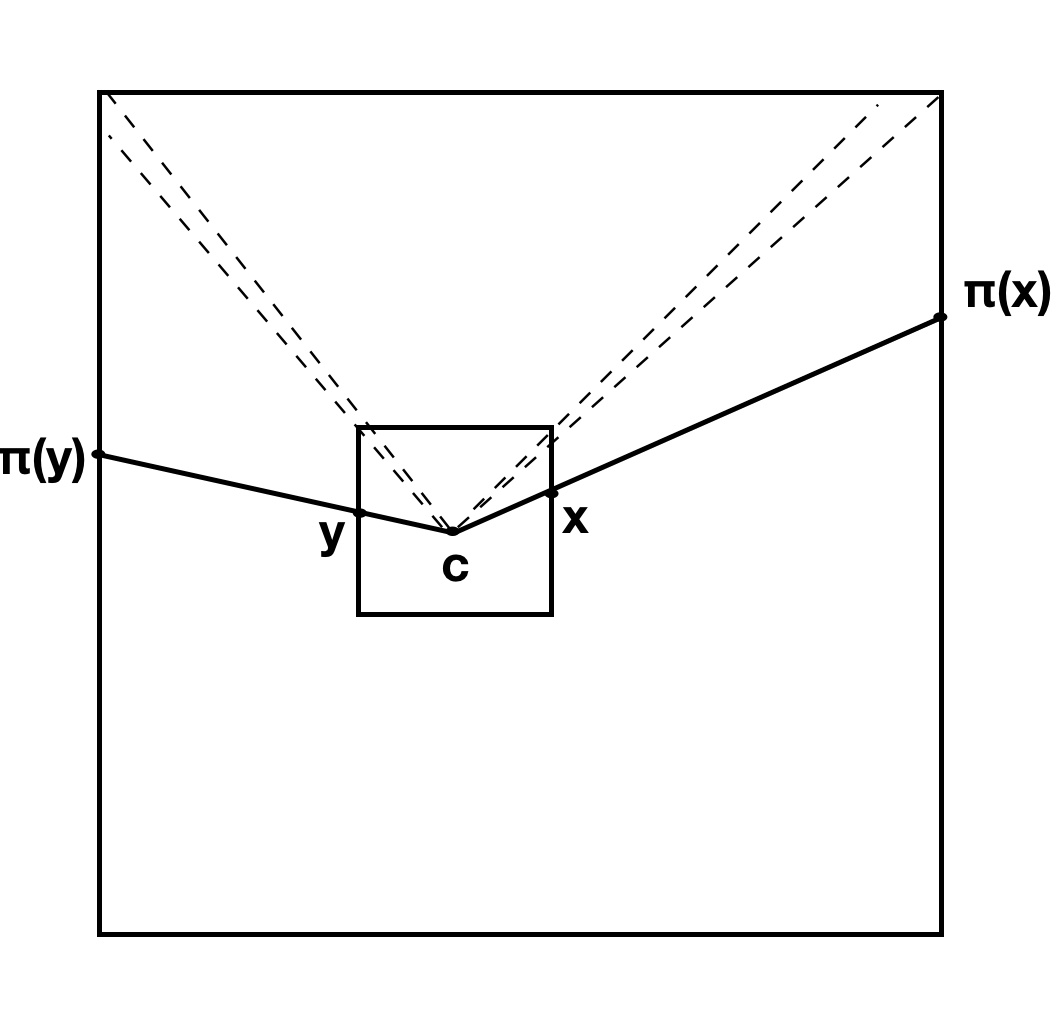}
    \caption{General case where Points $x$ and $y$ (resp. points $\pi(x)$ and $\pi(y)$) are anywhere on the boundary of $D$ (resp. on the boundary of $M$)}
    \label{fig:square_embedding_c}
  \end{subfigure}
  \caption{Embedding of points inside the grid $M$ to its boundary using an empty square $D$.  Referred to in the proof of Lemma~\ref{disk_embedding_theorem}. }
  \label{fig:square_embedding}
\end{figure}
\begin{lemma} \label{disk_embedding_theorem}
Suppose $r$ is the side length of the largest empty square $D$ inside $M$.  Then for any vertices $u$ and $v$ in $G$,  $d_H(f(u),f(v))$ is at most $1 + (10\sqrt{2}k/r)d_\infty(g(u),g(v))$.    
\end{lemma}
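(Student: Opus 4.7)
My plan is to split the bound into a cheap rounding step and a main geometric estimate, matching the case analysis suggested by Figure~\ref{fig:square_embedding}. Let $\tilde\pi(v) := R(v)\cap \partial M$ denote the ray's intersection with the grid boundary \emph{before} rounding. Since $H$ is embedded one-to-one onto $\partial M$ with consecutive anchors at consecutive grid points, $d_H$ between two anchors equals the cyclic arc distance along $\partial M$; the algorithm replaces $\tilde\pi(v)$ by the clockwise-nearest anchor $f(v)$, an arc shift of at most $1$. Thus $d_H(f(u),f(v)) \le \operatorname{arc}_{\partial M}(\tilde\pi(u),\tilde\pi(v)) + 1$, and the real work is to bound the arc-distance term by $(10\sqrt{2}\,k/r)\,d_\infty(g(u),g(v))$.

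Next I would bound the angle $\theta$ subtended at the centre $c$ of $D$ by the rays $R(u)$ and $R(v)$. Because the interior of $D$ contains no images, $|c-g(u)|_\infty,\,|c-g(v)|_\infty \ge r/2$, hence their Euclidean distances to $c$ are also at least $r/2$. The elementary inequality $|g(u)-g(v)|_2 \ge \min(|c-g(u)|_2,|c-g(v)|_2)\sin\theta$ (perpendicular distance from one endpoint to the other ray), combined with $|\cdot|_2 \le \sqrt{2}|\cdot|_\infty$, yields $\sin\theta \le 2\sqrt{2}\,d_\infty(g(u),g(v))/r$. I may assume this quantity is at most $1/2$, since otherwise $(10\sqrt{2}\,k/r)\,d_\infty(g(u),g(v))$ already exceeds the diameter $k/2$ of $H$ and the claim is vacuous; then $\theta \le 2\sin\theta \le 4\sqrt{2}\,d_\infty(g(u),g(v))/r$.

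Finally, I would convert $\theta$ into arc length on $\partial M$. Because $c$ lies within $L_\infty$ distance $k/16$ of the centre of the square $M$ of side $k/4$, every flat side of $M$ is at perpendicular distance at least $k/16$ from $c$ and every point of $\partial M$ is at Euclidean distance at most $3\sqrt{2}\,k/16$ from $c$. Parameterizing a side of $M$ by the angle $\phi$ seen from $c$ gives the familiar identity $ds/d\phi = \rho^2/\rho_0$, whose maximum over $\partial M$ is a universal constant times $k$ (my calculation gives $5k/8$, attained near a corner). Integrating across any corner crossings yields $\operatorname{arc}_{\partial M}(\tilde\pi(u),\tilde\pi(v)) \le O(k)\,\theta \le (10\sqrt{2}\,k/r)\,d_\infty(g(u),g(v))$, with slack to spare, and adding the $+1$ from the rounding step finishes the proof.

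The most delicate step is the last one: the three panels of Figure~\ref{fig:square_embedding} suggest the authors instead do side-by-side coordinate computations (projections on the side of $M$ parallel to $\overline{g(u)g(v)}$, on an orthogonal side, and the general case where the arc wraps around a corner of $M$), which avoids smooth calculus but demands that one cleanly handle both the corner-crossing subcase and the passage from a directed arc to the cyclic shortest-arc distance used by $d_H$.
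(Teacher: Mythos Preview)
Your approach is correct and genuinely different from the paper's. The paper first \emph{pulls both points inward to the boundary of $D$} along their rays (showing this costs only a factor $\sqrt{2}$ in $d_\infty$), and then does an elementary similar-triangles case analysis on the pair $x,y\in\partial D$: (a) $\pi(x),\pi(y)$ land on a side of $M$ parallel to the side of $D$ containing $x,y$; (b) they land on an orthogonal side; (c) the general case, handled by chopping the $\partial D$-path between $x$ and $y$ into at most five single-side segments and summing the bounds from (a) and (b). The factor $10\sqrt{2}$ is literally $\sqrt{2}\cdot 2\cdot 5\cdot(k/r)$ from these three reductions. Your route skips the pull-back to $\partial D$ entirely: you bound the subtended angle $\theta$ directly from $d_\infty(g(u),g(v))$ and the fact that both images sit outside the $L_\infty$-ball of radius $r/2$, and then convert $\theta$ to boundary arc length via $ds/d\phi=\rho^2/\rho_0$. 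This is cleaner and avoids the three-panel case split; it also explains why the final constant has so much slack (your computation gives roughly $2.5\sqrt{2}\,k/r$). The paper's approach, in exchange, uses nothing beyond similar triangles and is the one matching Figure~\ref{fig:square_embedding}.

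One point in your argument needs a sentence more. From ``$2\sqrt{2}\,d_\infty/r\le 1/2$'' you get only $\sin\theta\le 1/2$, which by itself does not rule out $\theta\in[5\pi/6,\pi]$, so the step ``then $\theta\le 2\sin\theta$'' is not yet justified. The fix is immediate: if $\theta>\pi/2$ then by the law of cosines $|g(u)-g(v)|_2^2\ge |c-g(u)|_2^2+|c-g(v)|_2^2\ge r^2/2$, whence $d_\infty(g(u),g(v))\ge r/2$, contradicting $2\sqrt{2}\,d_\infty/r\le 1/2$. With $\theta\le\pi/2$ in hand, $\theta\le 2\sin\theta$ holds and the rest of your calculation goes through.
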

\begin{proof} 
For any point $x$, let $\pi(x)$ denote the intersection of the boundary of $M$ and the ray from the center $c$ of $D$ passing through $x$.  Note that for any vertex $v$ in $G$, $f(v)$ is the anchor in $H$ nearest in clockwise direction to $\pi(g(v))$.  We show that for any $x, y \in M$, the distance between $\pi(x)$ and $\pi(y)$ along the boundary of $M$ is at most $(10\sqrt{2}k/r)d_\infty(x,y)$.  

We first argue that it is sufficient to establish the preceding claim for points on the boundary of $D$, at the loss of a factor of $\sqrt{2}$.  Let $x$ and $y$ be two arbitrary points in $M$ but not in the interior of $D$.  Let $x'$ (resp., $y'$) denote the intersection of $R(x)$ (resp., $R(y)$) and the boundary of $D$.  From elementary geometry, it follows that $d(x',y') \le d(x,y)$, where $d$ is the Euclidean distance; since $d_\infty(x,y) \ge d(x,y)/\sqrt{2}$ and $d_\infty(x',y') \le d(x',y')$, we obtain $d_\infty(x',y') \le \sqrt{2}d_\infty(x,y)$.  Since $\pi(x) = \pi(x')$ and $\pi(y) = \pi(y')$, establishing the above statement for $x'$ and $y'$ implies the same for $x$ and $y$, up to a factor of $\sqrt{2}$.

Consider points $x$ and $y$ on the boundary of $D$.  We consider three cases.  In the first two cases, $x$ and $y$ are on the same side of $D$. In the first case (Figure~\ref{fig:square_embedding_a}), $\pi(x)$ and $\pi(y)$ are on the same side of the boundary of $M$ and segment $\overline{\pi(x)\pi(y)}$ is parallel to segment $\overline{xy}$; then, by similarity of triangle formed by $c$, $x$, and $y$ and the one formed by $c$, $\pi(x)$ and $\pi(y)$, we obtain that the distance between $\pi(x)$ and $\pi(y)$ is at most $3kd_\infty(x,y)/(16r)$.  In the second case (Figure~\ref{fig:square_embedding_b}), $\pi(x)$ and $\pi(y)$ are on same side of the boundary of $M$, and segment $\overline{\pi(x)\pi(y)}$ is orthogonal to segment $\overline{xy}$. In this case, w.l.o.g. assume that $\pi(y)$ is closer to center $c$ than $\pi(x)$ with respect to $d_{\infty}$ distance. Let point $z$ be a point on segment $\overline{c\pi(x)}$ such that segments $\overline{xy}$ and $\overline{\pi(y)z}$ are parallel. From center $c$ extend a line parallel to segment $\overline{xy}$ until it hits the side of $M$ on which $\pi(x)$ and $\pi(y)$ are. Let $w$ be the intersection. Using elementary geometry and similarity argument, we have the following: 
\begin{eqnarray*}
\frac{|\overline{\pi(x)\pi(y)}|}{|\overline{z\pi(y)}|} = \frac{|\overline{\pi(x)w|}}{|\overline{cw}|}\leq \frac{k/4}{k/16} = 4 \;\;\;\; \mbox{ and } \frac{\overline{z\pi(y)}}{\overline{xy}} = \frac{\overline{\pi(y)w}}{r} \leq \frac{k}{4r}
\end{eqnarray*}
We thus obtain $\frac{|\overline{\pi(x)\pi(y)}|}{|\overline{xy}|} \leq k/r$.  For the third case (Figure~\ref{fig:square_embedding_c}), we observe that $d_\infty(x,y)$ is at least half the shortest path between $x$ and $y$ that lies within the boundary of $D$.  This latter shortest path consists of at most five segments, each residing completely on one side of the boundary of $D$.  We apply the argument of the first and second case to each of these segments to obtain that the distance between $\pi(x)$ and $\pi(y)$ is at most $10kd_\infty(x,y)/r$.

To complete the proof, we note that distance between anchor nearest (clockwise) to $\pi(x)$ and anchor nearest (clockwise) to $\pi(y)$ is at most one plus the distance between $\pi(x)$ and $\pi(y)$.  Therefore, the $d_H(f(u),f(v))$ is at most $1 + 10\sqrt{2}kd_\infty(g(u),g(v))/r$.
\junk{Let $A$ and $B$ denote their mapped points on the boundary of the disk, $C$ denote the center of ball $D$, $O$ denote the center of the disk, and  $\alpha$ denote the ratio $\frac{dist(A,B)}{dist(X,Y)}$. 
By $dist(A,B)$, we mean length of segment $AB$. 

We need to show that $\alpha = O(\frac{1}{r})$. 
Let $\theta$ be the angle $\widehat{AOB}$, and $\theta'$ be the angle $\widehat{XCY}$. Then the length of arc $AB$ is $\theta$, and we know that $dist(A,B) \leq \theta$. Also the length of arc $XY$ is $r\theta'$, and it is easy to verify that for any $0 < \theta' \leq \pi$, we have $\frac{2}{\pi} r \theta' \leq dist(X,Y)$. Extend segment $AC$ from $C$ and let $C'$ be the point that line $AC$ hits boundary of disk. It is easy to verify that $\widehat{AC'B} = \frac{\theta}{2}$, and $\widehat{ACB} > \widehat{AC'B}$. Thus, $\theta' \geq \frac{\theta}{2}$. Now, we have 
\[
\alpha = \frac{dist(A,B)}{dist(X,Y)} \leq \frac{\theta}{\frac{2}{\pi}r\theta'} \leq \frac{\pi}{r} 
\]
This completes the proof that any pair of points on the boundary of ball $D$ will get stretched by a factor $O(\frac{1}{r})$. }
\end{proof}
\begin{theorem}
\label{thm:gen-retraction}
Algorithm~\ref{alg:grid_embedding} computes a retraction of $G$ to the cycle $H$ with stretch at most the minimum of $k/2$ and $O(\sqrt{n})$ times the optimal stretch.
\end{theorem}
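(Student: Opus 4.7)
The plan is to establish the two bounds separately and take the minimum. The $k/2$ bound is essentially free: the diameter of a $k$-cycle is $\lfloor k/2\rfloor$, so the stretch of any retraction — and in particular the one output by Algorithm~\ref{alg:grid_embedding} — is at most $k/2$, while the optimal stretch is at least $1$ whenever some edge of $G$ has endpoints with distinct $f$-images (otherwise OPT $=0$ and the claim is vacuous). So the approximation ratio is at most $k/2$.

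The harder direction is the $O(\sqrt n)$ bound, which I will obtain by chaining Lemmas~\ref{lem:grid_embedding} and~\ref{disk_embedding_theorem}. The grid embedding $g$ from Lemma~\ref{lem:grid_embedding} satisfies $d_\infty(g(u),g(v)) \le \ell(G,H)\,d_G(u,v)$, so for any edge $(u,v) \in E(G)$ we get $d_\infty(g(u),g(v)) \le \ell(G,H)$. Applying Lemma~\ref{disk_embedding_theorem} with the empty square $D$ of side $r$ yields
\[
  d_H(f(u),f(v)) \;\le\; 1 + \frac{10\sqrt{2}\,k}{r}\,\ell(G,H).
\]
Since the distance-based lower bound gives $\mathrm{OPT} \ge \ell(G,H)$ (and $\mathrm{OPT}\ge 1$ in any non-trivial instance), it is enough to prove that the largest empty square found in the second step has side $r = \Omega(k/\sqrt n)$, making the multiplicative factor $O(\sqrt n)$.

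The main step, and the one that requires the only genuine argument, is this geometric pigeonhole claim. The region in which the algorithm permits the center of $D$ to lie — the set of points at $L_\infty$-distance at most $k/16$ from the center of $M$ — is itself an axis-aligned square $S$ of side $k/8$. I will partition $S$ into $\lceil \sqrt{n+1}\rceil^2$ equal axis-aligned sub-squares, each of side $(k/8)/\lceil \sqrt{n+1}\rceil = \Omega(k/\sqrt n)$. Since $G$ has only $n$ vertices, by pigeonhole at least one such sub-square contains no $g$-image of a vertex in its interior, and its center lies in $S$ and therefore satisfies the required centrality constraint. One small check is that any such $D$ with $r = \Omega(k/\sqrt n)$ still fits inside $M$, which follows from the fact that the center of $D$ is within $k/16$ of the center of $M$ and $r \le k/8$.

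Putting the pieces together, every edge of $G$ has stretch $1 + O(\sqrt n)\,\ell(G,H) = O(\sqrt n)\cdot \mathrm{OPT}$, and together with the $k/2$ bound this yields the claimed $\min\{k/2,O(\sqrt n)\}$ approximation. I expect the pigeonhole step to be the only place needing real care, since it is where the $k/16$ centrality restriction baked into the algorithm must be reconciled with the global bound $n$ on the number of vertices; the remaining steps are direct invocations of the preceding lemmas together with the trivial bound $d_G(u,v)=1$ for an edge.
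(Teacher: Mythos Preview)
Your proposal is correct and follows essentially the same approach as the paper: invoke Lemma~\ref{lem:grid_embedding}, use a pigeonhole/averaging argument to show the empty square has side $\Omega(k/\sqrt{n})$ within the $k/16$-centrality region, plug into Lemma~\ref{disk_embedding_theorem}, and bound by $k/2$ via the cycle's diameter. The paper compresses your pigeonhole step into the phrase ``a straightforward averaging argument'' and states the resulting side length as $k/(8\sqrt{n})$, but the structure and ingredients are identical.
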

\begin{proof}
By Lemma~\ref{lem:grid_embedding}, the embedding $g$ satisfies inequality~\ref{eqn:grid_embedding} for every $u$ and $v$ in $G$.  By a straightforward averaging argument, there exists a square of side length $k/(8\sqrt{n})$ whose center is at $L_\infty$ distance at most $k/16$ from the center of $M$ and which does not contain $g(u)$ for any $u$ in $V$.  By Lemma~\ref{disk_embedding_theorem}, the projection embedding ensures that for any $u$ and $v$ in $V$, $d_H(f(u),f(v))$ is at most $1 + O(\sqrt{n}) \ell(G,H) d_G(u,v)$.  Since the distance in $H$ cannot exceed $k/2$, the claim of the theorem follows.
\end{proof}

\junk{
We show that finding an embedding that leaves the largest empty square is equivalent to finding the minimum stretch retraction up to a factor of 2.

\begin{claim}
If there is a retraction of \(G\) to \(H\), then there is an embedding to the border of the grid.
Conversely, an embedding to the border of the grid implies a retraction with stretch at most 2.
\end{claim}

\begin{proof}
If there is a retraction, then mapping the points to the grid where they are mapped by the retraction is feasible.

In the other direction, suppose there is an embedding to the border of the grid.
The distance between any pair of points along the border of the grid is at most twice their distance in the \(L_\infty\) metric.
Therefore, this embedding is a stretch-2 retraction.
\end{proof}
}

\BfPara{The Sperner bottleneck}
Unfortunately, we cannot improve on the approximation ratio in Theorem~\ref{thm:gen-retraction} using only the distance-based lower bound. Consider the following instance: the guest graph $G$ is
the $\sqrt{n} \times \sqrt{n}$ grid, and the host $H$ is the cycle of $G$ formed by the $4\sqrt{n}$ vertices on the outer boundary of $G$. It is easy to see that the distance-based lower bound has a value of $2$ on this instance. On the other hand, using Sperner's Lemma from topology, we show that a stretch of $o(\sqrt{n})$ is ruled out:
\begin{lemma}
\label{lem:general.lower_bounds}
The optimal stretch achievable for an $n$-vertex grid is at least $2 \sqrt{n}/3$.
\end{lemma}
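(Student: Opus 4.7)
The plan is to deduce the bound from a Sperner-style parity argument on a triangulated copy of the grid, followed by an elementary cycle-geometry calculation. First, I triangulate $G$ by inserting one diagonal into every unit cell; this turns $G$ into a simplicial disk whose boundary is exactly the host cycle $H$ of length $k = 4\sqrt{n}-4$. Given any retraction $f$, I split the edges of $H$ into three arcs $A_{1}, A_{2}, A_{3}$, each containing at most $\lceil k/3 \rceil$ vertices, and I color every vertex $v$ of $G$ with the index $c(v)\in\{1,2,3\}$ such that $f(v) \in A_{c(v)}$. Since $f$ fixes boundary vertices, each $v\in H$ is colored according to which arc contains it, and exactly one boundary edge of the triangulation joins a color-$1$ and color-$2$ vertex (the cut edge between $A_{1}$ and $A_{2}$), with the analogous statement for the other two color pairs.

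Second, I apply the standard Sperner parity count. For every triangle, count the number of its edges whose endpoint-colors are $\{1,2\}$ and sum the counts across all triangles. Every interior edge appears in two triangles and cancels modulo $2$, so the parity of the total equals the number of $\{1,2\}$-edges on $H$, which is $1$. Since each rainbow $\{1,2,3\}$-triangle contributes exactly one $\{1,2\}$-edge while every non-rainbow triangle contributes $0$ or $2$, the number of rainbow triangles is odd; in particular, at least one rainbow triangle exists.

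Third, I convert a rainbow triangle $\{a,b,c\}$ into a lower bound on the stretch. A direct cycle-geometry calculation shows that any three points, one in each of three arcs of length at most $\lceil k/3\rceil$ on a length-$k$ cycle, must have some pair at cycle-distance at least $k/3$: if the three points split the cycle into sub-arcs of lengths $\alpha,\beta,\gamma$ with $\alpha+\beta+\gamma=k$ and each strictly less than $2k/3$, then the largest sub-arc lies in $[k/3, 2k/3)$ and equals its own cycle-distance. Exactly one of the three edges of the rainbow triangle is a cell-diagonal rather than an edge of $G$; the other two are true edges of $G$. If the $\ge k/3$ bound is attained on one of those grid edges, that edge already has stretch $\ge k/3$; otherwise the bound is attained on the diagonal, in which case the triangle inequality on $H$ forces one of the two flanking grid edges to have stretch at least $k/6$. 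Either way, some edge of $G$ has stretch at least $k/6 \ge 2\sqrt{n}/3 - O(1)$, matching the claimed bound up to lower-order terms.

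The main delicate point is the boundary setup for the Sperner count: the three cut edges partitioning $H$ must be placed so that each pair of consecutive arcs meets along exactly one boundary edge, which is what makes the parity count evaluate to $1$ rather than to some even number. Once that is arranged, everything else reduces to the standard Sperner bookkeeping and the one-line cycle triangle inequality.
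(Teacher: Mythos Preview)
Your approach is essentially the same as the paper's: triangulate the grid, color vertices by which of three equal arcs of $H$ their image lands in, invoke Sperner's Lemma to get a trichromatic triangle, and observe that two of its vertices are at $G$-distance at most $2$ but $H$-distance at least roughly $k/3$. The paper simply cites Sperner's Lemma and phrases the last step as ``two vertices within distance at most two in $G$ that are at least $4\sqrt{n}/3$ apart in $H$,'' whereas you spell out the parity count and the diagonal-versus-grid-edge case separately; the only substantive difference is that your final constant is $k/6 = 2\sqrt{n}/3 - O(1)$ rather than the exact $2\sqrt{n}/3$, which you already note.
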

\begin{proof}
Suppose we triangulate the grid by adding northwest-to-southeast
diagonals in each cell of the grid.  Consider the following coloring
of the boundary $H$ with 3 colors.  Divide $H$ into three
segments, each consisting of a contiguous sequence of at least $\lfloor
4\sqrt{n}/3\rfloor$ vertices; all vertices in the first, second, and
third segment are colored red, green, and blue, respectively.  Let $f$
be any retraction from $G$ to $H$.  Let $c_f$ denote the following
coloring for $G \setminus H$: the color of $u$ is the color of $f(u)$.
By Sperner's Lemma~\cite{Sperner1928}, there exists a tri-chromatic
triangle.  This implies that there are two vertices within distance at
most two in $G$ that are at least $4\sqrt{n}/3$ apart in the
retraction $f$, resulting in a stretch of at least $2 \sqrt{n}/3$.
\end{proof}
Note that $k = \Theta(\sqrt{n})$ in this instance, so the above lemma
also rules out an $o(k)$ approximation using the distance-based lower
bound.  A natural approach to improving the approximation factor is to
use an LP or SDP relaxation for the problem. Indeed, the so-called
{\em earthmover LP} used for the closely related $0$-extension
problem~\cite{karloff+kmr:0-extension,chekuri+knz:label} can be easily
adapted to our minimum stretch retraction problem. Similarly, SDP
relaxations previously used for minimum bandwidth and related
problems~\cite{BLUM200025,manchoso+ye:localize} can also be adapted to
our problem. However, these convex relaxations also have an
integrality gap of $\Omega(\sqrt{n})$ for precisely the same reason:
they capture the distance-based lower bound but not the one from
Sperner's lemma on the grid (see Appendix~\ref{app:general} for a
detailed discussion of these LP/SDP relaxations and integrality
gaps).

In spite of these gaps, we show that the grid is not a particularly
challenging instance of the problem. In fact, in the next section, we
give an exact algorithm for retraction in planar graphs, of which
the grid is an example.  Retraction of planar graphs to cycles has
been considered in the past, and non-constructive characterizations of
stretch-$1$ embeddings were known~\cite{QUILLIOT198561}. Our
constructive result, while using planarity extensively, suggests that
there might be a general technique for addressing the Sperner
bottleneck described above. Indeed, we give a candidate LP relaxation
in Appendix~\ref{app:general} that captures the Sperner bound on the
grid. Rounding this LP to obtain a better approximation ratio, or
showing an integrality gap for it, is an interesting open question.


\section{Retracting a planar graph to a cycle}
\label{sec:planar-graph-to-cycle}
The main result of this section is the following theorem.
\begin{theorem}
\label{thm:planar}
  Let $G$ be a planar graph and $H$ a cycle of $G$.
  Then there is a polynomial time algorithm that finds a retraction from $G$ to $H$ with optimal stretch.
\end{theorem}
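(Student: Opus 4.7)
The plan is to lift the cyclic retraction problem to the universal cover of $H$ via the planar structure of $G$, reducing stretch-$s$ feasibility to a shortest-path problem on an appropriately cut graph, and then to binary search on the stretch.

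First, by the Jordan curve theorem, $H$ separates the planar embedding of $G$ into two disk-like regions, and the retraction problem decomposes independently on each side; WLOG I assume $H$ bounds the outer face and all non-anchor vertices lie in the interior disk. Since this disk is simply connected, any stretch-$s$ retraction $f:V\to V(H)$ lifts (in principle) to an integer potential $\tilde{f}:V\to\mathbb{Z}$ satisfying $|\tilde{f}(u)-\tilde{f}(v)|\le s$ on every interior edge, where anchor $a_i$ is identified with integer $i$.

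However, the lift exhibits non-trivial monodromy around $H$: going once around $H$ in $H$ itself advances $\tilde{f}$ by $k$, so the lift values of a single anchor obtained via different routes can disagree by a multiple of $k$. To resolve this, I would cut $G$ along a simple path $P$ from one anchor to another through the interior, splitting the vertices on $P$ into two copies whose lift values differ by exactly $k$. In the resulting simply connected cut graph $G'$, the lift $\tilde{f}$ is single-valued, and the stretch-$s$ retraction problem becomes: find integer $\tilde{f}$ with $|\tilde{f}(u)-\tilde{f}(v)|\le s$ on every edge of $G'$, subject to the prescribed anchor values on the two sides of the cut.

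This is a standard system of difference constraints, whose feasibility can be checked in polynomial time via Bellman-Ford (infeasibility being certified by a negative cycle). The overall algorithm binary-searches on $s$ and, for each candidate, enumerates a polynomial family of canonical cuts $P$ in the planar graph---for instance, paths induced by a fixed dual spanning tree, or shortest paths between anchors in a suitable auxiliary metric. The optimal stretch is the smallest $s$ for which some cut admits a feasible lift, and the corresponding $\tilde{f}$ yields the retraction after reduction modulo $k$.

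The main obstacle, and what I expect to be the technical centerpiece, is identifying a polynomial-size family of cuts that provably suffices. A priori a planar graph has exponentially many anchor-to-anchor paths, so the key structural argument must leverage planarity---and presumably the Sperner-type combinatorial lower bound generalizing the grid's $\Omega(\sqrt{n})$ bound to arbitrary planar graphs---to show that restricting attention to a polynomial collection of canonical cuts preserves the optimum. Combined with the Bellman-Ford feasibility check, this then delivers the desired exact polynomial-time algorithm.
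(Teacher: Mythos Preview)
Your topological intuition---lifting to the universal cover $\mathbb{Z}$ of $H$ and trivializing the degree-$1$ monodromy by a cut---is on the right track, and is indeed what morally underlies the paper's argument. But the specific cut you propose does not work. A path $P$ from anchor $a_i$ to anchor $a_j$ cannot carry a uniform shift of $k$ across its two copies: the endpoints $a_i,a_j$ are fixed by the retraction and hence have prescribed lift values, so if $v$ is the first interior vertex of $P$ adjacent to $a_i$, both copies $v^{(1)},v^{(2)}$ must lie within $s$ of $\tilde f(a_i)$ while also differing from one another by $k$, which is impossible once $s<k/2$. More basically, slitting a disk along a chord does not change its topology; what you actually need is to \emph{puncture} the disk at an interior face $F$ and then cut from $H$ to $F$, turning the resulting annulus into a strip on which the lift becomes single-valued. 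The unknown to search over is therefore not a path but the face $F$ that carries the nonzero winding---and there are only polynomially many faces, which dissolves your ``main obstacle.''

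This is exactly what the paper does. After reducing to the stretch-$1$ question (replace each non-$H$ edge by a path of length $s$), the algorithm iterates over all interior faces $F$. The key structural lemma, proved by a Sperner-style signed-score argument on faces, is that a stretch-$1$ retraction exists if and only if some face $F$ has every surrounding cycle of length at least $k$. For such an $F$, Menger's theorem (applied in a carefully triangulated supergraph that preserves $G$-distances) produces $k$ vertex-disjoint paths from the boundary of $F$ to $H$; these ``valid curves'' partition the annulus $R_H\setminus F$ into $k$ strips, one per edge of $H$, and mapping each strip to the endpoints of its edge gives the retraction directly. Your difference-constraint picture is dual to this: feasibility of the $k$-shifted lift after puncturing at $F$ is precisely the statement that no cycle separating $F$ from $H$ is shorter than $k$. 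So the ``polynomial family of canonical cuts'' you were looking for is simply the set of faces, and the infeasibility certificate is a short separating cycle rather than a negative cycle in a constraint graph.
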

We begin by presenting some useful definitions and elementary claims in Section~\ref{sec:planar.prelim}.  We then present an overview of our algorithm in Section~\ref{sec:planar.overview}.  Finally, we present the algorithm and its analysis, leading to the proof of Theorem~\ref{thm:planar}.

\subsection{Preliminaries}
\label{sec:planar.prelim}
We begin with a simple lemma that reduces the problem of finding a minimum-stretch retraction to the problem of finding a stretch-1 retraction, in polynomial time.  Formally, suppose we have an algorithm \(\mathcal{A}\) that, given graphs \(G\) and \(H\) either finds a stretch-1 retraction from \(G\) to \(H\), or proves that no such retraction exists.  Then, we can use this algorithm to find the minimum stretch embedding of \(G\) into \(H\), using Lemma~\ref{lem:reduction-to-stretch-1} below, whose straightforward proof is deferred to Appendix~\ref{app:planar}.  Let $G_k$ be the graph where we replace each edge $e \in G, e \not\in H$ with a path of $k$ edges.  Clearly, $G_k$ can be computed in polynomial time.
\begin{lemma}
\label{lem:reduction-to-stretch-1}
\(G\) can be retracted to \(H\) with stretch \(k\) if and only if \(G_k\) can be retracted in \(H\) with stretch-1.
\end{lemma}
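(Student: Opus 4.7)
The statement is a clean if-and-only-if, so the plan is to handle the two directions separately and show how a stretch-$k$ retraction of $G$ and a stretch-$1$ retraction of the subdivided graph $G_k$ can be converted into one another by acting on (and only on) the internal vertices of each subdivided path.

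For the forward direction, suppose $f: V(G) \to V(H)$ is a stretch-$k$ retraction. I want to extend $f$ to a map $f': V(G_k) \to V(H)$. The vertices of $G$ remain fixed under $f'$, so the identity-on-$H$ constraint is automatically preserved. For every edge $e=(u,v) \in E(G)\setminus E(H)$, the construction of $G_k$ replaces $e$ by a path $u=w_0,w_1,\dots,w_k=v$. By the stretch-$k$ hypothesis, $d_H(f(u),f(v)) \le k$, so in the cycle $H$ there is a walk $f(u)=a_0,a_1,\dots,a_k=f(v)$ of exactly $k$ steps, where each step is either an edge of $H$ or a stationary step (repeat the same vertex). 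Set $f'(w_i) = a_i$. Every new edge $(w_i,w_{i+1})$ of $G_k$ then has stretch at most $1$. Edges of $G_k$ that were not subdivided (namely the edges of $H$) inherit stretch-$1$ directly from $f$. Hence $f'$ is a stretch-$1$ retraction of $G_k$ to $H$.

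For the reverse direction, suppose $f': V(G_k) \to V(H)$ is a stretch-$1$ retraction, and let $f$ be the restriction of $f'$ to $V(G)$. Since $V(H) \subseteq V(G) \subseteq V(G_k)$, the identity constraint on anchors is inherited. Consider any edge $(u,v) \in E(G)$. If $(u,v)\in E(H)$, then $f(u)=u$, $f(v)=v$, and $d_H(f(u),f(v))=1 \le k$. Otherwise the edge is subdivided in $G_k$ into a path of $k$ edges $u=w_0,w_1,\dots,w_k=v$; each of these edges has stretch at most $1$ under $f'$, which means consecutive images $f'(w_i),f'(w_{i+1})$ are either equal or adjacent in $H$. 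By a telescoping triangle-inequality argument in the cycle metric,
\[
d_H(f(u),f(v)) \;=\; d_H(f'(w_0),f'(w_k)) \;\le\; \sum_{i=0}^{k-1} d_H(f'(w_i),f'(w_{i+1})) \;\le\; k.
\]
Hence $f$ is a stretch-$k$ retraction of $G$ to $H$.

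There is no real obstacle here; the only point worth being careful about is the first direction, where $d_H(f(u),f(v))$ may be strictly less than $k$. One must explicitly allow the walk through $H$ to use "stay-put" steps (equivalently, edges whose two endpoints are mapped to the same vertex of $H$), which is precisely what the paper's definition of stretch-$1$ retraction permits. Once this observation is made, both directions are routine and the lemma follows, so this proof can reasonably be deferred to the appendix as the authors do.
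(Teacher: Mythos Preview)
Your proof is correct and follows essentially the same approach as the paper's: extend a stretch-$k$ retraction along each subdivided path using a length-$k$ walk in $H$ (the paper realizes this walk concretely as the shortest $f(u)$--$f(v)$ path followed by stalling at the endpoint, i.e.\ $f'(y_i)=x_{\min(i,j)}$, which is just a specific choice of your ``stay-put'' steps), and conversely restrict a stretch-$1$ retraction of $G_k$ to $V(G)$ and bound each edge by the triangle inequality along its subdivision. Your write-up is in fact slightly more explicit than the paper's in the reverse direction.
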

The following lemma, proved in Appendix~\ref{app:planar}, implies that degree-1 vertices can be eliminated.
\begin{lemma}
\label{lem:2-connected}
Without loss of generality, we can assume $G$ is 2-vertex connected.
\end{lemma}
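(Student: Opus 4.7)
The plan is to use the block-cut decomposition of $G$ to reduce to the 2-connected case. Recall that the blocks of $G$ are its maximal connected subgraphs with no internal cut vertex, that any two blocks share at most one vertex (necessarily a cut vertex of $G$), and that a single bridge together with its endpoints is a block. The block-cut tree of $G$ has blocks and cut vertices as nodes, and edges encoding containment. First I would observe that the host cycle $H$, being 2-vertex-connected whenever $k \ge 3$ (the case $k \le 2$ being trivial), is entirely contained in a single block $B^*$ of $G$; root the block-cut tree at $B^*$.

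Next, process the block-cut tree from the leaves upward. A leaf block $B$ in the current tree contains a single cut vertex $v_B$; moreover, no vertex in $B \setminus \{v_B\}$ is an anchor or has any edge leaving $B$. The key step is the following collapsing observation: for any retraction $f$, I can modify $f$ by resetting $f(u) \leftarrow f(v_B)$ for every $u \in B \setminus \{v_B\}$. This leaves $f(v_B)$ and $f|_A$ intact, makes every edge inside $B$ have stretch $0$, and affects no edge outside $B$, so the overall stretch does not increase. Having done this, I may delete $B \setminus \{v_B\}$ from $G$ without changing the optimal stretch, which shortens the block-cut tree. Iterating until only $B^*$ remains yields the desired 2-vertex-connected subgraph, and the removed vertices are re-inserted at the end via a single pass over the block-cut tree that sends each removed vertex to the image of its nearest remaining cut vertex.

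Degree-$1$ vertices are handled automatically as a special case: a leaf $u$ adjacent to $w$ forms a bridge-block in which $B \setminus \{w\} = \{u\}$, so the collapse sends $u$ to $f(w)$ with stretch $0$ on the incident edge. The main (and essentially only) subtlety is that a cut vertex $v_B$ need not itself be an anchor, so $f(v_B)$ is chosen by the algorithm rather than fixed; but the collapse argument does not rely on $f(v_B)$ being an anchor, only on its existence in $V(H)$. Planarity is preserved throughout since $B^*$ is a subgraph of $G$, so the remainder of the planar-graph algorithm is free to assume 2-vertex connectivity.
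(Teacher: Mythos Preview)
Your proposal is correct and follows essentially the same approach as the paper's proof: both observe that $H$ lies entirely on one side of any cut vertex, collapse the other side onto that cut vertex (you by resetting $f(u)\leftarrow f(v_B)$ directly, the paper by exhibiting a stretch-$1$ retraction of $G$ onto $G\setminus G_2$ and invoking a short transitivity lemma), and iterate until the graph is $2$-connected. Your block--cut-tree packaging is a bit more systematic than the paper's cut-vertex-by-cut-vertex iteration, but the underlying idea is identical.
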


Lemmas~\ref{lem:reduction-to-stretch-1} and~\ref{lem:2-connected} apply to general graphs.  In the rest of this subsection, we focus our attention on planar graphs.  We note that all the transformations in Lemmas~\ref{lem:reduction-to-stretch-1} and~\ref{lem:2-connected} preserve planarity of the graph.
In 2-connected planar graph, every face of a plane embedding is bordered by a simple cycle.
Finally, we can assume that there is a planar embedding of $G$ with $H$ bordering the outer face.
If this is not the case, $G \setminus H$ contains at least two connected components, which can each be retracted independently.

Next, we give some definitions related to planar graphs.
We call \(G\) \emph{triangulated} if it is maximally planar, i.e., adding any edge results in a graph that is not planar.
Equivalently, \(G\) is triangulated if every face of a plane embedding (including the outer face) of \(G\) has 3 edges.
We will make use of the Jordan curve theorem, which says that any closed loop partitions the plane into an inner and outer region (see e.g. \cite{armstrong2013basic}).
In particular, this implies that any curve crossing from the inner to the outer region intersects the loop.
For some cycle $C$ in $G$ and a plane embedding of $G$, we denote the subset of $\mathbb{R}^2$ surrounded by $C$ as $R_C$ (including the intersection with $C$ itself).
We say that $R \subset \mathbb{R}^2$ is \emph{inside} cycle \(C\) of $G$ for a plane embedding if \(R \subseteq R_C\).
If $R$ is inside $C$, we also say that $C$ {\em surrounds} $R$.
In a slight abuse of notation, we say $C$ surrounds subgraph $G'$ of $G$ for some fixed plane embedding, if $C$ surrounds the subset of $\mathbb{R}^2$ on which $G'$ is drawn in the plane embedding.

\subsection{Overview of our algorithm}
\label{sec:planar.overview}
Consider some plane embedding of graph $G$ such that $H$ is the
subgraph of $G$ bordering $G$'s outer face.  We give a polynomial-time
algorithm that finds a stretch-1 retraction from $G$ to $H$ or proves
that none exists.  Using Lemma~\ref{lem:reduction-to-stretch-1}, this
immediately yields an algorithm that finds a minimum stretch
retraction from $G$ to $H$.

Fix a planar embedding of $G$, let $H$ be defined as above, and let $F$ be a bounded face of $G$.  A key component of our algorithm is to find a suitable set of curves connecting $F$ to $H$.   Our aim is to find a set of $k = |V(H)|$ curves in $\mathbb{R}^2$ such that the following hold.
\begin{itemize}
\item Each curve begins at a distinct vertex of $F$ and ends at a distinct vertex of $H$.
\item The curves do not intersect each other.
\item A curve that intersects an edge of $G$ either contains the edge, or intersects the edge only at its vertices.
\item Each curve lies totally in $R_H \setminus F$.
\end{itemize}
We call curves with these properties \emph{valid} with respect to $F$.
We argue that the curves partition $R_H \setminus F$ (up to their boundaries being duplicated) into a set of regions.
Each of these regions is defined by the subset of $\mathbb{R}^2$ surrounded by the closed loop made up of two of the aforementioned curves, a single edge of $H$, and a path on the boundary of $F$.

\begin{figure}[t]
  \centering
  \begin{subfigure}[c]{.3\textwidth}
    \includegraphics[width=.8\textwidth,page=1]{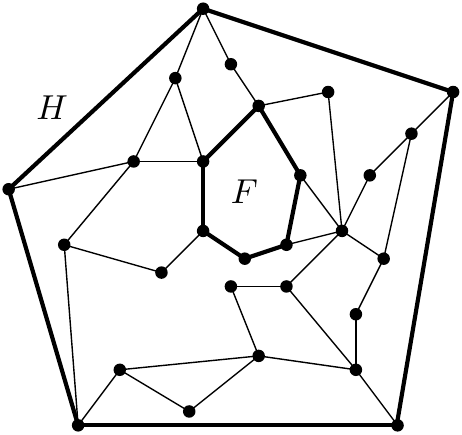}
    \caption{A graph $G$. The outer cycle $H$ and the face $F$ are shown in bold.}
  \end{subfigure}
  \begin{subfigure}[c]{.3\textwidth}
    \includegraphics[width=.8\textwidth,page=2]{curves-and-regions.pdf}
    \caption{Non-intersecting curves partition the region contained in $H$ but not in $F$.}
  \end{subfigure}
  \begin{subfigure}[c]{.3\textwidth}
    \includegraphics[width=.8\textwidth]{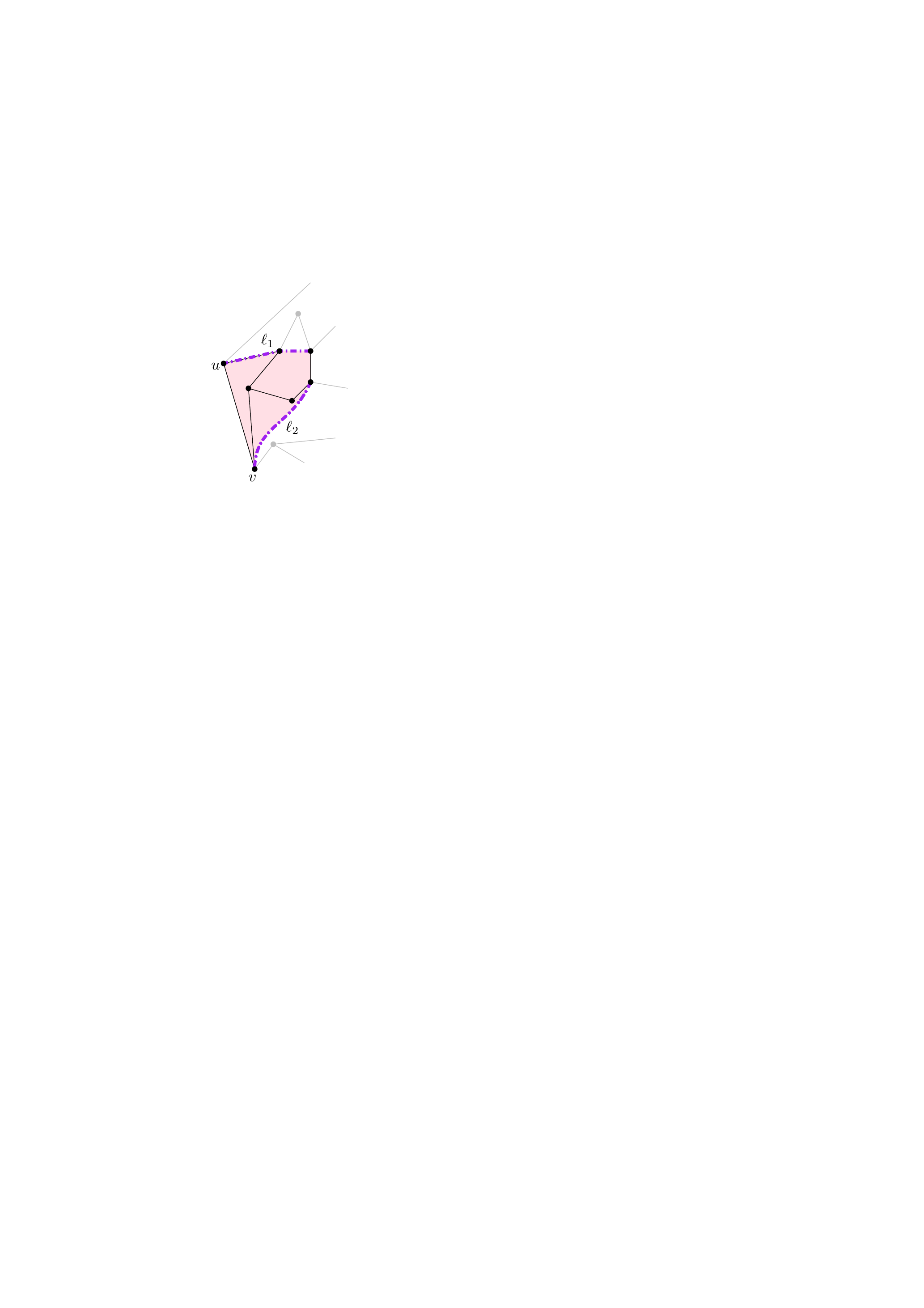}
    \caption{Vertices on $\ell_1$ are mapped to $u$, and vertices on $\ell_2$ are mapped to $v$. All other vertices in the region are mapped arbitrarily to $u$ or $v$.}
  \end{subfigure}
  \caption{Using non-intersecting curves to find an embedding from face $F$ to $H$.}
  \label{fig:proof-intuition}
\end{figure}

Given a face $F$ and a set of curves valid with respect to $F$, we can give a stretch-1 retraction from $G$ to $H$.  In essence, the curves partition the graph into regions such that all vertices in a particular region map to one of two end-points of a particular edge of $H$. 
See Figure~\ref{fig:proof-intuition} for an illustration.

Of course, it is not obvious that a valid set of curves exists for a given face, and, if it does, how to compute it.
We show that if the graph has a stretch-1 retraction, then there is some face $F$ with $k$ valid curves, and that we can efficiently compute them.
Our algorithm (Algorithm~\ref{alg:planar-retraction}) iterates over all faces in the graph, in each case finding the maximum number of valid curves it can with respect to that face. 
The number of valid curves we can find is the length of the shortest cycle surrounding $F$.
If the shortest cycle $C$ surrounding $F$ has length $\ell$, then it is impossible to find more than $\ell$ valid curves with respect to $F$:
By the Jordan curve theorem, each curve must intersect $C$, and by the definition, valid curves do not intersect each other and can intersect $C$ only at its vertices.
Our construction of the valid curves shows that this is tight (i.e. we can always find $\ell$ curves).
We show that if a stretch-1 retraction exists, then there is some face for which $\ell = k$.
Algorithm~\ref{alg:planar-retraction} gives an outline of the algorithm.

\begin{algorithm}
  \caption{Outline for finding a stretch-1 retraction, or proving that none exists.}
  \label{alg:planar-retraction}
  \begin{algorithmic}[1]
    \For{inner face $F$ in $G$}
      \State Compute maximum number of valid curves between $F$ and $H$ $p_1,\ldots,p_\ell$ \label{line:compute-curves}
      \If{$\ell = k$} 
        \State Compute stretch-1 retraction from $G$ to $H$ using $p_1,\ldots,p_k$ \label{line:compute-retraction-from-curves}
      \EndIf
    \EndFor
    \State If no retraction was computed, report no stretch-1 retraction exists
  \end{algorithmic}
\end{algorithm}
\subsection{Algorithm and analysis}
This section gives the details of various components of Algorithm~\ref{alg:planar-retraction}, and provides a proof of correctness.
The following is an outline of the rest of the section:
\begin{enumerate}
\item Lemma~\ref{lem:retraction-from-curves} shows how to compute a stretch-1 retraction using the $k$ valid curves in line~\ref{line:compute-retraction-from-curves} of Algorithm~\ref{alg:planar-retraction}.
\item Next, Lemma~\ref{lem:lower-bound} shows that if a stretch-1 retraction exists, there must be some face $F$ in the graph such that the smallest cycle surrounding $F$ has length $k$. 
\item Finally, Lemma~\ref{lem:construction-of-curves} gives a construction of largest set of valid curves for a given face $F$ from line~\ref{line:compute-curves}, and shows that the number of curves computed equals the length of the smallest cycle surrounding $F$.
\end{enumerate}

We begin by showing in Lemma~\ref{lem:paths-partition} a somewhat obvious fact:
A set of valid curves partition $R_H \setminus F$, and each region of the partition contains a single edge of $H$.
We then show in Lemma~\ref{lem:retraction-from-curves} that this partition can be used to produce a stretch-1 embedding.
See Figure~\ref{fig:proof-intuition} for pictorial presentation of these two lemmas.

\begin{lemma}
  \label{lem:paths-partition}
  Let $\{ p_1,\cdots,p_k \}$ be a set of curves that are valid with respect to $F$.
  Let $Z$ denote the set of faces of $H \cup F \cup \bigcup_ip_i$ excluding the outer face and $F$.   Then, each face $f \in Z$ is bordered by exactly 1 edge of $H$, and every vertex of $G \setminus \bigcup_i p_i$ is in a unique face of $Z$.
\end{lemma}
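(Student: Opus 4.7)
The plan is to exploit the annular topology of $R_H \setminus \mathrm{int}(F)$ together with the definition of valid curves. Since the planar embedding has $H$ bounding the outer face and $F$ is a bounded face of the $2$-connected graph $G$, the region $R_H \setminus \mathrm{int}(F)$ is homeomorphic to a closed annulus, with outer boundary $H$ and inner boundary $\partial F$. Each valid curve $p_i$ starts at a vertex of $F$, ends at a vertex of $H$, lies entirely in this annulus (meeting $H \cup \partial F$ only at its two endpoints), and the $p_i$ are pairwise disjoint.

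The first step is to show that the $p_i$ partition this annulus into $k$ closed regions. This will follow from a standard induction using the Jordan curve theorem: cutting the annulus along $p_1$ yields a topological disk, and each subsequent non-crossing curve with one endpoint on each of the two boundary arcs splits one disk into two. Hence $k$ valid curves yield exactly $k$ closed regions, each bounded by two cyclically consecutive curves $p_i, p_{i+1}$ (in the cyclic order induced on $H$), together with the arc of $H$ between their $H$-endpoints and the arc of $\partial F$ between their $F$-endpoints. These $k$ regions, together with $\mathrm{int}(F)$ and the unbounded outer face, exhaust the faces of $H \cup F \cup \bigcup_i p_i$, so $|Z| = k$.

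For the first conclusion of the lemma, note that by validity the $k$ endpoints of the $p_i$'s on $H$ are $k$ distinct vertices of $H$, and since $|V(H)| = k$ these are all of them. The $k$ resulting arcs of $H$, one per region, are therefore precisely the $k$ edges of $H$. For the second conclusion, let $v$ be a vertex of $G$ with $v \notin \bigcup_i p_i$. Because $H$ bounds the outer face of $G$, $v \in R_H$, and because $F$ is a face of $G$, $v \notin \mathrm{int}(F)$; thus $v$ lies in the closed annulus, which is covered by the closures of the $f \in Z$. Every vertex of $H$ is a curve endpoint and is therefore excluded by hypothesis; any vertex on $\partial F$ that is not a curve endpoint lies on exactly one arc of $\partial F$ between consecutive curve endpoints and hence on the boundary of exactly one $f \in Z$; and any remaining $v$ sits in the open annulus and so in the interior of a unique $f \in Z$. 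The main obstacle is formalizing the inductive partitioning step, since it rests on a careful application of the Jordan curve theorem to curves in an annulus; once that is in hand the rest is bookkeeping on vertex and edge incidences.
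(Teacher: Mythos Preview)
Your proof is correct and takes a somewhat different route from the paper's. The paper argues directly from the $H$ side: for each edge $(u,v)$ of $H$ it forms the closed curve $X = p_i \cup (u,v) \cup p_j \cup p_F(i,j)$, where $p_i,p_j$ are the curves terminating at $u,v$ and $p_F(i,j)$ is the appropriate arc of $\partial F$, and then verifies that $X$ bounds a face by showing (via the Jordan curve theorem) that no third curve $p_z$ can touch $p_F(i,j)$. You instead treat $R_H\setminus\mathrm{int}(F)$ as an annulus, cut it inductively along the $p_i$, count that this yields exactly $k$ regions, and then use the pigeonhole observation that the $k$ distinct $H$-endpoints exhaust $V(H)$ to conclude each $H$-arc is a single edge. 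Both arguments hinge on the Jordan curve theorem; yours foregrounds the annular topology and is arguably cleaner conceptually, while the paper's is more combinatorial and ties each face explicitly to an edge of $H$ from the start.

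One small caveat: your parenthetical assertion that each $p_i$ meets $H\cup\partial F$ only at its two endpoints is fully justified on the $H$ side (every vertex of $H$ is some curve's endpoint and the curves are pairwise disjoint, so a second contact would force an intersection), but on the $\partial F$ side the definition of valid curves does not preclude extra contacts, since $F$ may have more than $k$ vertices. This does not damage your argument---only the structure along $H$ matters for the ``exactly one edge of $H$'' conclusion, and extra contacts with $\partial F$ are simply absorbed into the region boundaries---but you should either drop that clause for $\partial F$ or note that it is only needed on $H$.
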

\begin{proof}
  Consider the faces of $H \cup F \cup \bigcup_ip_i$.
  $H$ and $F$ still define faces since the paths $p_i$ fall in $R_H \setminus F$.
  Let $(u,v)$ be an edge of $H$, and consider $X = p_i \cup (u,v) \cup p_j \cup p_F(i,j)$ where $p_i$ is the path containing $u$, $p_j$ is the path containing $v$, and $p_F(i,j)$ is the path on the boundary of $F$ between the vertices where $i$ and $j$ meet $F$ such that $F$ is not contained in $X$.
  If $p_i$ and $p_j$ are both degenerate (i.e., each is empty), then $(u,v) = p_F(i,j)$.
  Otherwise $X$ is a simple cycle.
  We claim that $X$ defines a face.
  In particular, we show that the path $p_F(i,j)$ contains no other vertex of path $p_z$ for all $z \not = i,j$.
  Suppose it does and let $w$ be that vertex.
  Let $w'$ be the vertex adjacent to $w$ on $p_z$.
  Then $w' \in R_H \setminus F$, and so $w' \in X$.
  The other end of path $p_z$, call it vertex $y$, is in $H$, but $y \not = u,v$.
  By the Jordan curve theorem, $p_z \setminus w$ must cross $X$.
  Since the graph is planar, $p_z \setminus w$ must contain a vertex of $F, H, p_i$, or $p_j$.
  Any of these outcomes leads to a contradiction.
\end{proof}

\begin{lemma}
\label{lem:retraction-from-curves}
Given a non-outer face $F$ and a set $\{p_1, p_2, \ldots, p_k\}$ of curves that are valid with respect to $F$, we can construct a stretch-1 retraction from $G$ to $H$ in polynomial time. 
\end{lemma}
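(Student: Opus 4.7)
My plan is to define the retraction $f$ directly from the partition of $R_H \setminus F$ provided by Lemma~\ref{lem:paths-partition}, and then check the stretch bound by a brief case analysis over the edges of $G$. Let $a_i \in V(H)$ denote the anchor at which curve $p_i$ ends; since the $k$ curves have distinct endpoints on $H$ and $|V(H)| = k$, the correspondence $i \mapsto a_i$ is a bijection onto $V(H)$. I would define $f$ so that every vertex lying on a curve $p_i$ (including its two endpoints) is mapped to $a_i$, and, for each region $R \in Z$ bordered by the edge $(a_i, a_j) \in E(H)$, the remaining vertices of $G$ inside $R$ are assigned arbitrarily to either $a_i$ or $a_j$. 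Lemma~\ref{lem:paths-partition} guarantees that the bordering edge is unique and that each vertex outside the curves lies in a unique region of $Z$, so $f$ is well defined and computable in polynomial time.

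Next I would verify two properties. First, $f$ is a retraction: each anchor $a \in V(H)$ is the endpoint of exactly one curve and therefore lies on it, so $f(a) = a$. Second, the stretch is at most $1$: fix an edge $e = (u,v)$ of $G$ and observe that by the validity condition no curve crosses $e$ except at vertices of $G$, so $e$ lies in the closure of a single face of the planar subdivision $H \cup F \cup \bigcup_i p_i$. Since $F$ has no edges of $G$ in its interior and $H$ bounds the outer face, either $e$ is contained in some curve $p_i$, in which case $f(u) = f(v) = a_i$ so $d_H(f(u),f(v))=0$, or $e$ lies in the closure of some region $R \in Z$, in which case both $f(u)$ and $f(v)$ belong to $\{a_i,a_j\}$ for the unique edge $(a_i,a_j) \in E(H)$ bordering $R$, giving $d_H(f(u),f(v)) \le 1$.

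The step I expect to require most care is the planar-topology argument that $e$ really does lie in the closure of a single region of the partition. I would justify this by applying the Jordan curve theorem to the simple cycles bounding the regions of $Z$ (as identified in the proof of Lemma~\ref{lem:paths-partition}) together with the validity condition: if the endpoints of $e$ lay in closures of two regions not sharing a common bordering curve, the edge drawn in the plane would have to meet some curve $p_\ell$ at a point that is not a vertex of $G$, contradicting validity. Everything else---extracting the faces of $Z$ from the plane embedding of $H \cup F \cup \bigcup_i p_i$, locating each vertex of $G$ in its region, and writing out $f$---is straightforward and runs in polynomial time in $|V(G)|$, which completes the proof of the lemma.
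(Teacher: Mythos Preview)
Your proposal is correct and follows essentially the same approach as the paper: define the retraction by sending every vertex on $p_i$ to the anchor $a_i$ and every other vertex to one endpoint of the unique $H$-edge bordering its region, then check stretch-$1$ by observing that each edge of $G$ lies in (the closure of) a single region of the partition from Lemma~\ref{lem:paths-partition}. The paper's proof is terser and does not spell out the Jordan-curve justification for why an edge cannot straddle two regions, whereas you do; otherwise the constructions and verifications coincide.
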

\begin{proof}
Let $Z$ be as defined in Lemma~\ref{lem:paths-partition}.  For each
vertex $w$ on $p_i$, map $w$ to the unique vertex $v \in H \cap p_i$.
Otherwise, map $w$ to $u$ or $v$, where $(u,v)$ is the unique edge of
$H$ contained in the same face of $Z$ as $w$.  Fix a face $f$ of
$Z$.  Let $(u,v)$ be the unique edge of $H$ contained in $f$. Any edge
$(x,y)$ contained in $f$ also has $x,y \in f$, and so $x$
and $y$ are each mapped to either $u$ or $v$.  Thus, this retraction
to $H$ has stretch 1.
\end{proof}

As mentioned earlier, we will show that our construction produces $\ell$ valid curves for face $F$, where $\ell$ is the minimum length cycle surrounding $F$.  So we must show that if a stretch-1 retraction exists, there is some $F$ such that every cycle surrounding $F$ has length at least $k$.

\begin{lemma}
\label{lem:lower-bound}
Fix a plane embedding of $G$ where $H$ defines the outer face of the embedding and suppose there is a stretch-1 retraction $G$ to $H$.
Then there exists a non-outer face $F$ such that every cycle surrounding $F$ has length at least $k$.
\end{lemma}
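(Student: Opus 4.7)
My plan is to analyze $f$ via the winding numbers it induces on cycles of $G$. Viewing $f$ as a continuous map from the 1-complex $G$ (embedded in the plane) to $H \cong S^1$, every cycle $\gamma$ in $G$ has a well-defined winding number $\deg(f|_\gamma) \in \mathbb{Z}$; since each edge of $G$ shifts the image by at most one step on $H$, we obtain the key inequality $|\gamma| \ge k \cdot |\deg(f|_\gamma)|$. For each face $F$, set $d(F) := \deg(f|_{\partial F})$. Additivity of winding under face composition gives $\deg(f|_C) = \sum_{F \in R_C} d(F)$ for any cycle $C$, and in particular $\sum_{F \text{ inner}} d(F) = \deg(f|_H) = 1$.

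Call a cycle $C$ \emph{active} if $\deg(f|_C) \ne 0$, and pick an active cycle $C^*$ whose interior $R_{C^*}$ is inclusion-minimal among interiors of active cycles; this exists because $H$ itself is active and the graph is finite. Fix any inner face $F^* \subseteq R_{C^*}$, which exists because in a 2-connected planar graph every simple cycle bounds at least one inner face. The claim, which combined with the key inequality immediately implies the lemma, is that every cycle $C$ surrounding $F^*$ is active, i.e., $\deg(f|_C) \ne 0$. Suppose instead that some cycle $C$ surrounds $F^*$ with $\deg(f|_C) = 0$ and $|C| < k$. The strategy is to combine $C$ with $C^*$ via a surgery to produce a new active cycle $\tilde C$ with $R_{\tilde C} \subsetneq R_{C^*}$, contradicting the minimality of $C^*$.

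In the principal subcase $R_C \subsetneq R_{C^*}$, Menger's theorem furnishes two edge-disjoint paths in $G$ from $C^*$ to $C$ inside the annulus $R_{C^*} \setminus R_C$; splicing along these paths partitions the annulus into two sub-regions whose bounding simple cycles $C_1, C_2$ satisfy $[C_1] + [C_2] = [C^*] - [C]$ in the integer 1-chain group, so $\deg(f|_{C_1}) + \deg(f|_{C_2}) = \deg(f|_{C^*}) \ne 0$, and hence at least one of $C_1, C_2$ is active with interior strictly inside $R_{C^*}$. Crossings between $C$ and $C^*$ are handled by sym-diff surgery at the shared vertices, producing simple cycles whose degrees sum to $\deg(f|_{C^*}) \ne 0$; planarity then locates an active simple cycle inside a strict subregion of $R_{C^*}$.

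The main obstacle is the remaining case $R_{C^*} \subsetneq R_C$, where $C$ encloses all of $R_{C^*}$ together with an annulus carrying $d$-sum $-\deg(f|_{C^*}) \ne 0$; here the direct splice yields cycles lying in the annulus, disjoint from $R_{C^*}$, and so does not immediately contradict minimality. To close this case I use that the annulus contains a singular face $F'$ with $d(F') \ne 0$; the boundary $\partial F'$ is itself active with $|\partial F'| \ge k$, and an iterated application of the case-(i) surgery, using $\partial F'$ as a new ``inner'' cycle inside a larger auxiliary active cycle that surrounds both $R_{C^*}$ and $F'$, reduces the situation to one bounding a proper subset of $R_{C^*}$ and completes the contradiction. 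The delicate point in this case is verifying that the iteration terminates with the desired strict-subregion containment, which is the technical heart of the proof.
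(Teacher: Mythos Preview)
Your winding-number framework is exactly the paper's ``score'' (up to normalization by $k$), and the key inequality $|\gamma|\ge k\,|\deg(f|_\gamma)|$ together with additivity over faces is precisely what the paper uses. So the setup is right. The gap is in how you \emph{select} the face $F^*$.

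You choose an inclusion-minimal active cycle $C^*$ and take any face $F^*\subseteq R_{C^*}$. But minimality forces $R_{C^*}$ to be a \emph{single} face: if $R_{C^*}$ contained two or more faces, each face boundary inside would be inactive by minimality, so every $d(F)$ inside would vanish and $\deg(f|_{C^*})=0$, a contradiction. Hence $C^*=\partial F^*$ and your cases (i) and (ii) are vacuous; only the case $R_{C^*}\subsetneq R_C$ ever arises, and that is exactly the case you leave unfinished. Worse, that case cannot be finished along your lines. Consider a configuration with three inner faces $F_1,F_2,F_3$ having $d$-values $1,-1,1$, where an inactive cycle $C_0$ encloses exactly $\{F_1,F_2\}$. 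Then $\partial F_1$ is a perfectly valid inclusion-minimal active cycle, yet the inactive $C_0$ surrounds $F^*=F_1$. Your ``iterated surgery'' is supposed to produce an active cycle whose interior is a \emph{strict} subset of $R_{C^*}=F_1$, but $F_1$ is a single face and has no proper nonempty subregion bounded by a cycle; there is nothing to descend to. The right face here is $F_3$, which your procedure never looks at.

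The paper avoids this by \emph{not} trying to locate $F^*$ via a minimal active cycle. Instead it proves the global statement: if every nonzero-score face were surrounded by some score-zero cycle, then the sum of all face scores would be $0$, contradicting $\sum_F d(F)=\deg(f|_H)=1$. Therefore some nonzero-score face $F^*$ is surrounded only by active cycles, and for that $F^*$ every surrounding cycle has length $\ge k$. If you want to salvage your minimality idea, minimize not over all active cycles but over active cycles $C$ with the additional property that every cycle whose interior contains $R_C$ is also active; the outer cycle $H$ witnesses that this set is nonempty, and the same ``single-face'' argument then gives the desired $F^*$ directly, with no case (iii) to patch.
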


\begin{proof}
We prove a related claim that implies the statement in the lemma.  Fix
some stretch-1 retraction of $G$ to $H$.  Then there exists a
non-outer face $F$ such that for every cycle $C$ in the set of cycles
surrounding $F$, and for each vertex \(v \in H\), there is some vertex
of $C$ mapped to \(v\).  This implies that each of these cycles has
length at least $k$, since the statement says that vertices of $C$ are
mapped to $k$ vertices of $H$.
  
The claim is very similar to Sperner's lemma, and the proof is similar
as well.  Let \(\phi: V(G) \rightarrow V(H)\) denote the retraction.
We associate a score with each cycle \(C\) of the graph: Order the
vertices of the cycle in clockwise order, denoted
\(v_1,v_2,\ldots,v_j,v_{j+1}=v_1\).  Consider the sequence
\(\phi(v_1),\ldots,\phi(v_j),\phi(v_{j+1})\).  Let the score of \(C\)
be 0 to start.  For each pair \(\phi(v_i),\phi(v_{i+1})\), we have:
either \(\phi(v_i) = \phi(v_{i+1})\), or \(\phi(v_i)\) and
\(\phi(v_{i+1})\) are adjacent in $H$.  If \(\phi(v_{i+1})\) is
clockwise of \(\phi(v_i)\) (i.e. if they are in the same order as on
\(C\)), add 1 to the score of \(C\).  If they are in counterclockwise
order, subtract 1.  If they are the same vertex, the score remains the
same.  If \(\phi(v_1),\ldots,\phi(v_j)\) does not contain every vertex
on the outer cycle, the score of \(C\) must be 0, since each edge
along the path \(\phi(v_1),\ldots,\phi(v_{j+1})\) is traversed exactly
the same number of times in each direction.  On the other hand, a
cycle with a non-zero score must have a score that is divisible by
$k$.

Next, we claim that the score of cycle $C$ is the same as the sum of the scores of the cycles defining the faces contained in $C$.
To see this, consider the total contribution to the scores of these cycles from any fixed edge.
If the edge is not in cycle $C$, it is a member of exactly 2 faces contained in $C$, and contributes either 0 to both faces, or \(-1\) to one and \(1\) to the other.
Edges in $C$ are each a member of just one face surrounded by $C$.
Therefore, the score of cycle $C$ is the same as the sum of scores of its surrounded faces.
Since the score of cycle $H$ is $k$, there must be some face $f$ that has non-zero score.

Finally, we show that there is some face with nonzero score such that
every cycle surrounding the face also has nonzero score.  Suppose this
is not the case.  Then, every face with a non-zero score is surrounded
by a cycle with score 0, which implies that the sum of all scores of
faces with non-zero scores is 0.  This is a contradiction, since it
implies that the sum of scores of all internal faces in the graph is
0.
\end{proof}

We complete the section by giving a construction of the largest set of valid curves with respect to some face $F$, and show that the number of curves equals the length of the shortest cycle surrounding $F$.  Our curves will be disjoint paths in a supergraph $G_\Delta(F)$ of $G$.  It is necessary to relate the maximum number of disjoint paths to the length of the shortest cycle surrounding $F$.
The following lemma, proved in Appendix~\ref{app:planar}, establishes this connection.
We believe this lemma should be known, but cannot find it in the relevant literature.

\begin{lemma}
\label{lem:cut-contains-cycle}
Let \(G\) be a triangulated graph.
The graph induced by any minimum $s$-$t$ vertex cut is the shortest simple cycle separating \(s\) and \(t\).
\end{lemma}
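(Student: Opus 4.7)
My plan is to prove two inequalities between the minimum $s$-$t$ vertex cut size and the length of the shortest cycle separating $s$ from $t$ in a fixed planar embedding of the triangulated graph $G$.

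\textit{Easy direction.} Any simple cycle $C$ that separates $s$ from $t$ has $V(C)$ as an $s$-$t$ vertex cut, since removing $V(C)$ places $s$ and $t$ in distinct regions of the plane and hence in different components of $G - V(C)$. Thus the shortest separating cycle has length at least the minimum vertex cut size.

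\textit{Hard direction.} Let $S$ be a minimum $s$-$t$ vertex cut with $|S| = \ell$, and let $A$ be the component of $G - S$ containing $s$ and $B$ the component containing $t$. In the embedding, the drawings of $A$ and $B$ occupy disjoint open regions. Consider the closed curve in the plane that bounds the region of $A$ on the side facing $B$: its vertex set lies in $S$, because $A$- and $B$-vertices lie strictly in their own open regions. The key technical step is that triangulation forces this boundary to be a simple cycle $C$ in $G[S]$: two $S$-vertices $u, v$ that are consecutive along the boundary both lie on a common face of $G$ immediately on the $A$-side, and since that face is a triangle with third vertex in $A$, the edge $uv$ must belong to $G$. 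Iterating around the boundary produces a cycle $C$ using only $S$-vertices and $G$-edges.

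It remains to show $V(C) = S$ and that $G[S] = C$ (no chords). If $V(C) \subsetneq S$, then $V(C)$ is itself an $s$-$t$ vertex cut of size less than $\ell$, contradicting minimality; hence $V(C) = S$ and $|C| = \ell$. If $G[S]$ contained a chord $xy$ of $C$, this chord would partition the interior of $C$ (which contains $s$) into two subregions; the one containing $s$ is bounded by a sub-arc of $C$ together with $xy$, yielding a separating cycle whose vertex set is a proper subset of $S$---again a smaller cut and a contradiction. Therefore $G[S]$ is exactly the simple cycle $C$, and by the easy direction it is a shortest cycle separating $s$ from $t$.

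The main obstacle is topologically formalizing the ``boundary of $A$ on the side of $B$'' step: a priori the region of $A$ may not be simply connected, with inner boundary circles enclosing other components of $G-S$. This is handled after the fact by the same minimality argument: any such additional boundary circle would be disjoint from the separating one and would itself be a union of $S$-vertices forming a cut that uses only a proper subset of $S$, contradicting $|S| = \ell$. Thus the ``nice'' topology follows from, rather than being assumed prior to, the minimality of $S$.
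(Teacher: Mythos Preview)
Your argument is correct and takes a genuinely different route from the paper. The paper works via planar duality: it invokes the standard correspondence between $s$--$t$ edge cuts in $G$ and separating cycles in the dual $G^*$, applies it to the edge set $E_Y$ of $G[Y]$ for a vertex cut $Y$, and then uses triangulation only at the end---two edges that are consecutive on the dual cycle $C^*$ lie on a common face of $G$, and since that face is a triangle the two primal edges must share a vertex, so the primal edge set $C$ is itself a cycle. You instead argue directly in the primal embedding: take the topological boundary of the region occupied by the $s$-component of $G-S$, observe that every boundary edge has both endpoints in $S$ because each lies on a triangle whose third vertex is in $A$, and then let minimality of $S$ do the cleanup (forcing $V(C)=S$, ruling out chords, and disposing of extra boundary components).

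What each buys: the duality proof is short and combinatorial once the edge-cut/dual-cycle fact is granted, and it sidesteps the topological subtleties you correctly flag in your last paragraph (non-simply-connected $A$-region, boundary walks that revisit a vertex). Your approach is more self-contained---no appeal to duality---and makes the role of triangulation very concrete (the ``third vertex in $A$'' picture), but you pay for it with the after-the-fact minimality arguments to tame the topology. Both uses of triangulation are really the same observation seen from the two sides of duality.
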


If $G$ was already triangulated, we could compute a set of vertex disjoint paths from $F$ to $H$ (note that a set of vertex disjoint paths yields a set of valid curves).
By Menger's theorem and Lemma~\ref{lem:cut-contains-cycle}, we would find $\ell$ paths, where $\ell$ is the shortest cycle surrounding $F$.
$G$ may not be triangulated, so instead we could first triangulate $G$ and then compute the paths.
However, the number of paths we find in this case is the length of the shortest cycle surrounding $F$ in the triangulation of $G$, which may be smaller than $\ell$.
We prevent this from happening by producing a triangulation that adds vertices as well as edges.

\begin{lemma}
\label{lem:construction-of-curves}
  Fix a planar embedding of $G$ with $H$ as the outer face, and let
  $F$ be other face.  Then we can compute $\ell$ valid curves in
  polynomial time, where $\ell$ is the length of the shortest cycle
  surrounding $F$.
\end{lemma}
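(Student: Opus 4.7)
The plan is to compute a maximum set of internally vertex-disjoint paths in a suitably constructed planar triangulated supergraph \(G_\Delta(F)\) of \(G\), and then reinterpret each path as a valid curve. By Menger's theorem, the maximum number of vertex-disjoint paths from the boundary of \(F\) to \(H\) equals the minimum vertex cut separating them, and by Lemma~\ref{lem:cut-contains-cycle} (applied once the graph is triangulated), this cut is the shortest simple cycle of \(G_\Delta(F)\) surrounding \(F\). The entire argument rests on choosing \(G_\Delta(F)\) so that this length equals \(\ell\), the length of the shortest cycle of \(G\) surrounding \(F\).

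I would build \(G_\Delta(F)\) as follows: for each face \(f\) of \(G\) other than \(F\) and \(H\) that is not already a triangle, insert a new vertex \(x_f\) in the geometric interior of \(f\) and connect \(x_f\) to every boundary vertex of \(f\), so that \(f\) is fan-triangulated. Then add a super-source \(s\) inside \(F\) joined to every boundary vertex of \(F\), and a super-sink \(t\) in the outer face joined to every vertex of \(H\); the augmented graph is now a triangulated planar graph in which a standard max-flow computation returns, in polynomial time, a minimum \(s\)-\(t\) vertex cut together with a maximum set of internally vertex-disjoint \(s\)-\(t\) paths. Deleting \(s\) and \(t\) yields internally vertex-disjoint paths from distinct boundary vertices of \(F\) to distinct vertices of \(H\). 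Interpreted as polygonal curves in \(\mathbb{R}^2\) drawn through the chosen interior positions of the \(x_f\)'s, these paths lie in \(R_H \setminus F\); being edge-disjoint in a planar embedding, they do not cross, and every edge they touch they either contain or meet only at a vertex, so all properties of valid curves are satisfied.

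The main obstacle is proving that the shortest cycle of \(G_\Delta(F)\) surrounding \(F\) has length exactly \(\ell\). The upper bound is immediate, since every cycle of \(G\) surrounding \(F\) is also a cycle of \(G_\Delta(F)\); the harder direction is the lower bound. For any cycle \(C\) of \(G_\Delta(F)\) surrounding \(F\) that uses some \(x_f\), I would replace the two-edge detour \((u_1, x_f, u_2)\) by the boundary path of \(f\) between \(u_1\) and \(u_2\) that keeps \(F\) surrounded; when \(f\) has small boundary (at most five edges) at least one such path has length at most two, so the substitution does not increase cycle length, and iterating converts \(C\) into a closed walk in \(G\) surrounding \(F\) of no greater length, from which a simple cycle in \(G\) surrounding \(F\) of length at most \(|C|\) can be extracted. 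For faces with larger boundary, where a detour through \(x_f\) could shortcut a longer boundary arc, I would enrich the construction by subdividing each boundary edge of \(f\) an appropriate number of times (scaled with the length of the boundary of \(f\)) before attaching \(x_f\); each subdivision vertex corresponds to a point on an existing edge of \(G\), so validity of the resulting curves is preserved, and the choice of subdivision guarantees that every pair of neighbors of \(x_f\) is linked on the subdivided boundary by a path of length at most two. This padding preserves planarity and keeps the overall graph polynomial in size, so the shortcut argument extends to all faces, giving \(\ell_{G_\Delta(F)} \ge \ell\) and hence exactly \(\ell\) valid curves.
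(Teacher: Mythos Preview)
Your overall strategy---triangulate, add $s$ and $t$, invoke Menger and Lemma~\ref{lem:cut-contains-cycle}---is the paper's, but your triangulation does not work and the patch does not repair it. Fan-triangulating a face $f$ via a single interior vertex $x_f$ makes $x_f$ a bottleneck: at most one vertex-disjoint $s$--$t$ path can pass through $f$. Take $G$ to be two concentric $k$-cycles, the outer one $H$ and the inner one bounding $F$, joined by exactly two edges $h_0 f_0$ and $h_1 f_1$; here $\ell = k$, yet after fan-triangulating the large annular face $f$ the set $\{h_0, h_1, x_f\}$ is an $s$--$t$ vertex cut of size $3$ (these three vertices induce a triangle separating $s$ from $t$), so you recover only three paths, not $k$. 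Your subdivision fix does not help: $\{h_0, h_1, x_f\}$ remains a size-$3$ cut no matter how finely the boundary of $f$ is subdivided, because every route through $f$ still funnels through the single vertex $x_f$. The asserted property that every pair of neighbors of $x_f$ is linked on the subdivided boundary by a path of length at most two is also impossible---opposite boundary vertices are far apart, and subdivision only makes them farther---and subdividing edges of $G$ simultaneously destroys the easy direction $\ell_{G_\Delta(F)} \le \ell$.

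The paper's construction addresses exactly this: it inserts \emph{many} new vertices inside each non-triangular face, via an iterated refinement (shrinking the face by one edge at a time and starring the rim) that keeps all pairwise distances among vertices of $G$ unchanged. Distance preservation is precisely what guarantees the shortest separating cycle in $G_\Delta(F)$ still has length $\ell$, and the many interior vertices allow a large face to carry many disjoint curves at once.
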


\begin{proof}
We build a triangulated graph $G_\Delta(F)$ from the planar embedding of $G$.
First, add vertices and edges to every face of $G$, excluding the outer face and $F$.
We do this such that (1) every face except $F$ and the outer face is a triangle, and (2) the distance between any $u,v \in G$ is preserved.
From each face with more than 3 edges, we create one new face that has one fewer edge.
One step of this iterative construction is shown in Figure~\ref{fig:face-triangulation}.

\begin{figure}[t]
  \centering
  \begin{subfigure}[c]{.3\textwidth}
    \centering
    \includegraphics[width=.6\textwidth,page=1]{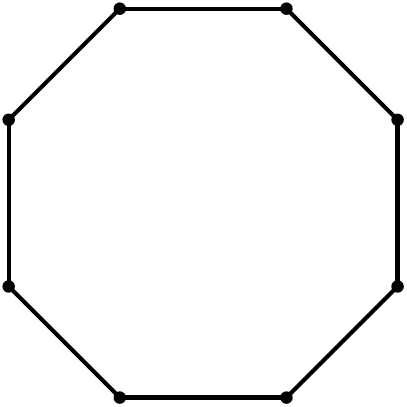}
    \caption{Some face $F'$ with y>3 edges.}
  \end{subfigure}
  \begin{subfigure}[c]{.3\textwidth}
    \centering
    \includegraphics[width=.6\textwidth,page=2]{face-triangulation.pdf}
    \caption{Add a new cycle $C$ with $y-1$ edges inside $F'$ along with connecting edges.}
  \end{subfigure}
  \begin{subfigure}[c]{.3\textwidth}
    \centering
    \includegraphics[width=.6\textwidth,page=3]{face-triangulation.pdf}
    \caption{Add stars in the newly created faces, except the one formed by $C$. Distances between vertices of the original face are preserved.}
  \end{subfigure}
  \caption{Iteratively triangulate faces.}
  \label{fig:face-triangulation}
\end{figure}

Note that distances are preserved inductively, and we make progress by
reducing the size of some face.  The graph we produce has 3 edges
bordering each face, except for the outer face and $F$.  In all, the
number of vertices and edges added to each face of $G$ is polynomial
in the number of edges bordering the face.

Finally, we add vertices $s$ and $t$, and edges from $s$ to each vertex of $F$ and from $t$ to each vertex of $C$.
The resulting graph is triangulated, and we call this graph $G_\Delta(F)$.

At this point, we can find the maximum set of vertex disjoint paths between $s$ and $t$ in $G_\Delta(F)$, by setting vertex capacities to 1 and computing a max flow between $s$ and $t$.
Because we have preserved distances between vertices of $G$ in our construction of $G_\Delta(F)$, the length of the minimum cycle surrounding $F$ must be $\ell$.
Therefore, the number of disjoint paths we find must also be $\ell$.  Finally, we claim that this set of disjoint paths from $F$ to $H$ in $G_\Delta(F)$ is a set of valid curves for $G$.
This is because $G$ is a subgraph of $G_\Delta(F)$, and therefore the criteria for valid curves are still met after removing the vertices and edges of $G_\Delta(F) \setminus G$.
\end{proof}

We conclude by tying together the pieces of the section to show we proved Theorem~\ref{thm:planar}.

\begin{proof}[Proof of Theorem~\ref{thm:planar}]
  Fix a face $F$.
  By Lemma~\ref{lem:cut-contains-cycle}, we determine the set of $\ell$ disjoint paths from $F$ to $H$ where the surrounding minimum cycle is of length $\ell$.
  By Lemma~\ref{lem:lower-bound}, there is a stretch-1 retraction only if there exists a face $F$ whose surrounding min-cycle is of length $k$.
  So if there is no stretch-1 retraction, we find $< k$ disjoint paths for all faces, and our algorithm returns ``no''.
  Otherwise, there exists a face $F$ for which the surrounding min-cycle is of length $k$, and this gives a set of $k$ valid paths.
  Then, by Lemma~\ref{lem:retraction-from-curves}, the retraction that we construct has stretch 1.
\end{proof}


\section{Open problems}
\label{sec:open}
Our work leaves several interesting directions for further research.  First, we would like to determine improved upper and/or lower bounds on the best approximation factor achievable for retracting a general graph to a cycle.  Second, we would like to explore extending our approach for planar graphs (Section~\ref{sec:planar-graph-to-cycle}) and Euclidean metrics (Appendix~\ref{sec:2d_euclidean}) to more general graphs and high-dimensional metrics.  Another open problem is that of finding approximation algorithms for retracting a general guest graph to an arbitrary host graph over a subset of anchor vertices, for which we have presented a hardness result in Appendix~\ref{app:hardness}.    
\newpage

\bibliographystyle{plainurl}
\bibliography{anchored_embedding}

\section*{Appendix}

\appendix

\newcommand{\ora}[1]{\overrightarrow{#1}}
\section{Proofs for Section~\ref{sec:general}}
\label{app:general}

\subsection{Approximation algorithm for general graphs}

\subsection{The Sperner bottleneck}
In this section, we elaborate on the Sperner bottleneck and establish
that natural LP and SDP relaxations for minimum-stretch retraction to
cycles suffer from an $\Omega(\sqrt{n})$ integrality gap on this
instance.  Recall the definition of the Sperner bottleneck instance:
the guest graph $G$ is the $\sqrt{n} \times \sqrt{n}$ grid, and the
host $H$ is the cycle of $G$ formed by the $4\sqrt{n}$ vertices on the
outer boundary of $G$.  In Lemma~\ref{lem:general.lower_bounds}, we
used Sperner's Lemma to show that the optimal stretch achievable for
this instance is $\Omega(\sqrt{n})$.

We now consider natural linear and semi-definite programming
relaxations for the retraction problem, and show that each incurs an
integrality gap of $\Omega(\sqrt{n})$ for the Sperner bottleneck.

\BfPara{Integrality gap of LP relaxation}
A natural LP relaxation for graph retraction is the {\em earthmover
LP}, which has the same constraints as the corresponding LP
extensively studied for the $0$-extension problem, but with a
different objective of taking the maximum stretch, as opposed to the
sum or the average of the
stretches~\cite{karloff+kmr:0-extension,chekuri+knz:label}.  For any
$u \in V$, we have a vector variable $x^u$, which is a probability
distribution over the set of anchors (in $H$).  The stretch of an edge
$(u,v)$ is given by the earthmover distance between $x^u$ and $x^v$,
which can be computed as the minimum cost incurred in sending a unit
flow from $u$ to $v$ in the metric $d_H$.  Here is the earthmover LP
relaxation for the minimum-stretch retraction from $G$ to $H$.

\begin{eqnarray*}
\min & & s  \\
x^u_j - x^v_j + \sum_{i \in H} ((f^{uv})_{ij} - (f^{uv})_{ji}) & = & 0 \;\;\;\;\; \forall (u,v) \in E(G), \forall j \in H\\
\sum_{j \in H} x^u_j & = & 1  \;\;\;\;\; \forall u \in V\\
\sum_{i \in H}\sum_{j \in H}d_H(i,j)(f^{uv})_{ij} & \le & s \;\;\;\;\; \forall (u,v) \in E\\
x, f & \ge & 0
\end{eqnarray*}

We now show that the above LP has objective function value $O(1)$ for
the given instance.  We partition the vertices of the grid into
$\sqrt{n/2}$ groups: for $1 \le i \le \sqrt{n}/2$, the $i$ group $G_i$
denotes the set of $4i$ vertices in the $i$th square at $L_\infty$
distance $1/2 + i-1$ from the geometric center of the grid.  For any
$u$ in $G_i$, we set $x^u$ to be distributed evenly across a segment
in the boundary of length $4(\sqrt{n} - 2i) + 1$.  Thus, for instance,
every vertex in $G_1$ is mapped to a segment of length $4\sqrt{n} -
7$, and every vertex in the boundary (which is $G_{\sqrt{n}/2}$) is
mapped to a segment of length 1 (i.e., to itself).  The segments that
the vertices in a specific group $G_i$ are mapped to are spaced out
evenly across the boundary.

We establish that the earthmover distance between $x^u$ and $x^v$ for
any neighbors $u$ and $v$ of the grid is $O(1)$.  First, suppose $u$
and $v$ are in the same group.  In this case, the minimum-cost flow
that defines the earthmover distance involves sending a flow from
$O(1)$ vertices in one segment across the length of the segment to the
same number of vertices in the other segment; since the distribution
of $x^u$ and $x^v$ across their respective segments is even, this
yields a cost of $O(1)$.  A similar argument holds for edges $(u,v)$
where $u$ and $v$ are in adjacent groups.

\BfPara{Integrality gap of SDP relaxation}
We now introduce an SDP relaxation, partly inspired by similar
relaxations for bandwidth and related
problems~\cite{BLUM200025,manchoso+ye:localize}.  Number the vertices
of $G$ from $1$ to $n$ so that the vertices of $H$ are numbered $1$
through $k$.  Let $v_i$ denote an $n$-dimensional vector representing
vertex $i$. For $1 \le i, j \le k$, let $d(i,j) = \min\{|i-j|, k -
|i-j|\}$.  Our SDP places the $n$ points on a sphere of radius $k$
(second constraint) such that the stretch of every edge is bounded
(third constraint) and the $k$ vertices of $H$ all reside on a cycle
(fourth constraint).  The vertices of $H$ are forced to lie on a cycle
using distance constraints, owing to the following elementary geometry
claim that captures the rigidity of the cycle.  It is used to show
that if we fix the location of 3 points in a cycle on the plane, then
every point on the cycle is uniquely specified by the distances to
these 3 points.  This ensures that the SDP is a valid relaxation of
the problem of retracting to a cycle.
\begin{lemma} \label{k_points_lemma}
Suppose we are given reals $d_{i,j}$ for all $1 \le i \le 3, 1 \le j \le k$ such that there exist points $p_i$ in $R^2$, $1 \le i \le k$ satisfying the property that $d_{i,j}$ is the distance between $p_i$ and $p_j$, $1 \le i \le 3, 1 \le j \le k$.  Then, if $p_1$, $p_2$, and $p_3$ are not collinear, any sequence $p'_i$, $4 \leq i \leq k$ of points in $R^n$, for any $n \geq 2$, which satisfy the same distance properties
must all lie on the same plane as $p_1$, $p_2$, and $p_3$.  And this
point set is congruent to $p_i$'s.
\end{lemma}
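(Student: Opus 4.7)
The plan is to orthogonally decompose each alternative point $p'_j$ (for $4 \le j \le k$) with respect to the $2$-plane $\Pi$ determined by the fixed triple $p_1,p_2,p_3$, and to argue that the three prescribed distances force both components to match those of the original $p_j$.

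First, since $p_1,p_2,p_3$ are non-collinear they span a unique $2$-dimensional affine subspace $\Pi$ of $\mathbb{R}^n$ (identifying the ambient $\mathbb{R}^2$ with its canonical image in $\mathbb{R}^n$). For each $j \ge 4$ I would write $p'_j = q_j^\parallel + q_j^\perp$, where $q_j^\parallel \in \Pi$ is the orthogonal projection of $p'_j$ onto $\Pi$ and $q_j^\perp$ lies in the orthogonal complement of $\Pi - p_1$. By the Pythagorean theorem, for $i = 1,2,3$,
\[
\|p'_j - p_i\|^2 \;=\; \|q_j^\parallel - p_i\|^2 + \|q_j^\perp\|^2 \;=\; d_{i,j}^2.
\]

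Next I would eliminate the common $\|q_j^\perp\|^2$ term by subtracting the $i=1$ equation from the $i=2$ and $i=3$ equations. After expanding $\|q_j^\parallel - p_i\|^2$ in the usual way, each difference becomes a \emph{linear} constraint on $q_j^\parallel$ of the form $2\langle q_j^\parallel,\; p_1 - p_i\rangle = c_i$ for an explicit constant $c_i$ depending only on $p_1, p_i$ and $d_{1,j}, d_{i,j}$. The two resulting equations (from $i=2$ and $i=3$) are linearly independent precisely because $p_2 - p_1$ and $p_3 - p_1$ are, which is exactly the non-collinearity hypothesis, so together they pin down a unique point of $\Pi$. Since the original $p_j \in \Pi$ trivially satisfies the full unreduced system, that unique point must be $p_j$ itself, giving $q_j^\parallel = p_j$.

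Finally, substituting $q_j^\parallel = p_j$ back into the Pythagorean identity yields $\|p_j - p_i\|^2 + \|q_j^\perp\|^2 = d_{i,j}^2$ and hence $\|q_j^\perp\| = 0$. Thus $p'_j = p_j \in \Pi$. This establishes simultaneously that every alternative point lies in the plane of $p_1,p_2,p_3$ and that the configurations $\{p_1,\ldots,p_k\}$ and $\{p_1,p_2,p_3,p'_4,\ldots,p'_k\}$ are equal --- and a fortiori congruent --- once the embedded $\mathbb{R}^2$ is identified with $\Pi$. I do not anticipate a substantive obstacle: the one step that warrants care is the linear-independence claim used to conclude $q_j^\parallel = p_j$, but this is immediate from non-collinearity; the remaining manipulations are routine expansions of squared Euclidean distances, and the extra ambient dimensions of $\mathbb{R}^n$ beyond $2$ contribute nothing more than the perpendicular component $q_j^\perp$ that the argument kills.
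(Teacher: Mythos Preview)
Your argument is correct and is essentially the same as the paper's: both reduce to a single point $p'_j$, write three squared-distance equations, subtract pairwise to obtain two linear constraints that determine the in-plane component uniquely via non-collinearity of $p_1,p_2,p_3$, and then read off that the out-of-plane component vanishes. The only difference is packaging---you phrase it via orthogonal projection onto $\Pi$ while the paper fixes explicit coordinates with $p_1,p_2,p_3$ in the first two slots---but the computations are identical.
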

\begin{proof} 
Since the distance of each point is defined relative to only three of the points $p_1, p_2, p_3$, we can prove the uniqueness of existence of each $p_i$ for all $4 \le i \le k$ with respect to $p_1, p_2, p_3$ independently. Thus, we only need to prove the following statement: Suppose we have three non-collinear points $A, B, C$ with $n$-dimensional coordinates which lie on a 2-D plane $P$ and there is a point $X$ also in $P$ such that its distance from $A, B, C$ is $d_1, d_2, d_3$. Then any point $Y$ in $n$-dimensional space with distance $d_1, d_2, d_3$ from $A, B, C$ is congruent to $X$.

W.l.o.g, we assume that $A = (a_1, a_2, 0, ..., 0)$, $B = (b_1, b_2, 0, ..., 0)$, $C = (c_1, c_2, 0, ..., 0)$, $X = (x_1, x_2, 0, ..., 0)$ and $Y = (y_1, y_2, y_3, ..., y_n)$. The following equations hold: 
\begin{eqnarray} \label{unique_point_equ}
d_1^2 = (a_1 - x_1)^2 + (a_2 - x_2)^2 = (a_1 - y_1)^2 + (a_2 - y_2)^2 + y_3^2 + ... + y_n^2 \\
d_2^2 = (b_1 - x_1)^2 + (b_2 - x_2)^2 = (b_1 - y_1)^2 + (b_2 - y_2)^2 + y_3^2 + ... + y_n^2 \nonumber \\
d_3^2 = (c_1 - x_1)^2 + (c_2 - x_2)^2  = (c_1 - y_1)^2 + (c_2 - y_2)^2 + y_3^2 + ... + y_n^2 \nonumber
\end{eqnarray}
Then we have: 
\begin{eqnarray*}
x_1(b_1- a_1) + x_2 (b_2 - a_2) & = & y_1(b_1 - a_1) + y_2 (b_2 - a_2)\\
x_1 (c_1 - a_1) + x_2 (c_2 - a_2) & = & y_1 (c_1 - a_1) + y_2 (c_2 - a_2)
\end{eqnarray*}
Since we know coordinates of $A, B, C$, $k_1 =(b_1-a_1), k_2 = (b_2 - a_2), k_3 = (c_1 - a_1), k_4 = (c_2 - a_2)$ are fixed. We thus obtain two lines. 
\begin{eqnarray*}
l_1: (x_1 - y_1) k_1 + (x_2 - y_2) k_2 = 0\\
l_2: (x_1 - y_1) k_3 + (x_2 - y_2) k_4 = 0
\end{eqnarray*}
Note that $l_1$ and $l_2$ are two lines that have at least one common point $(y_1 = x_1, y_2 = x)$. These equations have infinite solutions if the slopes of $l_1$ and $l_2$ are also the same; this would mean that $\frac{-k_1}{k_2} = \frac{-k_3}{k_4}$, but then points $A, B, C$ would be collinear, contradicting  our assumption in the lemma.  This implies that for any point $Y (y_1, y_2, ... , y_n)$ with distance $d_1, d_2, d_3$ from $A, B, C$, $y_1 = x_1$ and $y_2 = x_2$. Using equation \ref{unique_point_equ}, we obtain that $y_3 =  ... y_n = 0$, completeing the proof of the lemma.   
\end{proof}

Here is an SDP LP relaxation for minimum-stretch retraction from $G$
to $H$.
\begin{eqnarray*}
\min & & s \\
v_i \cdot v_j & \ge & 0 \;\;\;\;\; \forall i,j \in \{1, 2, \ldots, n\}\\
|v_i|  & = & k  \;\;\;\;\; \forall i \in \{1, 2, \ldots, n\}\\
|v_i - v_j| & \le & s  \;\;\;\;\; \forall (i,j) \in E(G)\\
|v_i - v_j| & = & 2 k \sin(\pi d(i,j)/k) \;\;\;\;\; \forall i \in \{1,2,3\},j \in \{1, 2, \ldots, k\}
\end{eqnarray*} 
We now establish an $\Omega(\sqrt{n})$ integrality gap for the above
SDP.  The value of the SDP is $O(1)$ since the grid can be embedded
with constant distortion on the surface of a 3-dimensional hemisphere,
with the boundary forming the great circle at the base of the
hemisphere.  The locations of the vertices in the 3-dimensional
hemisphere yield the vectors that form a valid solution to the SDP,
with stretch $s$ being the maximum distortion of the embedding, which
is $O(1)$.

\junk{
\begin{proof}[Proof of Lemma~\ref{lem:general.lower_bounds}]
We first show that each of the three lower bounds above are $O(1)$.
The distance-based lower bound is easy to verify.  The value of the
SDP is $O(1)$ since the grid can be embedded with constant distortion
on the surface of a 3-dimensional hemisphere, with the boundary
forming the great circle at the base of the hemisphere.  The locations
of the vertices in the 3-dimensional hemisphere yield the vectors that
form a valid solution to the SDP, with stretch $s$ being the maximum
distortion of the embedding, which is $O(1)$.

We now show that the earthmover LP has objective function value $O(1)$
for the given instance.  We partition the vertices of the grid into
$\sqrt{n/2}$ groups: for $1 \le i \le \sqrt{n}/2$, the $i$ group $G_i$
denotes the set of $4i$ vertices in the $i$th square at $L_\infty$
distance $1/2 + i-1$ from the geometric center of the grid.  For any
$u$ in $G_i$, we set $x^u$ to be distributed evenly across a segment
in the boundary of length $4(\sqrt{n} - 2i) + 1$.  Thus, for instance,
every vertex in $G_1$ is mapped to a segment of length $4\sqrt{n} -
7$, and every vertex in the boundary (which is $G_{\sqrt{n}/2}$) is
mapped to a segment of length 1 (i.e., to itself).  The segments that
the vertices in a specific group $G_i$ are mapped to are spaced out
evenly across the boundary.

We establish that the earthmover distance between $x^u$ and $x^v$ for
any neighbors $u$ and $v$ of the grid is $O(1)$.  First, suppose $u$
and $v$ are in the same group.  In this case, the minimum-cost flow
that defines the earthmover distance involves sending a flow from
$O(1)$ vertices in one segment across the length of the segment to the
same number of vertices in the other segment; since the distribution
of $x^u$ and $x^v$ across their respective segments is even, this
yields a cost of $O(1)$.  A similar argument holds for edges $(u,v)$
where $u$ and $v$ are in adjacent groups.

We next show that optimal stretch achievable is $\Omega(\sqrt{n})$ by
an invocation of Sperner's Lemma.  Suppose we triangulate the grid by
adding northwest-to-southeast diagonals in each cell of the grid.
Consider the following coloring of the boundary with 3 colors.  Divide
the boundary $H$ into three segment, each consisting of a contiguous
sequence of at least $\lfloor 4\sqrt{n}/3\rfloor$ vertices; all
vertices in the first, second, and third segment are colored red,
green, and blue, respectively.  Let $f$ be any retraction from $G$ to
$H$.  Let $c_f$ denote the following coloring for $G \setminus H$: the
color of $u$ is the color of $f(u)$.  By Sperner's
Lemma~\cite{Sperner1928}, there exists a tri-chromatic triangle.  This
implies, there two vertices within distance at most two in $G$ that
are at least $4\sqrt{n}/3$ apart in the retraction $f$, resulting in a
stretch of at least $2 \sqrt{n}/3$.
\end{proof}
}

\BfPara{Lower bound on approximation ratio of Algorithm~\ref{alg:grid_embedding}}
We now show that for a variant of the grid instance,
Algorithm~\ref{alg:grid_embedding} incurs stretch $\Omega(\sqrt{n})$
away from that of the optimal.  Let $G$ denote the graph obtained
after removing all interior column edges from a
$\sqrt{n} \times \sqrt{n}$ grid.  We observe that there exists a
retraction of $G$ to the cycle in the boundary with stretch 2.  Each
row of vertices of length $\sqrt{n}$ can be mapped to the shorter of
the two segments of the boundary connecting the end vertices of the
row; this ensures no edge is stretched by more than a factor of 2.  On
the other hand, Algorithm~\ref{alg:grid_embedding} will find the
largest empty square to be square of length 1, and any projection
embedding from the center of such a square will map two neighbors in
the square $\Omega(\sqrt{n})$ distance away from one another.

\BfPara{An alternative approach to overcoming the Sperner bottleneck}
We now present a different linear programming based lower bound for
minimum-stretch retraction that incorporates the topological aspects
of retraction as captured by Sperner's Lemma.  The key idea behind the
lower bound is that if there is a stretch-$s$ retraction $f$ from $G$
to a cycle $H$ of length $k$, then no cycle in $G$ of length less than
$k/s$ must ``loop around'' $H$ in $f$.  Formally, every $\ell$-vertex
cycle $C = v_1 \rightarrow v_2 \rightarrow
v_3 \rightarrow \cdots \rightarrow v_\ell \rightarrow v_1$ corresponds
to a walk $f(v_1) \rightarrow f(v_2) \rightarrow
f(v_3) \rightarrow \cdots \rightarrow f(v_\ell) \rightarrow f(v_1)$
along $H$.  If $\ell < k/s$, then since stretch of $f$ is at most $s$,
the length of this walk is less than $k$, implying that the walk does
not loop around $H$.

We now formalize the above intuition in the following linear program.
Let $\ell$ be an integer.  Fix a direction for the undirected cycle
$H$, and refer to the directed cycle has $\overrightarrow{H}$.  Fix
one direction for each edge $e \in E(G)$, such that edges in $H$ have
all same direction as that of $\overrightarrow{H}$; we use
$\overrightarrow{e}$ to refer to the directed edge.  We also fix a
direction for each cycle $C$, and refer to the directed cycle as
$\overrightarrow{C}$.  Below, we use the notation
$\overrightarrow{e} \parallel \overrightarrow{C}$ to mean that
$\overrightarrow{e}$ is in the cycle $\overrightarrow{C}$ and
$\ora{e}$ has the same direction as $\overrightarrow{C}$, and the
notation $\ora{e} \nparallel \overrightarrow{C}$ to mean that the
reverse of $\ora{e}$ is in $\overrightarrow{C}$.

For any pair of the vertices $u$ and $v$ in
$H$, let $p(u,v)$ denote the length of the unique path from $u$ to
$v$ in $\overrightarrow{H}$.  We define the {\em directed distance}\/
from $u$ to $v$ to be $p(u,v)$, if $p(u,v) \le k/2$; and $p(u,v) - k$,
otherwise.  Note that the directed distance for any pair $(u,v)$ is in
$(-k/2, k/2]$.  Let ${\cal C}_\ell$ denote the collection of all
directed cycles with less than $\ell$ edges.  Let ${\cal LP}_\ell$
represent the following linear program.

\begin{eqnarray*}
   x_{\ora{e}}  & = & 1 \;\;\; \forall \ora{e} \parallel \overrightarrow{H}\\
  \sum_{\ora{e}: \ora{e} \in \overrightarrow{C}, \ora{e} \parallel \overrightarrow{C}} x_{\ora{e}} - \sum_{\ora{e}: \ora{e} \in \overrightarrow{C},  \ora{e} \nparallel \overrightarrow{C}} x_{\ora{e}} & = & 0 \;\;\; \forall \overrightarrow{C} \in {\cal C}_\ell 
\end{eqnarray*}

\begin{lemma} 
\label{lem:cycles_lower_bound}
If there is a stretch-$s$ retraction from $G$ to a cycle $H$ in $G$, then ${\cal LP}_{k/s}$ is feasible.
\end{lemma}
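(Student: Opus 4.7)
The plan is to construct a feasible solution to ${\cal LP}_{k/s}$ directly from the given stretch-$s$ retraction $f:V(G)\to V(H)$. Number the vertices of $H$ as $0,1,\ldots,k-1$ in the order they appear along $\overrightarrow{H}$, and for any $a,b\in V(H)$ let $d^*(a,b)\in(-k/2,k/2]$ denote the directed distance from $a$ to $b$ (i.e., the unique representative of $t_b-t_a\pmod k$ in that interval). For each directed edge $\ora{e}=(u,v)$ of $G$, I would set
\[
x_{\ora{e}} \;=\; d^*(f(u),\,f(v)).
\]
Note that the sign conventions are arbitrary as long as we commit to one; since at most finitely many equalities are involved, I may assume $s<k/2$ throughout, because otherwise $k/s\le 2$ forces ${\cal C}_{k/s}$ to be empty (all cycles have at least three edges) and the LP reduces to its boundary constraints alone.

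First I would verify the boundary constraint. If $\ora{e}=(u,v)\parallel\overrightarrow{H}$, then $u,v\in V(H)$ and $(u,v)$ is an arc of $\overrightarrow{H}$, so by the retraction property $f(u)=u$ and $f(v)=v$, giving $d^*(f(u),f(v))=1$ and hence $x_{\ora{e}}=1$, as required.

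Next I would verify the cycle constraints. Fix any $\overrightarrow{C}\in{\cal C}_{k/s}$, traversed as $v_1\to v_2\to\cdots\to v_m\to v_1$ with $m<k/s$. Let $a_i=d^*(f(v_i),f(v_{i+1}))$ for $1\le i\le m$ (indices mod $m$). A routine case check on whether the fixed orientation $\ora{e_i}$ of the $i$-th edge agrees with $\overrightarrow{C}$ or not shows, using antisymmetry of $d^*$ (which is valid since $|a_i|\le s<k/2$), that the signed contribution of $\ora{e_i}$ to the LP constraint is exactly $+a_i$ in either case. Therefore the signed sum over $\overrightarrow{C}$ equals $\sum_{i=1}^m a_i$.

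Finally I would bound this sum two ways. By the stretch hypothesis, $|a_i|\le d_H(f(v_i),f(v_{i+1}))\le s$, so
\[
\Bigl|\sum_{i=1}^m a_i\Bigr| \;\le\; ms \;<\; k.
\]
On the other hand, modulo $k$ the sum telescopes: $\sum_{i=1}^m a_i \equiv \sum_{i=1}^m (t_{f(v_{i+1})}-t_{f(v_i)}) \equiv 0 \pmod{k}$ since $v_{m+1}=v_1$. The only integer multiple of $k$ in $(-k,k)$ is $0$, so $\sum_{i=1}^m a_i=0$, establishing the cycle constraint and hence the feasibility of ${\cal LP}_{k/s}$. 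The only mildly delicate step is the sign bookkeeping in the cycle equation, which is where the assumption $s<k/2$ (and the observation above that we lose no generality by making it) is actually used.
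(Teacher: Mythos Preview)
Your proof is correct and follows essentially the same approach as the paper: assign $x_{\ora{e}}$ to be the directed distance between the images of its endpoints, verify the boundary constraint using the retraction property, and for each short cycle argue that the signed sum of directed distances telescopes to a multiple of $k$ while being bounded in absolute value by $ms<k$, hence equals zero. The paper's proof is terser and does not spell out the sign bookkeeping or the degenerate case $s\ge k/2$ that you handle explicitly, but the argument is the same.
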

\begin{proof}
Let $f$ be a stretch-$s$ retraction from $G$ to $H$.  For any edge
$\ora{e} = (u,v)$, we set $x_{\ora{e}}$ to be the directed distance from
$f(u)$ to $f(v)$.  Then, for any edge $\ora{e}$ in $H$ along the same
direction as $H$, we have $x_{\ora{e}} = 1$, as required by the linear
program.  Consider any cycle $\overrightarrow{C}$ in ${\cal C}_{k/s}$.
Since the stretch of $f$ is $s$, the total distance in the walk on $H$
induced by $f$ and $\overrightarrow{C}$ is strictly less than $k$.
Since the walk returns to its starting point, it must be the case that
the total directed distance in the walk equals 0.  Therefore,
\[
  \sum_{\ora{e}: \ora{e} \in \overrightarrow{C}, \ora{e} \parallel \overrightarrow{C}} x_{\ora{e}} - \sum_{\ora{e}: \ora{e} \in \overrightarrow{C},  \ora{e} \nparallel \overrightarrow{C}} x_{\ora{e}} = 0 \;\;\; \forall \overrightarrow{C} \in {\cal C}_\ell,
  \]
satisfying the remaining constraint of the linear program, and thus guaranteeing its feasibility.
\end{proof}

By Lemma~\ref{lem:cycles_lower_bound}, the largest $s$ such that
${\cal LP}_{k/s}$ is infeasible is a lower bound on the optimal
stretch achievable.  It is easy to see that for the Sperner bottleneck
instance, the largest $s$ such that ${\cal LP}_{k/s}$ is infeasible is
$\Omega(\sqrt{n})$, hence asymptotically matching the Sperner's Lemma
lower bound.

We now show that for any $\ell$, ${\cal LP}_{\ell}$ can be solved in
polynomial time, by providing a suitable separation oracle.  The
oracle would simply be one of the cycles of length less than $\ell$
for which the corresponding constraint is violated.  We construct
graph $G'$, which has the same set of vertices as $G$ and, for each
edge $e = (u,v)$ in $G$, have an edge between $u$ and $v$ in each
direction.  We set the weight for each directed edge
$\overrightarrow{e}$ in $G'$ to be $x_{\ora{e}}$.  If
$\overleftarrow{e}$ is the reverse of $\ora{e}$, then we set
$x_{\overleftarrow{e}} = -x_{\ora{e}}$.

Let $x$ be a solution to the LP.  Fix an edge $(u,v)$ of $G$.  We
compute a shortest path of less than $\ell$ hops from $v$ to $u$ by
running a shortest path computation in a directed acyclic graph
defined by the hop-expanded (also sometimes referred to as
time-expanded) version of $G'$.  If the shortest path contains a cycle
of non-zero weight, then such a cycle serves as a separating linear
constraint.  Otherwise, if the length of the shortest path from $v$ to
$u$ is not $-x_{(u,v)}$, then the cycle obtained by appending the edge
$(u,v)$ to the shortest path from $v$ to $u$ yields a cycle of
non-zero weight, serving as a separating linear constraint.  We
execute this procedure for all edges $(u,v)$ in $G$; if no separating
linear constraints are found, then $x$ is a feasible solution to the
LP.

\section{Proofs for section~\ref{sec:planar-graph-to-cycle}}
\label{app:planar}
\begin{proof}[Proof of Lemma~\ref{lem:reduction-to-stretch-1}]
Suppose \(G\) can be embedded with stretch \(k\) in \(H\), and let \(f: V(G) \rightarrow V(H)\) be this mapping.  We define an embedding \(f': G_k \rightarrow H\) with stretch 1.  For \((u,v) \in  G\) let \(u = y_0, y_1, \ldots, y_k = v\) be the vertices on the path in \(G_k\) corresponding to edge \((u,v)\) in \(G\).  We show how to embed this path into \(H\). We know that \(d_H(f(u),f(v)) \le k\).
Let \(f(u) = x_0, x_1, \ldots, x_{j} = f(v)\) be the shortest path between \(f(u)\) and \(f(v)\) in \(H\) (note that \(j \le k\)).  Then, the retraction $f'$, defined by setting $f'(y_i)$ to be $x_{\scriptsize \min(i,j)}$, has stretch 1.
\junk{\begin{equation}
f'(y_i) =
  \begin{cases}
    x_i \quad \text{if } i \le j \\
    x_j \quad \text{if } i > j
  \end{cases}
\end{equation}
has stretch 1.}
Conversely, a mapping that produces a stretch-1 retraction of $G_k$ into $H$, when restricted to the vertices in $G$, gives a stretch-$k$ retraction of $G$ into $H$.
\end{proof}

\begin{lemma}
  \label{lem:partial-retraction}
  If there is a stretch-1 retraction $\hat{f}$ from $G$ to subgraph $\hat{G}$ of $G$, then there is a stretch-1 retraction $f$ from $G$ to subgraph $H$ of $\hat{G}$ if and only if there is a stretch-1 retraction $g$ from $\hat{G}$ to $H$.  Furthermore, $f$ can be computed from $\hat{f}$ and $g$ in polynomial time.
\end{lemma}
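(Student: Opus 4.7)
The plan is to prove each direction by an explicit construction, with the forward direction using restriction and the reverse direction using composition.

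For the ``only if'' direction, suppose there is a stretch-1 retraction $f: V(G) \to V(H)$. I would simply take $g$ to be the restriction of $f$ to $V(\hat{G})$. Since $H$ is a subgraph of $\hat{G}$, every vertex of $H$ lies in $V(\hat{G})$, and $f$ fixes $H$ pointwise by assumption, so $g$ fixes $H$ pointwise as well. To verify the stretch bound: every edge $(u,v) \in E(\hat{G})$ is also an edge of $E(G)$ since $\hat{G}$ is a subgraph of $G$, so $d_H(g(u), g(v)) = d_H(f(u), f(v)) \le 1$.

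For the ``if'' direction, suppose we are given $\hat{f}: V(G) \to V(\hat{G})$ and $g: V(\hat{G}) \to V(H)$, both stretch-1 retractions. I would define $f := g \circ \hat{f}$, which is clearly computable in polynomial time from $\hat{f}$ and $g$. To check that $f$ is a retraction from $G$ to $H$: for any $v \in V(H) \subseteq V(\hat{G})$, the retraction property of $\hat{f}$ gives $\hat{f}(v) = v$, and then the retraction property of $g$ gives $g(v) = v$, so $f(v) = v$. To check stretch-1: fix any edge $(u,v) \in E(G)$. Since $\hat{f}$ has stretch 1, $d_{\hat{G}}(\hat{f}(u), \hat{f}(v)) \le 1$, meaning $\hat{f}(u)$ and $\hat{f}(v)$ are either equal or adjacent in $\hat{G}$. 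In the first case $f(u) = f(v)$; in the second case, $(\hat{f}(u), \hat{f}(v)) \in E(\hat{G})$, so by the stretch-1 property of $g$, $d_H(g(\hat{f}(u)), g(\hat{f}(v))) \le 1$.

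There is no real technical obstacle here; the lemma is essentially a functoriality statement saying that stretch-1 retractions compose and restrict appropriately. The only thing that needs to be handled carefully is making sure the restriction is well-defined, which is guaranteed by the chain of inclusions $H \subseteq \hat{G} \subseteq G$, and that the distance constraints are preserved under both operations, which follows because an edge of $\hat{G}$ is also an edge of $G$ in one direction, and because adjacent or equal images in $\hat{G}$ remain within distance 1 in $H$ after applying a stretch-1 map $g$ in the other.
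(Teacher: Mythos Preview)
Your proof is correct and takes essentially the same approach as the paper: restriction for the forward direction and composition for the reverse. Your single definition $f = g \circ \hat{f}$ is in fact cleaner than the paper's two-case formulation (which, as written, appears to have $g$ and $\hat{f}$ interchanged by a typo).
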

\begin{proof}
One direction follows immediately: $\hat{G}$ is a subgraph of $G$, so a stretch-1 retraction of $G$ to $H$ implies the same for $\hat{G}$ to $H$.  For the other direction, let $g$ be a stretch-1 retraction from $\hat{G}$ to $H$.  Define $f: G \rightarrow H$ as follows: if $v \in V(\hat{G})$, then $f(v) = \hat{f}(v)$; otherwise, $f(v) = \hat{f}(g(v))$.  Then $f$ is a stretch-1 retraction from $G$ to $H$.  Clearly, $f$ can be computed in polynomial time from $\hat{f}$ and $g$.
\junk{\begin{equation}
  f(v) = 
  \begin{cases}
    \hat{f}(v) \text{ if } v \in V(\hat{G}), \\
    \hat{f}(g(v)) \text{ otherwise.}
  \end{cases}
\end{equation}
}
\end{proof}

\begin{proof}[Proof of Lemma~\ref{lem:2-connected}]
  Suppose $G$ is not 2-vertex connected, and let vertex $v$ be a vertex cut.
  Let $G_1$ and $G_2$ be the disconnected components created after removing $v$ from $G$.
  Since $H$ is a cycle, $H$ is contained completely in either $G_1$ or $G_2$.
  Suppose WLOG that $H \in G_1$.
  Mapping every vertex in $G_2$ to $v$ yields a stretch-1 retraction to $G' = G \setminus G_2$.
  By Lemma~\ref{lem:partial-retraction}, there is a stretch-1 retraction from $G'$ to $H$ if and only if there is a stretch-1 retraction from $G$ to $H$. 
  We can repeat this procedure until we obtain a 2-connected graph.
\end{proof}

\begin{proof}[Proof of Lemma~\ref{lem:cut-contains-cycle}]
  Consider the dual graph $G^*$ for some planar embedding of $G$.
  This graph is constructed by placing a vertex $u^*$ in each face of $G^*$, and adding an edge between $u^*$ and $v^*$ if the faces are adjacent in $G$.
  Note that there is a correspondence between vertices of $G$ and faces of $G^*$, as well as between faces of $G$ and vertices of $G^*$.
  Additionally there is a one-to-one correspondence between edges.
  It is well known that $X \subseteq E$ contains an $s-t$ cut if and only if $X^*$ contains the edges of some simple cycle separating $s$ from $t$.
  We will use this fact to prove our result.

  We need to show that a set of vertices $Y$ in $G$ separates $s$ from $t$ if and only if the subgraph induces by $Y$ contains a cycle separating $s$ from $t$.
  One direction is straightforward: if the induced graph on $Y$ forms a cycle separating $s$ from $t$, then applying the Jordan curve theorem tells us that any path from $s$ to $t$ must cross this cycle (and thus contain a vertex of $Y$).
  To prove the converse, we use the fact given above.
  Let $Y$ be an $s-t$ vertex cut in $G$ and let $E_Y$ be the set of edges in the graph induced by $Y$.
  Then from the fact above, $E^*_Y$ in the dual graph $G^*$ contains a cycle $C^*$ separating $s$ and $t$.
  Because the graph is triangulated, we can show that $E_Y$ also contains a separating cycle:
  edges $e_1^*$ and $e_2^*$ that are adjacent in $E^*_Y$ correspond to edges $e_1$ and $e_2$ falling on the same face of $G$.
  Because each face of $G$ has only 3 edges, $e_1$ and $e_2$ must therefore be adjacent.
  Therefore, $C$ (the edges corresponding to $C^*$) must also be a cycle.
\end{proof}

\section{Retracting points on a $2$-D plane to a uniform cycle}
\label{sec:2d_euclidean}
\junk{
\begin{definition}
Given a {\em guest metric}\/ $d_G$ over a set $V$ of vertices and a {\em host metric}\/ $d_H$ over subset $A \subseteq V$, a mapping $f: V \rightarrow A$ is a retraction of $X$ to $A$ if $f(p) = p$ for all $p \in A$.  For a given retraction $f$ of $V$ to $A$, define the {\em stretch}\/ of a pair $(u,v)$ of points in $X$ to be $d_G(f(u),f(v))/d_H(u,v)$, and the {\em stretch}\/ of $f$ to be the maximum stretch over all pair of points in $V$.  The goal of the {\em minimum metric retraction}\/ problem is to find a retraction of $V$ to $A$ with minimum stretch.
\end{definition}
}
In this section, we consider an Euclidean metric variant of the problem of retracting a graph to a cycle.  Formally, let $V$ be a set of $n$ points in two-dimensional Euclidean plane with a subset $A \subseteq V$ of $k$ {\em anchors}, which are evenly placed on a circle in the plane. For any two points $x$ and $y$ in the plane, let $d(x,y)$ denote the Euclidean distance between $x$ and $y$.  For a retraction $f: V \rightarrow A$, we define the {\em stretch}\/ of a pair $(u,v)$ of points in $X$ to be $d(f(u),f(v))/d(u,v)$, and the {\em stretch}\/ of $f$ to be the maximum stretch over all pair of points in $V$.

A plausible approach to attacking this Euclidean metric retraction problem is to use techniques from Euclidean distance geometry and network localization problems, which involve similar distance-based embeddings.  A popular approach for these problems is using SDP relaxations~\cite{manchoso+ye:localize,alfakih+kw:distance,bavrinok:distance}.  Though there is a large body of literature on distance geometry, graph rigidity, and network localization~\cite{hendrickson:realize,graver+ss:rigidity,aspnes+gy:localize,eren+gwymab:localize,savvides+hs:localize}, the specific objectives being pursued in our graph retraction formulation is different than these problems, as a result of which none of their results seem to yield good results to our Euclidean metric retraction problem.

In this section, we build on our optimal retraction algorithm for planar graphs to develop a constant-factor approximation algorithm for minimum-stretch retraction of $V$ to $A$.  Let the anchors $A$ be labeled $a_0, a_1, \ldots, a_{k-1}$.  Without loss of generality, we assume that the distance between $a_i$ and $a_{i+1 \bmod k}$ is 1.  

We begin our analysis of Algorithm~\ref{alg_eudliean_points_to_cycle} with a straightforward upper bound on the maximum stretch achieved by an optimal retraction for any metric to the uniform cycle.

\begin{lemma}
\label{lem:euclidean.stretch.bound}
For any metric, a stretch of $nk/2$ is achievable for retracting to a uniform cycle.
\end{lemma}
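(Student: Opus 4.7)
The plan is to exhibit an explicit retraction $f:V\to A$ whose stretch is at most $nk/2$. The simplest candidate is the map that fixes every anchor ($f(a_i)=a_i$) and sends every non-anchor to the designated anchor $a_0$. I will then bound $d(f(u),f(v))/d(u,v)$ by splitting on the type of pair $(u,v)$.

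For pairs of two anchors the stretch is exactly $1$ and for pairs of two non-anchors it is $0$, so the binding case is an anchor/non-anchor pair $(a_i,v)$, where the stretch equals $d(a_0,a_i)/d(a_i,v)$. For the numerator, I will note that since the $k$ anchors form a regular $k$-gon of side length $1$, their circumradius is $R=1/(2\sin(\pi/k))$, so the Euclidean diameter of the anchor set is at most $1/\sin(\pi/k)$, which is at most $k/2$ using the standard inequality $\sin x\ge 2x/\pi$ on $(0,\pi/2]$. Thus the numerator is at most $k/2$, and the stretch in this case is at most $(k/2)/d(a_i,v)$.

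It remains to bound the denominator from below by $1/n$. Under the natural scaling convention that the minimum positive pairwise distance in $V$ is at least $1/n$ (consistent with the anchor-spacing normalization to $1$), this gives stretch at most $(k/2)\cdot n = nk/2$, as claimed. If the simple all-to-$a_0$ map is not enough in some edge case, the fallback is the nearest-anchor retraction $f(v)=\arg\min_{a\in A}d(v,a)$; there one applies the triangle inequality
\begin{equation*}
d(f(u),f(v))\;\le\;d(u,v)+d(u,f(u))+d(v,f(v)),
\end{equation*}
and uses the fact that the nearest anchor to any point lies at distance at most $O(R)=O(k)$ to absorb near-coincident pairs and again recover the bound $nk/2$.

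The main obstacle is the denominator step: cleanly justifying the $1/n$ lower bound on pairwise distances. The numerator bound is essentially immediate from the anchor geometry, so the entire proof hinges on formalizing the normalization/packing argument (or the equivalent triangle-inequality argument with the nearest-anchor map) that forces the relevant pairs to be well-separated.
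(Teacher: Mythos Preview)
Your proposal has a genuine gap, and you correctly identify where it is: the lower bound on the denominator. Unfortunately, neither of your candidate retractions can be repaired to give the bound.

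For the all-to-$a_0$ map, consider a non-anchor $v$ at Euclidean distance $\varepsilon$ from $a_1$. Then the pair $(v,a_1)$ has stretch $d(a_0,a_1)/d(v,a_1)=1/\varepsilon$, which is unbounded. There is no ``natural scaling convention'' that forces the minimum pairwise distance in $V$ to be at least $1/n$; the only normalization in force is that adjacent anchors are at distance $1$, and non-anchor points may sit anywhere. Your fallback, the nearest-anchor map, fails for the same reason: place two non-anchors $v,w$ at distance $\varepsilon$ from each other but on opposite sides of the perpendicular bisector of $a_1a_2$. Then $f(v)=a_1$, $f(w)=a_2$, and the stretch of $(v,w)$ is $1/\varepsilon$. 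The triangle-inequality argument you sketch only shows $d(f(u),f(v))\le d(u,v)+O(k)$, which does not control the \emph{ratio} when $d(u,v)\to 0$.

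The missing idea is a clustering step. The paper contracts every edge of the complete weighted graph on $V$ of weight less than $1/n$, repeatedly, until none remain. Each resulting cluster is connected by a chain of at most $n-1$ such edges, so it has diameter strictly less than $1$ and therefore contains at most one anchor (distinct anchors on a unit-side regular $k$-gon are at distance $\ge 1$). The retraction sends every vertex in a cluster to that cluster's anchor (or to an arbitrary anchor if the cluster has none). Two points in the same cluster map identically; two points in different clusters are, by construction, at distance at least $1/n$, and their images are at distance at most $k/2$, giving stretch at most $nk/2$. The contraction is precisely what manufactures the $1/n$ separation you were trying to assume.
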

\begin{proof}
Let $G$ denote the weighted complete graph over the set $V$ of $n$ points where the weight of each edge is the distance between the two points in the metric.  We perform a series of contractions in $G$ that finally yield a new graph $\tilde{G}$.  Let $G'$ be an intermediate graph.  The vertices of $G'$ form a partition of $V$: each vertex in $G'$ is a disjoint subset of $V$.  It is convenient to denote each vertex $v$ of $G$ by the singleton set $\{v\}$, so that every vertex of every intermediate graph is a subset of $V$.  Consider the contraction of any edge with length less than $1/n$.  When an
edge $(X,Y)$, where $X$ and $Y$ are disjoint subsets of $V$, is contracted, we replace $X$ and $Y$ and the edge $(X,Y)$ by a single vertex $Z = X \cup Y$ and replace edges as follows: for every edge $(X,W)$ with $W \neq Y$, $Z$
has an edge $(Z,W)$ of the same weight.  Similarly, for every edge
$(Y,W)$ with $W \neq X$, $Z$ has an edge $(Z,W)$ of the same weight.  If there are multiple edges between two vertices, we remove all but the edge with the lowest weight.

We obtain $\tilde{G}$ when no more contractions are possible.
Assuming that the distance between $a_i$ and $a_{i+1 \bmod k}$ is 1, we claim that any two anchors $a_i$ and $a_j$ are members of distinct vertices in $\tilde{G}$. This is because any contraction can decrease distance between two vertices by at most $1/n$, and the maximum number of possible contractions is at most $n-1$ since each contraction decreases the number of vertices by 1.  Now, the
distance between any two vertices in the contracted graph is at most
the distance between the two vertices in the original metric.

The distance between any two vertices $X$ and $Y$ in $\tilde{G}$ is at least $1/n$.  By our construction, it follows that the distance in $G$ between any $x \in X$ and $y \in Y$ is at least $1/n$.  
Now consider any retraction $\phi$ in $G$ that satisfies the following properties: for any $X \in \tilde{G}$, if $X$ contains an anchor $a_i$, $\phi$ maps all vertices in $X$ to $a_i$.  The stretch of $\phi$ is at most
$kn/2$ since the only vertices that are mapped to distinct anchors are at distance at least $1/n$ and any two anchors are at distance at most $k/2$.
\end{proof}

One consequence of Lemma \ref{lem:euclidean.stretch.bound} is that any optimal retraction has to map any two vertices at distance less than $2/(kn)$ to the same anchor. Note that by a series of contractions (at most $n-1$), the distance of any pair of vertices will decrease by no more than $2/k$. 

\begin{figure}[htb]
    \centering
    \begin{subfigure}[t]{\textwidth}
        \centering
        \includegraphics[width=10cm]{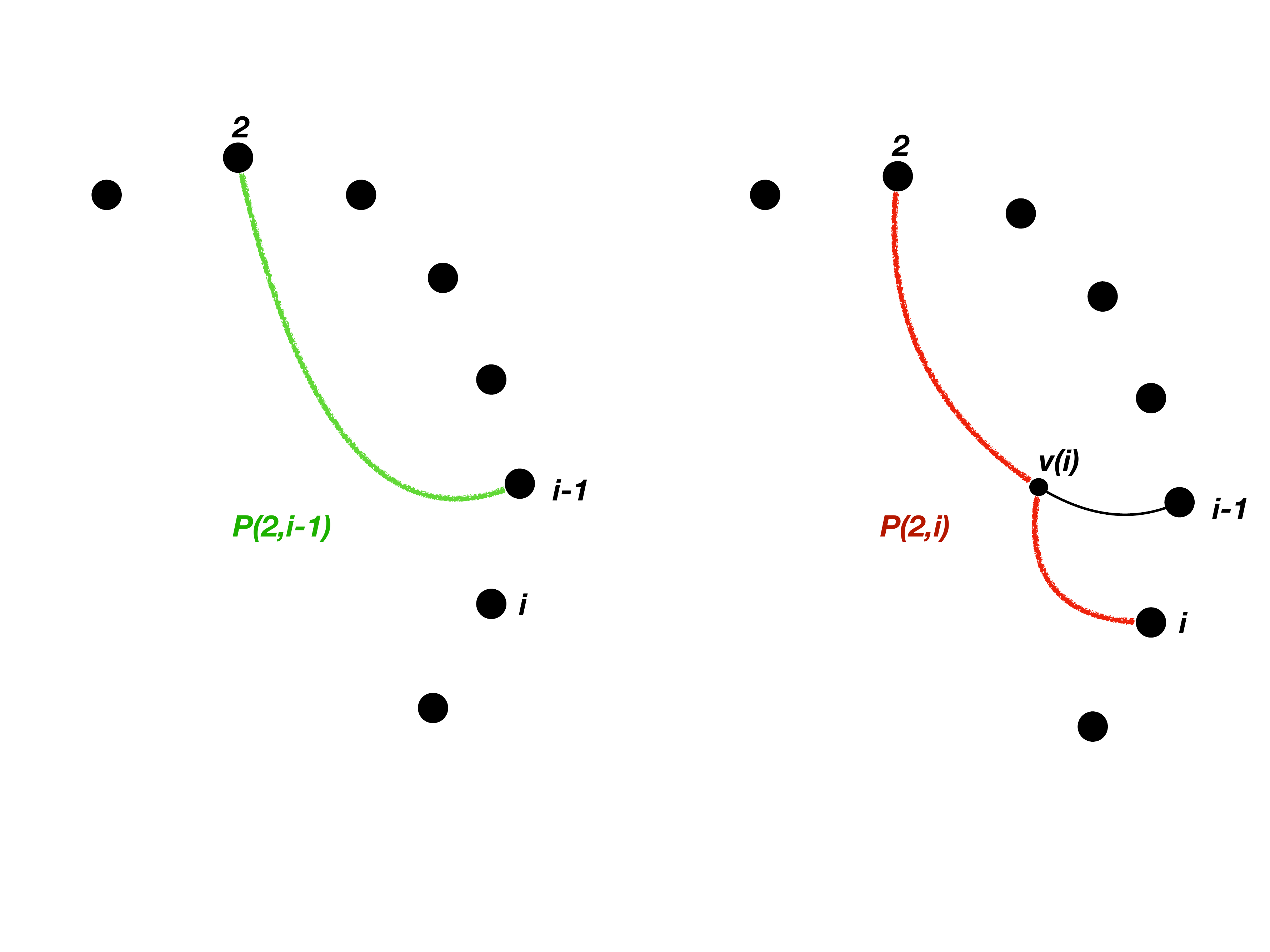}
        \caption{Inductively obtaining path $P(2,i)$ from path $P(2,i-1)$.}
        \label{fig_const_path}
    \end{subfigure}
    ~
    \begin{subfigure}[t]{\textwidth}
        \centering
        \includegraphics[width=10cm]{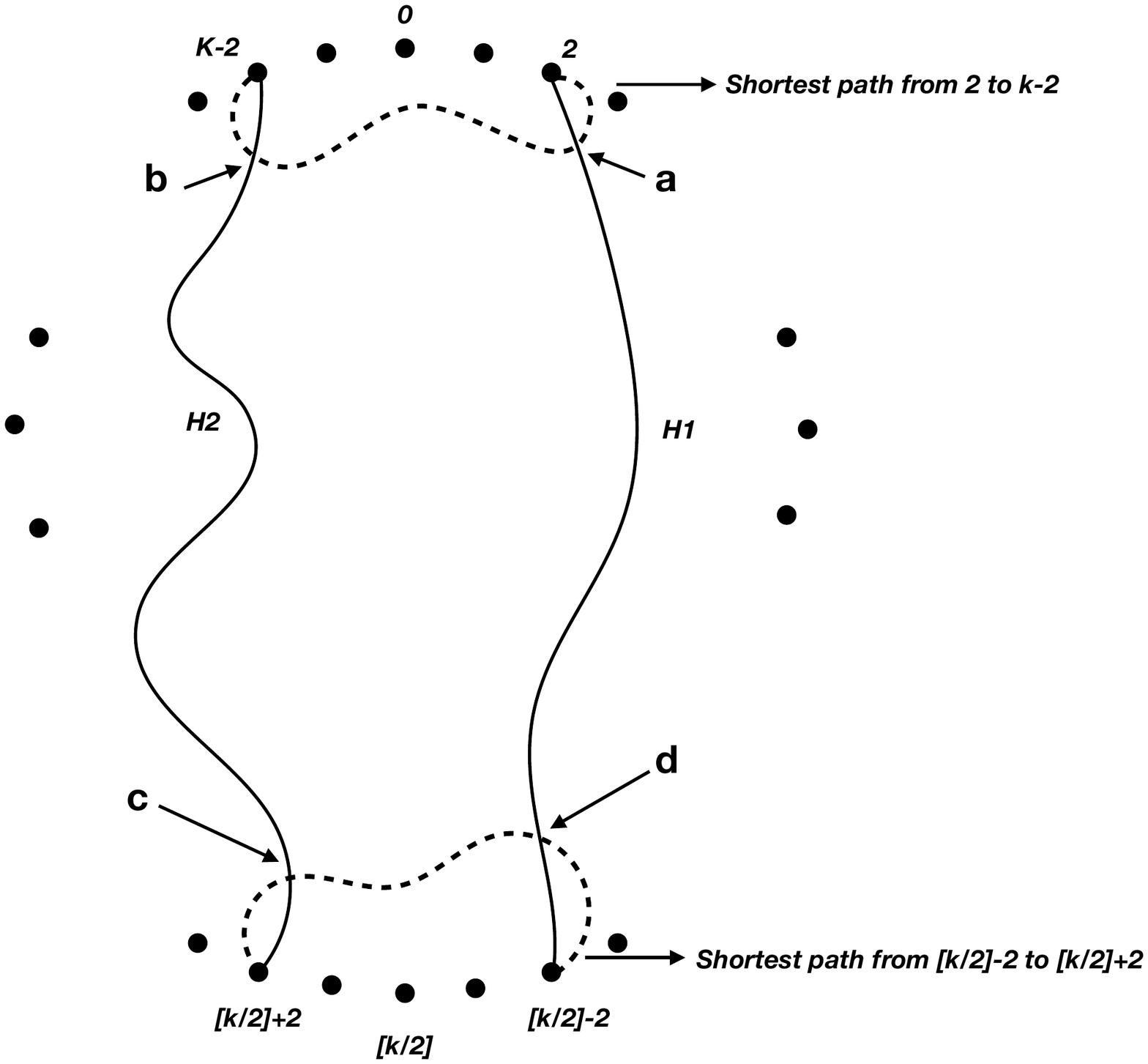}
        \caption{Obtaining cycle $H$ by connecting paths $H_1$ and $H_2$ via paths $P_{ab}$ and $P_{cd}$.}
        \label{fig_const_H}
    \end{subfigure}
    \caption{Construction of $H$.}
\end{figure}

In the following, we analyze the steps of the above algorithm. It is also easy to verify that any sequence of contractions in step~2 and the subsequent actions of the algorithm maintains the planarity of the graph.

\begin{lemma}
\label{lemma_ghat_v}
Any $s$-stretch retraction from $\widehat{G}$ to the anchors is a $\Theta(s)$-stretch retraction from $V$ to the anchors, and vice versa. 
\end{lemma}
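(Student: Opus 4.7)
The plan is to set up a natural correspondence between retractions of $\widehat{G}$ and retractions of $V$, and then argue that stretch changes by only a constant factor in each direction. As noted after Lemma~\ref{lem:euclidean.stretch.bound}, the contractions in Step~2 merge only pairs of points at distance below some small threshold $\epsilon$, so each contracted vertex $X$ has $d$-diameter bounded by $O(n\epsilon)$, while any two distinct contracted vertices are at distance at least $\epsilon$. By choosing $\epsilon$ sufficiently small (on the order of $1/(n^2k)$ suffices for all stretches up to the trivial $nk/2$ bound), we ensure the cluster diameters are far below $1$, the minimum distance between distinct anchors.

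For the forward direction, given an $s$-stretch retraction $\phi$ of $\widehat{G}$, I would define $f(v) = \phi(X_v)$, where $X_v$ is the contracted vertex containing $v$. For $u, v$ in the same $X$, the stretch contribution is $0$. For $u \in X \neq Y \ni v$, the contraction rule gives $d_{\widehat{G}}(X,Y) = \min_{x \in X, y \in Y} d(x,y) \le d(u,v)$, so $d(f(u),f(v)) = d(\phi(X),\phi(Y)) \le s \cdot d_{\widehat{G}}(X,Y) \le s \cdot d(u,v)$. Hence $f$ has stretch at most $s$.

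For the reverse direction, the key observation is that the small diameters of contracted vertices force any bounded-stretch retraction of $V$ to act ``cluster-uniformly.'' Specifically, if $f$ has stretch $s$ and $u, u' \in X$ are mapped to distinct anchors, then $d(f(u),f(u')) \ge 1$, so $s \ge 1/\mathrm{diam}(X)$; with $\epsilon$ chosen as above this is impossible for the stretches of interest, so every $X$ maps to a single anchor under $f$. I can then define $\phi(X) = f(v)$ for any $v \in X$ unambiguously. To bound the stretch, for each pair $X \neq Y$ pick representatives $u \in X, v \in Y$ achieving $d(u,v) = d_{\widehat{G}}(X,Y)$; this yields $d(\phi(X),\phi(Y)) = d(f(u),f(v)) \le s \cdot d_{\widehat{G}}(X,Y)$, so $\phi$ has stretch at most $s$.

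The main obstacle is justifying that the contraction threshold $\epsilon$ can be chosen uniformly small enough to accommodate all stretches $s$ for which the lemma is invoked, while still keeping the number of contractions at most $n-1$ and preserving planarity of the construction. A naive choice of $\epsilon$ that depends on $s$ would make the argument circular, so the formal proof must fix $\epsilon$ conservatively once and for all (for example, proportional to $1/(n^2k)$) and then verify that the ``$\Theta(s)$'' bound holds throughout the entire relevant range of stretches; the freedom to pick representatives that achieve the inter-cluster minimum $d_{\widehat{G}}(X,Y)$ is the crucial point that avoids any loss larger than a constant factor.
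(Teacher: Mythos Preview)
Your proposal addresses the wrong graph. In Algorithm~\ref{alg_eudliean_points_to_cycle}, $\widehat{G}$ is the planar Euclidean spanner (the Delaunay triangulation) constructed in Step~1, and it has the \emph{same} vertex set as $V$; the contracted graph you are reasoning about is $\bar{G}$, produced in Step~2. There are therefore no ``contracted vertices $X_v$'' in $\widehat{G}$, and the whole diameter/threshold discussion is inapplicable to this lemma. What you have sketched is essentially the content of Lemma~\ref{lemma_ghat_gbar}, which compares $\widehat{G}$ with $\bar{G}$, not Lemma~\ref{lemma_ghat_v}.

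The paper's proof of Lemma~\ref{lemma_ghat_v} is instead a one-line application of the spanner property of the Delaunay triangulation: $d(u,v) \le d_{\widehat{G}}(u,v) \le 2.5\, d(u,v)$ for all $u,v \in V$. Since $V(\widehat{G}) = V$, one uses literally the same retraction map in both settings; the stretch of any pair changes by at most a factor of $2.5$ because both the numerator (distance between images, which are anchors in $V$) and the denominator (distance between $u$ and $v$) change by at most that factor when switching between $d$ and $d_{\widehat{G}}$.
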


\begin{proof}
Using Delaunay triangulation as the planar Euclidean spanner, for any pair of points $u$ and $v$,  $d_{\widehat{G}}(u,v) \leq 2.5 \; d(u,v)$ \cite{Keil:1992:CGA}, where $d_{\widehat{G}}(u,v)$ is the shortest path distance between $u, v$ over graph $\widehat{G}$, and $d(u,v)$ is the Euclidean distance between $u$ and $v$. 

$(\Rightarrow)$ Let $\widehat{\phi}$ be the $s$-stretch retraction for $\widehat{G}$. Use  the same function for retraction from $V$ to anchors. Now for any pair of verticies $u, v \in V$, we have: 
\[\frac{d(\widehat{\phi}(u), \widehat{\phi}(v))}{d(u,v)}  \leq  2.5 \frac{d_{\widehat{G}}(\widehat{\phi}(u), \widehat{\phi}(v))}{d_{\widehat{G}}(u,v)} \leq  2.5 \;  s 
\]
This implies that $\widehat{\phi}$ gives a $\Theta(s)$-stretch retraction from $V$ to anchors. 

$(\Leftarrow)$ Let $\phi$ be the $s$-stretch retraction from $V$ to anchors. Using the same retraction function for retracting $\widehat{G}$ to anchors, for each $(u,v) \in E(\widehat{G})$, we get: 
\[
\frac{d_{\widehat{G}}(\phi(u), \phi(v))}{d_{\widehat{G}}(u,v)}  \leq  2.5\frac{d(\phi(u), \phi(v))}{d(u,v)} 
 \leq  2.5 \; s  
\]

This implies that $\phi$ gives a $\Theta(s)$-stretch retraction of $\widehat{G}$ to the anchors. 
\end{proof}

\begin{lemma}
\label{lemma_ghat_gbar}
Any $s$-stretch retraction from $\widehat{G}$ to anchors is a $\Theta(s)$-stretch retraction from $\bar{G}$, and vice versa. 
\end{lemma}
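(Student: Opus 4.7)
The plan is to mirror the structure of the proof of Lemma~\ref{lemma_ghat_v}, exploiting the fact that $\bar G$ is obtained from $\widehat G$ by a sequence of contractions of short edges (as in the proof of Lemma~\ref{lem:euclidean.stretch.bound}). Let $\tau$ denote the contraction threshold, and for each $v\in V(\widehat G)$ let $X_v$ denote the supernode of $\bar G$ containing $v$. Two facts we will use repeatedly: (i) every surviving edge of $\bar G$ retains its original Euclidean weight and has weight at least $\tau$; (ii) since at most $n-1$ contractions occur, any two vertices that end up in the same supernode are joined in $\widehat G$ by a path of total weight strictly less than $(n-1)\tau$.

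For the direction $(\bar G\Rightarrow \widehat G)$, given an $s$-stretch retraction $\bar\phi$ of $\bar G$, define $\widehat\phi(v):=\bar\phi(X_v)$. For any edge $(u,v)\in E(\widehat G)$, either the edge was contracted, in which case $X_u=X_v$ and the stretch is $0$, or the edge survives in $\bar G$ with the same weight, in which case its stretch under $\widehat\phi$ equals the stretch of the corresponding edge of $\bar G$ under $\bar\phi$, hence is at most $s$. This yields $\widehat\phi$ of stretch $s$ on $\widehat G$.

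For the reverse direction $(\widehat G\Rightarrow \bar G)$, given an $s$-stretch retraction $\widehat\phi$ of $\widehat G$, I first argue that $\widehat\phi$ is forced to be constant on each supernode. Indeed, for any two vertices $a,b$ in the same supernode $X$, the path of contracted edges joining them has total length less than $(n-1)\tau$, so by additivity of stretch along a path we get $d(\widehat\phi(a),\widehat\phi(b))\le s\cdot (n-1)\tau$. With the choice $\tau=2/(kn)$ highlighted in the discussion following Lemma~\ref{lem:euclidean.stretch.bound}, this quantity is strictly below the inter-anchor distance $1$ for every $s$ in the range of interest (namely $s\le kn/2$, which by Lemma~\ref{lem:euclidean.stretch.bound} upper-bounds the optimal stretch), and therefore $\widehat\phi(a)=\widehat\phi(b)$. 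Hence we can define $\bar\phi(X):=\widehat\phi(r_X)$ for any representative $r_X\in X$ unambiguously. For any surviving edge $(X,Y)\in E(\bar G)$ coming from $(u,v)\in E(\widehat G)$, both graphs assign this edge the same weight and $\bar\phi(X)=\widehat\phi(u)$, $\bar\phi(Y)=\widehat\phi(v)$, so the stretch bound transfers verbatim to give $s$-stretch on $\bar G$.

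The main obstacle is controlling the reverse direction without losing an $n$ factor: a direct triangle-inequality argument, applied to the diameter of each supernode inside $\widehat G$ (which can be as large as $(n-1)\tau$), only gives $O(sn)$-stretch on $\bar G$. The resolution is to restrict attention to the retractions that matter, namely those whose stretch is within the $O(nk)$ regime of Lemma~\ref{lem:euclidean.stretch.bound}, and to choose $\tau$ small enough that the supernode diameters force each supernode into a single anchor. Apart from that collapse argument, everything else is a routine transfer of stretch along edges that are preserved with identical weights between $\widehat G$ and $\bar G$.
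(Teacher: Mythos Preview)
Your argument is essentially the paper's: pass the retraction through supernodes, invoke the bound of Lemma~\ref{lem:euclidean.stretch.bound} to force each supernode to a single anchor when $s\le kn/2$, and transfer stretch along the surviving edge between supernodes. Two small slips worth tightening: the edge $(X_u,X_v)$ that survives in $\bar G$ carries the \emph{minimum} weight among all $\widehat G$-edges between the two parts rather than $w(u,v)$ itself (this only helps your inequalities, so the conclusion stands), and in the $\bar G\Rightarrow\widehat G$ direction the paper obtains $2s$ rather than $s$ because anchor-to-anchor distances measured in $\widehat G$ can exceed those in $\bar G$ by up to $2/k$, an additive term absorbed via $d_{\bar G}(\bar\phi(X),\bar\phi(Y))\ge 1-2/k$.
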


\begin{proof}
 Similar to construction in lemma \ref{lem:euclidean.stretch.bound}, the vertices of $\bar{G}$ form a partition of $V$.  

($\Rightarrow$)
Let $\widehat{\phi}$ be a $s$-stretch retraction from $\widehat{G}$ to the anchors.  We define the retraction $\bar{\phi}$ from $\bar{G}$ to the anchors as follows; For each $X \in V(\bar{G})$, which also $X \subset V$, $\bar{\phi}(X) = \widehat{\phi}(v)$ where $v \in X$. Note that by lemma \ref{lem:euclidean.stretch.bound} all vertices which are contracted to a single vertex of $\bar{G}$ have to be retracted to the same anchor. Now for each edge $e(X,Y) \in E(\bar{G})$, where $w(X,Y) = \min\limits_{u\in X, v\in Y, (u,v) \in E(\widehat{G})} w(u,v)$, we have the following: 
\[
\frac
{d_{\bar{G}}(\bar{\phi}(X), \bar{\phi}(Y))}
{w(X,Y)} \leq 
\frac{d_{\widehat{G}}(\widehat{\phi}(u), \widehat{\phi}(v))} 
{w(u,v)} \leq s 
\]
where $(u,v)$ is the minimum weight edge remaining between $X$ and $Y$. 

($\Leftarrow$) Now let $\bar{\phi}$ be a $s$-stretch retraction from $\bar{G}$ to anchors. For each $v \in V$, we define $\widehat{\phi} (v) = \bar{\phi}(X)$, where $v \subset X$. For each edge $e(p,q)\in E(\widehat{G})$, if $\widehat{\phi}(p) = \widehat{\phi}(q)$, then stretch of $e$ is zero. Otherwise, let $p\in X$ and $q \in Y$ such that $\widehat{\phi}(p) \neq \widehat{\phi}(q)$ and $X, Y \in V(\bar{G})$. 
\[
\frac{d_{\widehat{G}}(\widehat{\phi}(p),\widehat{\phi}(q))}{w(p,q)}
\leq 
\frac{d_{\bar{G}}(\bar{\phi}(X),\bar{\phi}(Y)) + 2/k }{w(u,v)} \leq  \frac{2 d_{\bar{G}}(\bar{\phi}(X),\bar{\phi}(Y))}{w(X,Y)} \leq 2s
\]
where $(u,v)$ is the minimum weight edge remained between $X$ and $Y$. Since the actual distance of any pair of anchor is at least one, and series of contraction will decrease distances in $\bar{G}$ by at most $2/k$, we obtain that $d_{\bar{G}}(\bar{\phi}(X),\bar{\phi}(Y)) + 2/k \leq 2 d_{\bar{G}}(\bar{\phi}(X),\bar{\phi}(Y))$. 

\end{proof}

\begin{lemma}
\label{const_dist}
For each anchor $i$, there is a vertex $v$ on $H$ such that $d_{G'}(i,v) \leq O(1)$. Also, for each vertex $v$ on $H$, there is an anchor $i$ such that $d_{G'}(i, v) \leq O(1)$.  
\end{lemma}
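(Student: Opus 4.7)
The plan is to leverage both the geometric closeness of $H$ to the anchor circle and the fact that $\widehat{G}$ (and consequently $\bar{G}$, $G'$) is a constant-factor spanner of the Euclidean metric. Recall from Lemma~\ref{lemma_ghat_v} that the Delaunay triangulation satisfies $d_{\widehat{G}}(u,v) \le 2.5\, d(u,v)$. Moreover, by Lemma~\ref{lem:euclidean.stretch.bound} (and the discussion following it), the contractions used to build $\bar{G}$ merge only vertices that are at Euclidean distance less than $1/n$, so they distort pairwise distances between retained vertices by at most $2/k = o(1)$. Finally, the auxiliary additions forming $G'$ from $\bar{G}$ do not destroy these bounds, since they are inserted only to create the outer cycle $H$ and keep edges of length $O(1)$.

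First I would pin down the Euclidean-geometric property of the cycle $H$ implied by the construction shown in Figures~\ref{fig_const_path} and~\ref{fig_const_H}: each of the paths $H_1$ and $H_2$ is built inductively to track the anchor circle from above/below, and the bridging paths $P_{ab}$, $P_{cd}$ join them at two diametrically opposite regions. Because consecutive anchors are at Euclidean distance exactly $1$, and every inductive step of Figure~\ref{fig_const_path} extends the path by at most an $O(1)$-length Euclidean jump, every vertex on $H$ lies within Euclidean distance $O(1)$ of the anchor circle, and conversely the cycle $H$ meets the $O(1)$-neighborhood of each anchor.

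With this geometric property in hand, both directions follow quickly. For the first claim, fix an anchor $i$. By the construction of $H$ just described, there is a vertex $v\in V(H)$ with $d(i,v) = O(1)$. Applying the spanner inequality in $\widehat{G}$ yields $d_{\widehat{G}}(i,v) \le 2.5\, d(i,v) = O(1)$; passing to $\bar{G}$ and then to $G'$ changes this bound only by an additive $O(1)$, so $d_{G'}(i,v) = O(1)$. For the second claim, fix $v\in V(H)$. Then $v$ is at Euclidean distance $O(1)$ from the circle, and since the anchors are $1$-spaced on it, some anchor $i$ satisfies $d(v,i) = O(1)$. The same spanner argument gives $d_{G'}(v,i) = O(1)$.

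The main obstacle I anticipate is the first paragraph: formally verifying that every vertex of $H$ produced by the iterative construction of $H_1$, $H_2$ together with the bridges $P_{ab}, P_{cd}$ stays within an $O(1)$ Euclidean neighborhood of the anchor circle. This requires an induction on the construction of Figure~\ref{fig_const_path} showing that each extension of $P(2,i-1)$ to $P(2,i)$ adds vertices only in a bounded tubular neighborhood of the arc, and a similar bound for the bridging paths $P_{ab}$ and $P_{cd}$ using that their endpoints are already within $O(1)$ of the circle. Once that geometric fact is established, the rest of the proof is essentially a direct appeal to the spanner property.
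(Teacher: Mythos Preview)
Your approach has a genuine gap in the first direction (anchor $\to$ vertex on $H$), and you have in fact identified the wrong half of the lemma as the ``main obstacle.''

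Showing that every vertex of $H$ lies within $O(1)$ of the anchor circle is the easy direction, and the paper's argument for it is essentially what you would get after stripping the Euclidean detour from your plan: every vertex of $H_1$ lies on the shortest path from some anchor $i$ to $v(i)$, and since $v(i)$ is chosen as the point of $P(2,i-1)$ closest to anchor $i$ while anchor $i-1$ already lies on $P(2,i-1)$, that path has $d_{G'}$-length at most $d_{G'}(i,i-1)=O(1)$. No spanner inequality is needed here, and going through Euclidean distance only creates technical noise (the auxiliary subdivision vertices of $G'$ are not points of $V$, so the Delaunay spanner bound does not directly apply to them).

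The hard direction is the converse: for every anchor $i$ there is a vertex of $H$ within $O(1)$. You assert this (``conversely the cycle $H$ meets the $O(1)$-neighborhood of each anchor'') but your inductive plan does not establish it. The issue is that passing from $P(2,i-1)$ to $P(2,i)$ \emph{discards} the suffix of $P(2,i-1)$ from $v(i)$ to anchor $i-1$; in particular $v(i-1)$ may be cut off and never appear in the final $H_1$. Knowing that each extension adds vertices in a bounded tube around the circle says nothing about which earlier vertices survive the cutting, so $H_1$ being contained in a tubular neighborhood of the arc does not imply it visits the neighborhood of every intermediate anchor. The paper handles this with a separate combinatorial argument: among any five consecutive $v(i+1),\ldots,v(i+5)$, at least one must remain on $H_1$, because otherwise a triangle-inequality computation in $d_{G'}$ forces $d_{G'}(i,j)\le 5$ for some $j>i+5$, contradicting the spanner lower bound $d_{G'}(i,j)\ge d(i,j)\ge 6$. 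Your proposal has no analogue of this step, and without it the first claim does not follow.
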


\begin{proof}
We need to show that path $H_1$ and $H_2$ do not intersect. Since the distance of anchor $2$ to $k-2$ and similarly anchor $\lfloor k/2 \rfloor-2$ to anchor $\lfloor k/2 \rfloor +2$ are $\Theta(1)$, length of path $P_{ab}$ and similarly $P_{cd}$ are $O(1)$. 

($\Rightarrow$) W.l.o.g. assume that $2\leq i \leq \lfloor k/2 \rfloor-2$. First, we prove that among every five consecutive $ v(i+1), v(i+2), v(i+3), v(i+4), v(i+5)$, at least one of them belongs to $H_1$. Assume that there exists a pair $i < j$ such that $j-i > 5$ and and $v(i)$ and $v(j)$ belongs to $H_1$, but for any $i < k < j$, $v(k)$ does not belong to $H_1$. From the definition, $d_{G'}(j, v(j))$ is no more than the distance of anchor $j-1$ to anchor $j$. Thus $d_{G'}(j, v(j)) \leq 2.5$. Also, $v(j)$ is on the path from $v(i)$ to $i$, otherwise $v(i)$ could not be on $H_1$. Now, if we estimate the distance from anchor $i$ to anchor $j$, we have the following: 
\begin{eqnarray*}
d_{G'}(i, j) & \leq & 
 d_{G'}(i, v(j)) +  d_{G'}(v(j),j)
 \leq d_{G'}(i, v(i)) + d_{G'}(v(j), j)  \leq  5   
\end{eqnarray*}

However, since $j-i > 5$, the distance from anchor $i$ to $j$ is at least $6$, which yields a contradiction. The assumption that none of the five consecutive $ v(i+1), v(i+2), v(i+3), v(i+4), v(i+5)$ belongs to $H_1$ is wrong. Also it is easy to verify that distance of any consecutive $v(i), v(i+1)$ is $O(1)$. Using these two arguments above, for any anchor $2\leq i\leq \lfloor k/2\rfloor-2$, at least one of $v(k)$, for $i \leq k \leq i+5$ or $i-5 \leq k \leq i$, belongs to $H_1$, and $d_{G'}(i, v(k))$ is $O(1)$. Vertex $a$ and vertex $d$ are vertices of $H$ which are at distance $O(1)$ from  anchors $0$, $1$, and respectively anchors $\lfloor k/2\rfloor-1$, $\lfloor k/2\rfloor$. The argument for second half of anchors is exactly the same. This completes the proof for the first part of the lemma. 

($\Leftarrow$) Any vertex on path $P_{ab}$, and similarly path $P_{cd}$ are at distance $O(1)$ of anchor $2$ and respectively anchor $\lfloor k/2\rfloor-2$. W.l.o.g. assume an arbitrary vertex $v$ on $H_1$. From the inductive way of constructing $H_1$, there is some $i$ such that for $i \leq j \leq k/2 $, $v$ belongs to $P(2,j)$, and for $j \leq i$, vertex $v$ is not on $P(2,j)$. Thus, $v$ is a vertex on path from anchor $i$ to vertex $v(i)$. The distance from $v$ to anchor $i$ is at most $2.5$, since the distance from $v(i)$ to anchor $i$ is at most $2.5$.  The argument for vertices of $H_2$ is exactly the same. This completes the proof for the second part of lemma. 
\end{proof}

\begin{lemma}
\label{lemma_gbar_gprim}
There exists a $s$-stretch retraction from $\bar{G}$ to anchors if and only if there is a $\Theta(s)$-stretch retraction from $G'$ to $H$. 
\end{lemma}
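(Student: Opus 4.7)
The plan is to convert retractions between the two settings using the correspondence between anchors and vertices of $H$ provided by Lemma~\ref{const_dist}: each anchor $a_i$ has a ``nearby'' vertex $v_i$ on $H$ at $G'$-distance $O(1)$, and every vertex on $H$ has a nearby anchor at $G'$-distance $O(1)$. The key intuition is that, since the distances between corresponding pairs (anchor, $H$-vertex) are constants while the stretch factor $s$ we care about is typically much larger, the $O(1)$ shift introduced by using this correspondence only contributes an additive constant to edge stretches, which turns into a multiplicative $\Theta(1)$ factor overall.

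For the forward direction ($\Rightarrow$), given an $s$-stretch retraction $\bar{\phi}:V(\bar{G})\to A$, I would define $\phi':V(G')\to V(H)$ as follows. For $v\in V(\bar{G})$, let $\phi'(v)$ be a vertex of $H$ within $G'$-distance $O(1)$ of $\bar{\phi}(v)$, which exists by Lemma~\ref{const_dist}. For an auxiliary vertex $w\in V(G')\setminus V(\bar{G})$, if $w\in V(H)$ set $\phi'(w)=w$ (so that $\phi'$ is indeed a retraction); otherwise, set $\phi'(w)$ to a nearest vertex of $H$ in $G'$ (which is $O(1)$ away by construction of $G'$). For any edge $(u,w)\in E(G')$, I would bound $d_{G'}(\phi'(u),\phi'(w))$ by the triangle inequality against $d_{G'}(\bar{\phi}(u'),\bar{\phi}(w'))$ for appropriate nearby $u',w'$ in $\bar{G}$, picking up only additive $O(1)$ terms from Lemma~\ref{const_dist}; since edges in $G'$ have weight at least some fixed constant (inherited from the normalization that adjacent anchors are at distance $1$, and the fact that $\bar{G}$ has no very short edges as argued in Lemma~\ref{lem:euclidean.stretch.bound}), this additive $O(1)$ becomes a multiplicative $O(1)$ in the stretch, yielding $\Theta(s)$.

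For the backward direction ($\Leftarrow$), given an $s$-stretch retraction $\phi':V(G')\to V(H)$, I would define $\bar{\phi}:V(\bar{G})\to A$ by letting $\bar{\phi}(v)$ be the anchor nearest in $G'$ to $\phi'(v)$, which is again $O(1)$ away by Lemma~\ref{const_dist}. The retraction property $\bar{\phi}(a_i)=a_i$ follows because $a_i$ is at $G'$-distance $O(1)$ from its home $v_i\in H$, so $\phi'(a_i)$ lies within $G'$-distance $O(s)$ of $v_i$ on $H$; rescaling by contracting short edges in step~2 of the algorithm guarantees that the nearest anchor to this neighborhood is $a_i$ itself (and if this last step requires it, we can precompose $\phi'$ with the obvious ``snap anchors to their homes'' map at constant-factor cost in stretch). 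The stretch of each edge $(u,w)\in E(\bar{G})$ is bounded by $d_{G'}(\bar{\phi}(u),\bar{\phi}(w))/w(u,w) \le (d_{G'}(\phi'(u),\phi'(w))+O(1))/w(u,w) \le O(s)$, where the final inequality again uses that edge weights in $\bar{G}$ are bounded below by a constant.

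The main technical obstacle will be the bookkeeping in the stretch computation: edges of $G'$ may pass through auxiliary vertices and through the constructed cycle $H$, so one must carefully verify that a single edge in $\bar{G}$ corresponds to a short walk in $G'$ (of length $O(1)$), and conversely, that edges in $G'$ incident to auxiliary vertices behave well under $\phi'$. The other subtlety is verifying the retraction property in both directions without inflating the stretch: in particular ensuring that anchors map to themselves in the backward direction, which relies on the spacing guarantees between anchors after the contractions of step~2 of the algorithm and the precise $O(1)$ constants in Lemma~\ref{const_dist}.
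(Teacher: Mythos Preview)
Your backward direction is essentially the paper's argument and is fine. The forward direction, however, has a real gap.

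You propose to send an auxiliary vertex $w\in V(G')\setminus V(\bar G)$ (not on $H$) to a nearest vertex of $H$ in $G'$, asserting this is ``$O(1)$ away by construction of $G'$''. That assertion is unjustified and in general false. The auxiliary vertices are the subdivision points of edges of $\bar G$; an edge $(X,Y)\in E(\bar G)$ is replaced by a path of length $\lfloor kn\,w(X,Y)/2\rfloor$ (or $k^2n/2$), so a midpoint $w$ can be at $G'$-distance $\Theta(kn\,w(X,Y))$ from both $X$ and $Y$, and there is no reason any vertex of $H$ lies near that subdivided path. Even if you drop the $O(1)$ claim, the map ``nearest vertex of $H$'' is not $1$-Lipschitz on $G'$: two adjacent auxiliary vertices can have nearest $H$-vertices that are far apart on $H$, so the per-edge stretch can blow up. Your sketched triangle-inequality bound against ``appropriate nearby $u',w'\in V(\bar G)$'' collapses for the same reason: when $u$ and $w$ are adjacent interior points of the same subdivided edge, the only nearby $\bar G$-vertices are $X$ and $Y$ themselves, which are not $O(1)$ away.

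The fix is exactly what the paper does: after choosing $\phi'(X)$ and $\phi'(Y)$ on $H$ (each within $O(1)$ of $\bar\phi(X),\bar\phi(Y)$ via Lemma~\ref{const_dist}), place the auxiliary vertices of the subdivided $(X,Y)$-path \emph{uniformly along the shortest $H$-path from $\phi'(X)$ to $\phi'(Y)$}. Then each unit edge $(P,Q)$ on that path satisfies
\[
d_H(\phi'(P),\phi'(Q))\ \le\ \Big\lceil \frac{d_H(\phi'(X),\phi'(Y))}{d_{G'}(X,Y)}\Big\rceil
\ \le\ \frac{d_{\bar G}(\bar\phi(X),\bar\phi(Y))+O(1)}{\Theta(d_{\bar G}(X,Y))}\ =\ O(s),
\]
which is the bound you want. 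With this change, the rest of your outline (including the handling of the retraction property) lines up with the paper's proof.
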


\begin{proof}
By graph $G'$'s construction, unweighted planar graph $G'$ has graph $\bar{G}$'s vertices plus some extra auxiliary vertices ($V(\bar{G}) \subset V(G'))$. 

($\Rightarrow$) Let $\bar{\phi}$ be a $s$-stretch retraction from $\bar{G}$ to anchors. We define retraction $\phi'$ from unweighted graph $G'$ to $H$ in the following way: For $X \in V(\bar{G})$, $\phi'(X)$ is the closest vertex on $H$ to $\bar{\phi}(X)$. Using lemma \ref{const_dist}, the distance between $\phi'(X)$ and $\bar{\phi}(X)$ is $O(1)$. 
For each pair of vertices $X,Y\in V(\bar{G})$, retract auxiliary vertices of $V(G')-V(\bar{G})$ between  $X$ and $Y$ uniformly on shortest path on $H$ between $\phi'(X)$ and $\phi'(Y)$. Now consider an arbitrary edge $e(P,Q) \in E(G')$, and let $X$ and $Y$ be two vertices of $V(\bar{G})$ that edge $e$ is on the path between $X$ and $Y$ in graph $G'$.  We bound the stretch of edge $e$ as follows. 
\begin{eqnarray*}
\frac{d_{H}(\phi'(P), \phi'(Q))}{1} \leq \frac{d_{H}(\phi'(X),\phi'(Y))}{d_{G'}(X,Y)} \leq \frac{d_{\bar{G}}(\bar{\phi}(X), \bar{\phi}(Y)) + O(1)} {1/2 \cdot d_{\bar{G}}(X,Y)} \leq 2 s 
\end{eqnarray*}

($\Leftarrow$) Let $\phi'$ be a $s$-stretch retraction from $G'$ to cycle $H$. We define $\bar{\phi}$ to be the retraction from $\bar{G}$ to anchors in the following way: 
For $X\in V(\bar{G})$, let $\bar{\phi}(X)$ to be the closest anchor to $\phi'(X)$. Using lemma \ref{const_dist}, the euclidean distance between $\bar{\phi}(X)$ and $\phi'(X)$ is $O(1)$. Now consider an arbitrary edge $e(X,Y)\in E(\bar{G})$.  We bound the stretch of $\bar{\phi}$ for edge $e$ as follows. 
\begin{eqnarray*}
\frac{d_{\bar{G}}(\bar{\phi}(X), \bar{\phi}(Y))}{w(e)} & \leq & \frac{d_{H}(\phi'(X), \phi'(Y)) + O(1)}{d_{G'}(X,Y)} \leq O(s)
\end{eqnarray*}
where $d_H$ denote the distance of vertices over the cycle $H$. 
\end{proof}

\begin{theorem}
The above algorithm yields a retraction from $V$ to the anchors with stretch within a constant factor of the optimal.
\end{theorem}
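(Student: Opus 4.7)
The plan is to chain together the equivalences established by Lemmas~\ref{lemma_ghat_v}, \ref{lemma_ghat_gbar}, and \ref{lemma_gbar_gprim}, and then invoke the exact planar-graph retraction algorithm of Theorem~\ref{thm:planar} on the planar graph $G'$. Let $s^*$ denote the optimal stretch for retracting the point set $V$ to the anchors $A$.

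First I would argue the existence of a good retraction on $G'$. By Lemma~\ref{lemma_ghat_v}, any optimal retraction of $V$ to $A$ transfers to a retraction of $\widehat{G}$ to $A$ with stretch $\Theta(s^*)$. Applying Lemma~\ref{lemma_ghat_gbar} gives a retraction of $\bar{G}$ to $A$ with stretch $\Theta(s^*)$, and then Lemma~\ref{lemma_gbar_gprim} gives a retraction of the unweighted planar graph $G'$ to the cycle $H$ with stretch $\Theta(s^*)$. Thus the optimal stretch for the planar instance $(G', H)$ is at most $\Theta(s^*)$.

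Next I would run the algorithm of Theorem~\ref{thm:planar} on $(G', H)$: since $G'$ is planar by construction (each step---Delaunay triangulation, edge contraction, and the subdivision used to build $G'$---preserves planarity), this yields in polynomial time a retraction $\phi'$ of $G'$ to $H$ whose stretch equals the optimal stretch for that instance, hence is $\Theta(s^*)$. I then invoke the ``$\Leftarrow$'' directions of Lemmas~\ref{lemma_gbar_gprim}, \ref{lemma_ghat_gbar}, and \ref{lemma_ghat_v} in sequence to convert $\phi'$ first into a $\Theta(s^*)$-stretch retraction of $\bar{G}$ to $A$, then of $\widehat{G}$ to $A$, and finally of the original point set $V$ to $A$. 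Each conversion is constructive and runs in polynomial time, so the composed algorithm is polynomial.

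The only point that requires care is tracking the multiplicative constants through the chain: each of the three lemmas blows up the stretch by a bounded factor (from the Delaunay spanning ratio $2.5$, from the $2/k$ slack introduced by contractions, and from the $O(1)$ detours of Lemma~\ref{const_dist}). Since all three factors are absolute constants independent of $n$ and $k$, their product is still $O(1)$, which yields the claimed constant-factor approximation. The main obstacle is really bookkeeping: ensuring that the ``anchor vertex'' of $H$ chosen as $\phi'$'s image can be matched back to an actual anchor of $A$ without paying more than $O(1)$ extra stretch, but this is exactly what Lemma~\ref{const_dist} guarantees, and it is used inside Lemma~\ref{lemma_gbar_gprim} as well.
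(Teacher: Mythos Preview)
Your proposal is correct and matches the paper's approach exactly: the paper's own proof is a single sentence that cites Lemmas~\ref{lemma_ghat_v}, \ref{lemma_ghat_gbar}, and \ref{lemma_gbar_gprim}, and your sketch simply unpacks the forward and backward directions of that chain and makes explicit the role of the planar algorithm (which is Step~5 of Algorithm~\ref{alg_eudliean_points_to_cycle}). Your added discussion of the constant blow-ups and of Lemma~\ref{const_dist} is accurate and already implicit in the proofs of those lemmas.
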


\begin{proof}
Applying lemmas \ref{lemma_ghat_v}, \ref{lemma_ghat_gbar}, \ref{lemma_gbar_gprim} immediately implies that retraction $\phi$ from set $V$ to anchors has stretch within a constant factor of the optimal. 
\end{proof}

\junk{
\subsection{Retracting points in a disk to its boundary} 
\label{sec:2d_euclidean.disk}
In this section, we present an upper bound on the optimal stretch incurred in retracting $n$ arbitrary points inside a two-dimensional disk onto the boundary of the disk.  In Section~\ref{sec:general}, we use this upper bound in the analysis of our general graph algorithm. 

Let $S$ denote a disk of unit radius and let $V$ denote a set of $n$ points in the disk, $k$ of which are uniformly placed on the boundary of the disk.  Let $D$ denote the largest ball inside $S$ that does not contain any point in $V$. Let $r$ denote the radius of $D$.  For any point $v$, let $R(v)$ denote the ray originating from the center of $D$ passing through $v$.  We define the {\em projection embedding}\/ $\pi$ as follows: for each point $v$, $\pi(v)$ is the anchor nearest in the clockwise direction to the intersection of $R(v)$ with the boundary of the disk.  

\begin{theorem} \label{disk_embedding_theorem}
Suppose $r \leq 1$ is the radius of largest empty ball $D$ inside the unit disk $S$. Then for any points $X$ and $Y$ in $S$ but not in the interior of $D$, the distance between $\pi(X)$ and $\pi(Y)$ is at most $O(1/k)$ more than $O(1/r)$ times the distance between $u$ and $v$.    
\end{theorem}
\begin{proof} We first argue that it is sufficient to establish the claim for points on the boundary of $D$.  Let $X$ and $Y$ be two arbitrary points.  in $S$ but not in the interior of $D$.  Let $X'$ (resp., $Y'$) denote the intersection of $R(X)$ (resp., $R(Y)$) and the boundary of ball $D$.  
From elementary geometry, it follows that $d(X',Y') \le d(X,Y)$.  Since $\pi(X) = \pi(X')$ and $\pi(Y) = \pi(Y')$, it follows that establishing the statement of the theorem for $X'$ and $Y'$ would imply the same for $X$ and $Y$.  

Consider an arbitrary pair of points $X$ and $Y$ on the boundary of ball $D$. Let $A$ and $B$ denote their mapped points on the boundary of the disk, $C$ denote the center of ball $D$, $O$ denote the center of the disk, and  $\alpha$ denote the ratio $\frac{dist(A,B)}{dist(X,Y)}$. 
By $dist(A,B)$, we mean length of segment $AB$. 

We need to show that $\alpha = O(\frac{1}{r})$. 
Let $\theta$ be the angle $\widehat{AOB}$, and $\theta'$ be the angle $\widehat{XCY}$. Then the length of arc $AB$ is $\theta$, and we know that $dist(A,B) \leq \theta$. Also the length of arc $XY$ is $r\theta'$, and it is easy to verify that for any $0 < \theta' \leq \pi$, we have $\frac{2}{\pi} r \theta' \leq dist(X,Y)$. Extend segment $AC$ from $C$ and let $C'$ be the point that line $AC$ hits boundary of disk. It is easy to verify that $\widehat{AC'B} = \frac{\theta}{2}$, and $\widehat{ACB} > \widehat{AC'B}$. Thus, $\theta' \geq \frac{\theta}{2}$. Now, we have 
\[
\alpha = \frac{dist(A,B)}{dist(X,Y)} \leq \frac{\theta}{\frac{2}{\pi}r\theta'} \leq \frac{\pi}{r} 
\]
This completes the proof that any pair of points on the boundary of ball $D$ will get stretched by a factor $O(\frac{1}{r})$. 
\end{proof}

\begin{lemma} Suppose $r \leq 1$ is the radius of largest empty ball inside the unit disk. Then the optimal embedding has a stretch of $\Omega(\frac{1}{r})$. 
\end{lemma}
\begin{proof}
Inside the unit disk, we draw circles of radius $2r$ such that each circle is tangent to six other circles, and begin with a circle which is concentric with the unit disk. We draw circles until there is no circle that has overlap with the unit disk. For each circle $C$ which is entirely inside the disk, from the assumption of lemma there exists at least one point in this circle, pick one arbitrary point. Let $R$ be set of these points. For each circle $C$ which is not entirely inside the unit disk, take that part of unit disk's boundary that goes through circle $C$, and pick the middle point. Let $B$ be the set of these points.  Let $P = R \cup B $.  
Draw a straight-line edge between each pair of points in $P$ which their corresponding circles are tangent. Note that the length of each edge is at most $8r$. 

Adding these edges, triangulates inside the unit disk. Now, we partition boundary of the unit disk into three equal parts. For each of the parts pick a distinct color and color the points in $B$ based of the parts the belong to. Using Sperner's lemma, any arbitrary coloring of the points in $R$ yields a trichromatic triangle, and this implies that at least one side of the trichromatic triangle is stretched by a factor of $\Omega(\frac{1}{r})$. Note that each embedding function immediately gives the coloring labels, thus the results for an arbitrary coloring can be applied for any embedding function as well. This completes the proof that the optimal embedding has a stretch of $\Omega(\frac{1}{r})$.  
 
\end{proof}
}

\junk{
introduce an embedding problem in geometric space where the anchors are placed evenly on a cycle, and show a reduction from the geometric case to the planar case.

\begin{definition}
Consider a set $A$ of $k$ anchor points evenly distributed on the boundary of a circle. W.l.o.g we assume that adjacent anchors are at 1 unit of Euclidean distance from each other. Also, there are  a set $N$ of $n$ points inside the circle distributed arbitrarily. Let set $V$ be $A \cup N$. The goal is to find an embedding 
$E: V \rightarrow A$ such that $E(a) = a $ for any $a\in A$ and the maximum stretch of distance of any pair of points in $V$ is minimized. We call this problem geometric embedding. 
\end{definition}
}

\junk{
\subsection{Reduction Steps} Let $P$ be an instance of a geometric embedding problem. First step is to take Delaunay triangulation of points in $V$, and construct weighted graph $G$ by adding edges from triangulation and $k$ arcs from circle between consecutive anchors. The construction of graph $G$ implies that it is a planar graph. Weight of an edge $e(u,v) \in E(G)$ is defined as the geometric distance of $u$ and $v$. 
\begin{lemma} \label{distance_approx}
For any pair $u, v$ of vertices in $V$,  $dist_P(u,v) \leq dist_G(u,v) \leq c * dist_P(u,v)$, where $c$ is a constant number at least 1, and $dist_G(u,v)$ and $dist_P(u,v)$ are respectively shortest path distance over graph $G$ and Euclidean distance in $P$.
\end{lemma}

\begin{proof}
\end{proof}

In the second step, we build a graph $G_s$. 
For a given $s \geq 1 $, graph $G_s$ is obtained from $G$ in the following way: 
\begin{itemize}
\item Contract any edge with weight strictly less than $\frac{1}{s}$. \item Replace any edge of weight $L\geq \frac{2}{s}$ with a sequence of $\lfloor{Ls} \rfloor$ unweighted edges. Note that in this case, graph $G_s$ has some auxiliary vertices.
\item Keep the edge with smallest weight among the parallel edges.
\end{itemize}
}

\junk{

\begin{lemma} \label{red_first}
There exists an $s$-stretch embedding for $G'$ iff graph $G$ has a $\Theta(s)$-stretch embedding. 
\end{lemma}

\begin{proof}
($\Longrightarrow$) Let assume that $G'$ has an embedding function $E_{G'}$ with stretch at most $s$. Let $E_G$ be the following embedding function for graph $G$: $E_G(v) = E_{G'}(v)$ for any $v \in V$. Now we show that any edge $e(u,v)$ in graph $G$ gets stretch by a factor no more than $\Theta(s)$, using embedding function $E_G$. Let $a_u$ and $a_v$ be the anchor points $u$ and $v$ are mapped to (in both function $E_{G'}$ and $E_G$). We know that
\[
\frac{dist_{G'}(a_u, a_v)}{dist_{G'}(u,v)} \leq s
\]

Now we look at stretch of edge $(u,v)$: 
\[
\frac{dist_C(a_u,a_v)}{w(u,v)} \leq   \frac{c * dist_{G'}(a_u,a_v)}{dist_{G'}(u,v)} \leq c * s
\]
where $dist_C(a_u, a_v)$ is the distance between $a_u$ and $a_v$ on the cycle, and $c$ is some constant number greater than 1. ({\bf Maybe I need to explain more!})

($\Longleftarrow$) Assume that there exists an embedding function $E_G$ for graph $G$ with stretch at most $s$. We define embedding function $E_{G'}$ as following: $E_P(v) = E_{G'}(v)$ for $v \in V$. 

For any edge $e=(u,v)$, we have the following: 
\[
\frac{dist_C(a_u, a_v)}{w(u,v)} \leq s
\]

Now Consider an arbitrary pair of vertices $x$ and $y$ in $V$. 
Let $x = v_1, v_2, ..., v_k = y$ be the shortest path between $x$ and $y$ on graph $G$, then $dist_G(x,y) = w(v_1, v_2) + ... + w(v_{k-1}, v_k)$. Let $a_x, a_{v_2}, ..., a_{v_k}$ be the sequence of anchors which vertices $x = v_1, v_2, ..., v_k = y$ are respectively mapped to.  

Using lemma \ref{distance_approx}, 

\begin{align} \notag
\frac{dist_{G'}(a_x, a_y)}{dist_{G'}(x,y)} 
\leq
c * \frac{dist_C(a_x, a_y)}{dist_G(x,y)}
& \leq 
c * \frac{dist_C(a_x, a_{v_2}) + ... + dist_C(a_{k-1}, y)}{w(v_1, v_2) + ... + w(v_{k-1}, v_k)}
\\ \notag
& \leq  c * 
\max_i{\frac{dist_C(a_{v_i},a_{v_{i+1}})}{w(v_i, v_{i+1})}} 
\\ \notag
& \leq c * s
\end{align} 

\end{proof}

\begin{lemma} \label{red_second}
Graph $G$ has an $s$-stretch embedding iff graph $G_s$ has a $1$-stretch embedding. 
 \end{lemma}
  
Before proving the above lemma, we remark the following observation.

\begin{observation} \label{observ_1}
Let $u$ and $v$ be two adjacent vertices with an edge of length strictly less than $\frac{1}{s}$ in $G$ which are merged into a single vertex $w$ in $G_s$. In order to have a $s$-stretch embedding for graph $G$, $u$ and $v$ must be embedded to the same anchor vertex. It is easy to generalize this such that any set of vertices in $G$ which are merged into a singlae vertex $w$ in $G_s$, must be mapped to same anchor vertex in order to graph $G$ has a $s$-stretch embedding. 
\end{observation}
\begin{proof}
($\Longrightarrow$) Let assume that there exists an embedding $E_G: V \rightarrow A$ of graph $G$ with stretch at most $s$. We first derive an embedding $E_s: V(G_s) \rightarrow A$ for $G_s$, then we show that embedding $E_s$ has stretch at most $1$. 
For a vertex $u \in V(G_s)$, we have three cases: 
\begin{itemize}
\item vertex $u$ is resulted by merging a set $S$ of vertices of graph $G$. Using observation \ref{observ_1}, we know that all vertices in $S$ are mapped to a single anchor vertex, call it $a$. We define $E_s(u) = a$. 

\item vertex $u$ is associated with a similar vertex $u_G$ in graph $G$. In this case, $E_s(u) = E(u_G)$. 

\item vertex $u$ is an auxiliary vertex in $G_s$ created by replacement of edge $(x,y) \in E(G)$. Let $L \geq 2/s$ be the length of edge $(x,y)$. So by replacing edge $(x,y)$, there are $\lfloor Ls \rfloor$ series of edges and $\lfloor Ls \rfloor - 1$ vertices between $x$ and $y$. Let $a_x$ and $a_y$  be two anchors such that $E(x) = a_x$ and $E(y) = a_y$. Since $E$ is an embedding with stretch at most $s$, the distance between two anchors $a_x$ and $a_y$ is at most $\lfloor Ls \rfloor$.
Embed vertex $u$ along the path of anchors on the boundary between $a_x$ and $a_y$ such that no edge gets stretch by a factor greater than 1. Note that if the distance of two anchors $a_x$ and $a_y$ is less than 
$\lfloor Ls \rfloor$, you may need to map some of auxiliary vertices to same anchor. 
 \end{itemize}
 
Now we show that embedding $E_s$ has stretch at most $1$. It is sufficient to show that endpoints of any edge in graph $G_s$ are mapped to anchors with distance at most $1$. {\bf It is easy to verify that stretch is at most one.}

($\Longleftarrow$) Assume that there exists an $1$-stretch embedding $E_s$ of graph $G_s$. We define embedding $E_G$ of graph $G$ as following: For any vertex $u \in V(G)$, find the associated vertex $u_s$ in graph $G_s$. Define $E_G(u) = E_s (u_s)$. Now we show embedding $E_G$ has stretch at most $s$. Consider an edge $(u,v) \in E(G)$ of length $L$, and look at associated vertices of $u$ and $v$ in $V(G_s)$. If the associated vertices of $u$ and $v$ are the same vertex in $V(G_s)$, this means that $u$ and $v$ are mapped to the same anchor, so this edge has stretch of $0$. 

If the associated vertices of $u$ and $v$ in $G_s$ are different, then we know that edge $(u,v)$ is replaced by $\lfloor Ls \rfloor$ series of edges in $G_s$. Since each of these edges are stretch by a factor of at most 1, the distance of anchors $a_u, a_v$ can not be greater than $\lfloor Ls \rfloor$, thus the stretch of such edge in $G$ is by factor at most $\frac{\lfloor Ls \rfloor}{L} \leq s$. 
\end{proof}

\begin{theorem}
In this section, we show a reduction 
There exists a $s$-stretch embedding of $P$ iff $G_s$ has a $\Theta(1)$-stretch embedding. 
\end{theorem}

\begin{proof} Applying lemmas \ref{red_first} and \ref{red_second}, we have proof for the theorem. 
\end{proof}

}

\begin{algorithm}[t]
\caption{Algorithm for retracting points on a 2-D Euclidean plane to a uniform cycle}
\label{alg_eudliean_points_to_cycle}
\begin{algorithmic}
\Input {Points set $V$ and Anchors set $A \subset V$}
\Output {Retraction function $\phi$ from $V$ to $A$}
\State {1. {\bf Construct a Planar Spanner:} Set weighted graph $\widehat{G}$ to be a planar Euclidean distance spanner for the points set $V$, with the weight $w(e)$ of an edge $e = (u,v)$ denoting the Euclidean distance between the two vertices $u$ and $v$.}
\\
\State {2. {\bf Contract very small edges: }Set weighted graph $\bar{G}$ to be the graph obtained from $\widehat{G}$ by following steps. }
\State{ \indent 2.1. Contract all edges of weight less than $\frac{2}{kn}$.}
    \State{ \indent 2.2. Remove self loops, and among parallel edges, remove all but the shortest edge. }
\\   
\State {3. {\bf Convert to unweighted graph: }Construct unweighted graph $G'$ as follows.} 
	
    \State{ \indent  For each remaining edge $e \in E(\bar{G})$:}
	\State{
    \indent \indent If $w(e) \ge k$, we replace it by a new path consisting of $k^2n/2$ edges. }
    \State{
     \indent \indent Else, replace it by a new path consisting of $\lfloor kn w(e)/2 \rfloor$ edges.}

\\
\State {4. {\bf Construct cycle $H$: } Construct cycle $H$ from underlying graph $G'$ as follows.} 

\State{\indent 4.1. Set $P(2,3)$ to be the shortest path from anchor $2$ to anchor $3$.}
 \State{\indent 4.2. For $4 \leq i \leq \lfloor k/2 \rfloor - 2$:} 
 \State{\indent \indent 4.2.1. Set $v(i)$ to be the closest vertex on path $P(2, i-1)$ to anchor $i$.}
 
\State{\indent \indent 4.2.2. Set $P(2,i)$ to be the union of the portion of $P(2, i-1)$ that connects anchor $2$ to $v(i)$ and shortest path from $v(i)$ to $i$ (see figure \ref{fig_const_path}).}
 
\State{\indent 4.3. Set $H_1$ to be $P(2,\lfloor k/2 \rfloor-2)$.}
 
\State{\indent 4.4. Set $P(k-2, k-3)$ to be the shortest path from anchor $k-2$ to anchor $k-3$.}

\State{\indent 4.5. For $k-4 \geq i \geq \lfloor k/2 \rfloor + 2$:} 

\State{\indent \indent 4.5.1. Set $v(i)$ to be the closest vertex on path $P(k-2, i+1)$ to anchor $i$.}

\State{\indent \indent 4.5.2. Set $P(k-2,i )$ to be the union of the portion of $P(k-2, i+1)$ that connects anchor $k-2$ to $v(i)$ and shortest path from $v(i)$ to anchor $i$ (see figure \ref{fig_const_path}).}

\State{\indent 4.6. Set $H_2$ to be $P(k-2,\lfloor k/2 \rfloor+2)$.} 

\State{\indent 4.7. Consider shortest path from anchor $2$ to anchor $k-2$, let $a$ and $b$ respectively be the last and first (respectively) vertex on the path that intersects $H_1$ and $H_2$ (respectively), let $P_{ab}$ be that portion of the shortest path that connects vertex $a$ and $b$.  
Similarly, consider shortest path from anchor $\lfloor k/2\rfloor-2$ to anchor $\lfloor k/2 \rfloor + 2$, let $c$ and $d$ respectively be the last and first (respectively) vertex on the path that intersects $H_2$ and $H_1$ (respectively), let $P_{cd}$ be that portion of the shortest path that connects vertex $c$ and $d$ (see figure \ref{fig_const_H}).
}
 
 \State{\indent 4.8. Set $H$ to be the cycle forms by union of $H_1$, $P_{ab}$, $H_2$, $P_{cd}$.}
 
\\
\State{5. Set $\phi'$ to be the optimal retraction of $G'$ to $H$, obtained by algorithm \ref{alg:planar-retraction}}.  
\\
\State{6. For each vertex $v \in V$, 
set $\phi(v)$ to be the closest anchor to  $\phi'(v)$ in terms of the actual Euclidean distance.}

\Return $\phi$
\end{algorithmic}
\end{algorithm}

\junk{
The integrality gap of the SDP is also $\Omega(\sqrt{n})$. 
 
\begin{enumerate}
\item
Solve the SDP relaxation above.
\item
Determine the plane that passes through the points $v_1, v_2, \ldots, v_k$.  All of these points are located on a circle on this plane at distance $k$ from the origin (using lemma \ref{k_points_lemma}). 
\item
Project all of the other points $v_{k+1}, \ldots, v_n$ to this plane.
\item
Determine a point $p$ such that none of the $v_i$'s lie in a ball of
radius $O(k/\sqrt{n})$ around $p$.
\item
For $i \in [k+1,n]$, map each point $i$ to the anchor $j \in [1,k]$
that is nearest to the intersection of the ray originating from $p$
going through $v_i$.
\end{enumerate}

\begin{theorem}
The above algorithm yields an $O(\sqrt{n})$-approximation algorithm.
\end{theorem}
\begin{proof}
Since the integer mapping is a valid SDP solution, the value of the
SDP is at most the optimal stretch $s^*$.  By  lemma \ref{k_points_lemma}, all of the anchor vertices are embedded equally distant on a circle of radius
$k$ in a plane.  This circle is a giant circle in the ball of radius
$k$ around the origin in $\Re^n$.  Any projection of the points from
this ball to this disk can only decrease the distance between them.
So if $v'_i$ and $v'_j$ are the projections of $v_i$ and $v_j$, then
$|v'_i - v'_j| \le |v_i - v_j| \le s^* $.

We now justify that a point $p$ as desired in the fourth step exists.
The disk is of radius $k$, and has at most $n$ points $v'_i$ in it.
Since we can pack greater than $n$ balls of radius
$\Omega(k/\sqrt{n})$ in a disk of radius $k$, it must follow that
there is an empty ball of radius $\Omega(k/\sqrt{n})$.  Applying disk embedding algorithm and theorem \ref{disk_embedding_theorem} in Section \ref{sec:2d_euclidean.disk}
for mapping points to the boundary, we find that the ray
projection procedure incurs a stretch of $O(\sqrt{n})$. Rounding to
the nearest anchor incurs only an additive factor of 1, yielding an
$O(\sqrt{n})$ approximation. 
\end{proof}
}

\section{Retraction of bounded treewidth graphs}
\label{sec:treewidth}

In this section, we show that minimum retraction of a bounded treewidth graph to any of its subgraph can be obtained optimally and in polynomial time with respect to size of graph and its subgraph.  
Let $G$ be a graph with bounded treewidth $w$, and $H$ be a subgraph of $G$. We call the vertices of $H$ anchors. Let $k = |V(H)|$. 

In the following, we give a dynamic programming based algorithm which outputs a stretch-$1$ embedding of graph $G$ to $H$ if any exists. 

\BfPara{Algorithm for $1$-stretch retraction} Consider a nice tree decomposition $T$ of graph $G$. 
Nice tree decomposition is a tree decomposition which is a rooted binary tree with four types of vertices called Leaf, Introduce, Forget, Join\footnote{There is an algorithm that converts a tree decomposition to a nice tree decomposition with at most $O(nw)$ bags and in time $O(nw^2)$ \cite{Bodlaender:1993:LTA:167088.167161}. A decomposition of this type is used very often to solve a graph optimization problem with dynamic programming technique. }.   

Let $E(X, f)$ be a binary function that takes a vertex (also called bag) of $T$ and a retraction function $f: X \rightarrow H$ as inputs. If it outputs 1, it means the retraction function $f$ over vertices in $X$ gives a stretch at most 1. Take an arbitrary  vertex $R$ of $T$ as a root and solve $E(X,f)$ for vertices of $T$ in a bottom-up manner. Consider following cases:  

\begin{itemize}
\item {\bf Leaf}: Vertex $X$ is a leaf of $T$. Then $E(X, f) = 0$, if for some anchor vertex $a \in X$, $f(a) \neq a$ or for some pair of adjacent vertices $u,v$ of graph $G$ which $u, v\in X$, $dist_H((f(u), f(v)) > 1$, otherwise $E(X,f) = 1$.

\item {\bf Introduce}: Vertex $X$ has only one child $Y$, and $|X-Y| = 1$ and $Y \subset X$. Let $x = X-Y$. In this case, $E(X,f) = E(Y, f') \land C$, where $f'$ is restriction of function $f$ to domain $Y$, and $C$ is $0$ if either $x$ is an anchor which $f(x)\neq x$ or if for some neighbor $y$ of $x$ in $G$ that $y\in Y$, $dist_{H} (f(x), f(y)) > 1$, otherwise it is 1.  

\item {\bf Forget}: Vertex $X$ has only one child $Y$, which  $X \subset Y$ and $|Y-X| = 1$. Then $E(X,f) = \lor_{f'} E(Y, f')$, $f'$ is any function with domain $Y$ such that for all $x \in X$, $f'(x) = f(x)$. 

\item {\bf Join}: Vertex $X$ has two children $Y_1, Y_2$, which $Y_1$ and $Y_2$ are identical to $X$. Then $E(X,f) = E(Y_1, f) \land E(Y_2, f)$. 

\end{itemize}

Since a nice tree decomposition has at most $O(nw)$ nodes, there are at most $O(nwk^w)$ subproblems, and each subproblem can be computed in time $O(n^2)$ using dynamic programming. So the total running time to find (if any exists) a retraction function $f$ that $E(R,f) = 1$ is $O(n^3 w k^w)$. Since the graph has a bounded treewidth, the retraction function can be extracted in polynomial time with respect to size of graph and its subgraph.  

\BfPara{Algorithm for the general case} To find the optimal retraction of a bounded treewidth graph $G$ to its subgraph $H$, we use the algorithm for $1$-stretch retraction as a subroutine in the following way; Let $G_l$ be the graph which is obtained by replacing any edge $e \in E(G)\backslash E(H)$, by a path of size $l$. 
In lemma \ref{equal_tw}, we show that $G_l$ has the same treewidth as $G$.  
Size of $G_l$ is at most $n + ml$, where $m$ is the number of edges of $G$, and $l$ is no more than $k$. Thus the size of $G_l$ is $O(kn^2)$. It is easy to verify that if retraction of $G_l$ to $H$ has stretch-$1$, then retraction of $G$ to $H$ has stretch of at most $l$. Thus, the optimal retraction of a bounded treewidth graph $G$ to its subgraph $H$, can be obtained by finding smallest $1 \leq l \leq k$ such that retraction of $G_l$ to $H$ has stretch $1$ using algorithm for $1$-stretch retraction. The running time of algorithm is $O(n^6 k^{w+4}w)$. 

\begin{lemma} \label{equal_tw} Let $G_l$ be the graph obtained by replacing each edge of $G$ with a path of length $l$. Treewidth of $G_l$ is the same as $G$. 
\end{lemma}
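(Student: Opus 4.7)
The plan is to prove equality by establishing both inequalities. The easy direction, $\text{tw}(G) \le \text{tw}(G_l)$, follows immediately from the fact that $G$ is a minor of $G_l$: contract each length-$l$ subdivided path back into a single edge to recover $G$. Since treewidth is minor-monotone, the inequality is immediate. So the work is in the reverse direction.

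For $\text{tw}(G_l) \le \text{tw}(G)$, the plan is to take any width-$w$ tree decomposition $\mathcal{T}$ of $G$ and lift it to a tree decomposition of $G_l$ of the same width. Assume $w \ge 1$; the $w = 0$ case is trivial since then $G$ has no edges and $G_l = G$. For each edge $e = (u,v) \in E(G)$, fix a bag $B_e$ of $\mathcal{T}$ containing both endpoints; such a bag exists by the edge-coverage axiom of tree decompositions. In $G_l$, the edge $e$ is replaced by a path $u = x_0^e, x_1^e, \ldots, x_l^e = v$. To $B_e$ I would attach a pendant chain of new size-$2$ bags
\[
\{x_0^e, x_1^e\},\ \{x_1^e, x_2^e\},\ \ldots,\ \{x_{l-1}^e, x_l^e\},
\]
linked linearly, with the first bag adjacent to $B_e$. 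Different edges' chains hang independently off their respective bags, so the result is still a tree.

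The verification is routine: every subdivided edge $(x_{i-1}^e, x_i^e)$ lies in its own bag; every subdivision vertex $x_i^e$ appears in exactly two consecutive bags of a single chain, which is connected; and every original vertex $u$ occurs in its original connected subtree of $\mathcal{T}$ plus, for each edge $e$ incident to $u$, one extra leaf bag $\{u, x_1^e\}$ adjacent to $B_e$, which preserves connectivity since $u \in B_e$. Every bag has size at most $\max(w+1, 2) = w+1$, so the lifted decomposition has width $w$ as required.

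The main thing to be careful about is the connectivity axiom for the original endpoints $u$ and $v$ of each subdivided edge — if two edges $e, e'$ incident to $u$ were anchored at different bags $B_e, B_{e'}$, we still need $u$'s occurrence set to be connected, but this is automatic because both $B_e$ and $B_{e'}$ already lie in the connected subtree of $\mathcal{T}$ induced by $u$, and the new pendant bags containing $u$ hang off these bags as leaves. Beyond this bookkeeping, the argument is purely combinatorial and presents no real technical obstacle.
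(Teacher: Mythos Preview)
Your minor argument for $\text{tw}(G) \le \text{tw}(G_l)$ is clean and correct. The other direction, however, has a real gap. In your pendant chain for an edge $e = (u,v)$ with $u = x_0^e$ and $v = x_l^e$, the vertex $v$ appears in $B_e$ (from the original decomposition) and in the terminal bag $\{x_{l-1}^e, v\}$ at the far end of the chain, but in none of the intermediate bags $\{u,x_1^e\}, \{x_1^e,x_2^e\}, \ldots, \{x_{l-2}^e,x_{l-1}^e\}$. Since the chain meets the rest of the tree only through its first bag, the set of bags containing $v$ is disconnected, violating the coherence axiom. Your verification only checks connectivity for the near endpoint $u = x_0^e$; the sentence ``for each edge $e$ incident to $u$, one extra leaf bag $\{u, x_1^e\}$ adjacent to $B_e$'' is simply false when the vertex in question plays the role of $x_l^e$ for that edge (and that first bag is not a leaf of the new tree in any case when $l \ge 2$).

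The standard repair is to carry $v$ through the whole chain, using bags $\{u,v,x_1^e\}, \{v,x_1^e,x_2^e\}, \ldots, \{v,x_{l-2}^e,x_{l-1}^e\}$ of size $3$; this yields width $\max(w,2)$, and the case $w \le 1$ is handled separately since subdividing a forest gives a forest. For comparison, the paper sidesteps this bookkeeping entirely by using the chordal-completion characterization of treewidth for this direction: take a chordal supergraph $K$ of $G$ realizing $\text{tw}(G)$, replace the edge $(u,v)$ by the path, and make $u$ adjacent to every subdivision vertex; one then argues the result is chordal with no larger maximum clique.
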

\begin{proof}

In order to prove the lemma, it is enough to show that replacing a single edge $e=(u,v)$ with a path of length $l$ does not change the treewidth. 
Let $tw(G)$ denote treewidth of graph $G$, and let $G'$ be the graph obtained by replacing an arbitrary edge $e=(u,v)$  of $G$ with a path $P: u, u_1, ..., u_l = v$ of length $l$. 

We show that $tw(G) \leq tw(G')$. Consider an arbitrary tree decomposition of $G'$, in any bag that has $u_i$, $1 \leq i \leq l$, replace it with $v$. In this way, the size of no bag has increased. Also, it is easy to verify that the new decomposition is a valid tree decomposition for $G$. It implies that $tw(G) \leq tw(G')$.   

To show that $tw(G') \leq tw(G)$, we use another equivalent definition of treewidth:
The minimum of maximum clique size among all chordal graphs obtained from graph $G$ minus one is the treewidth of the graph $G$. 
Take any chordal graph $K$ obtained from $G$, replace edge $e=(u,v)$ by path $P$. Add edges from $u$ to $u_i$ for $1 \leq i \leq l$. Call this new graph $K'$. It is easy to verify that $K'$ is a valid chordal graph obtained from $G'$, and the size of maximum clique in $K'$ is the same as $K$. This implies that $tw(G') \leq tw(G)$. 
\end{proof}
\section{Hardness of retracting a weighted graph to an arbitrary metric}
\label{app:hardness}
In this section, we show that the problem of retracting a {\em weighted}\/ graph to an arbitrary metric over a given subset of the vertices so as to  minimize the maximum stretch is hard to approximate within a factor of $O((\log n)^{1/4-\epsilon})$ for any $\epsilon > 0$, unless $NP \subseteq DTIME(n^{poly(\log{n})})$ where $n$ is the total number of vertices in the graph and the metric.  Our argument follows the hardness proof for the $0$-extension problem due to \cite{Karloff:2009:EDM:1654887.1654889}.
In fact, the construction of the hard instance is almost identical, and the analysis differs primarily to ensure the argument extends to the maximum stretch objective.  For completeness, we present all the essential details. 

Our proof is by a reduction from Max-3SAT(5) problem.  An instance of MAX-3SAT(5) is a CNF formula $\phi$ with $n$ variables and $5n/3$ clauses; each clause has $3$ variables and each variable participates in $5$ clauses, appearing in each clause at most once. 

We briefly recall Max-3SAT(5) based on the $k$-prover protocol of \cite{Chuzhoy:2006:HML:1272948.1272956}. Let $\phi$ be a Max-3SAT(5) formula. In the protocol, there are $k$ provers $P_1, ..., P_k$, and a verifier. The verifier, for each pair $(i, j)$ of provers, picks randomly and independently a clause $C_{ij}$ and a distinguished variable $x_{ij}$ from clause $C_{ij}$. $x_{ij}$ is sent to $P_i$, $C_{ij}$ is sent to $P_{j}$, and both $C_{ij}$ and $x_{ij}$ are sent to all other provers. The query sent to each prover has $k \choose 2 $ coordinates. Every prover is expected to return an assignment of the variable or all variables. Then, the verifier checks for each pair $(i,j)$ that the answers of all provers are consistent. 

We introduce some notations first. Given formula $\phi$, let $R$ denote the set of random strings used by verifier. For a given $r \in R$ and $1 \leq i \leq k$, let $q_i(r)$ denote the query sent to prover $P_i$; Let $Q_i = \cup_r \{ q_i(r) \}$, and for each $q_i \in Q_i$ let $\mathcal{A}_i(q_i)$ denote set of all possible answers of prover $P_i$ to $q_i$ that satisfy all the clauses in the query. 

Let $\epsilon$ be a fixed arbitrary real number such that $0< \epsilon < 1$. Formula $\phi$ is a Yes-instance if there is an assignment that satisfies all clauses, and it is a No-instance with respect to $\epsilon$ if for any assignment at most $(1-\epsilon)$ fraction of clauses are satisfiable. 

\begin{theorem}(Proved by \cite{Arora:1998:PVH:278298.278306}) 
There is a constant $\epsilon$, $0 < \epsilon < 1$, such that it is NP-hard to distinguish between a Yes-instance and a No-instance of the Max-3SAT(5) problem. 
\end{theorem}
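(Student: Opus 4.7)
The plan is to derive this statement as a direct consequence of the PCP theorem, combined with standard gap-preserving reductions that enforce the regularity conditions needed for the $k$-prover protocol. The raw PCP theorem gives inapproximability for Max-3SAT, but Max-3SAT(5) requires the additional structural properties that every clause has exactly three variables, every variable appears in exactly five clauses, and no variable appears twice in the same clause. So the work is in lifting a gap-preserving reduction from plain Max-3SAT to this regular version while preserving a constant hardness gap.

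First, I would invoke the PCP theorem of Arora et al.\ to obtain a constant $\epsilon_0 > 0$ such that it is NP-hard to distinguish a satisfiable 3-CNF formula from one in which no assignment satisfies more than a $(1-\epsilon_0)$ fraction of clauses. Next, I would apply the expander-graph based regularization of Papadimitriou and Yannakakis: for each variable $x$ appearing in $d_x$ clauses, introduce $d_x$ copies $x^{(1)}, \ldots, x^{(d_x)}$, one per occurrence, and add consistency clauses of the form $(x^{(i)} = x^{(j)})$ for edges $(i,j)$ of a constant-degree expander on $d_x$ vertices. Since expanders have edge expansion $\Omega(1)$, any assignment that violates a constant fraction of the original clauses either (a) violates a constant fraction of equality clauses, or (b) forces the ``majority'' restriction to still violate a constant fraction of the original clauses, so the gap is preserved up to a constant factor. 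A final cleanup step pads and trims to make each variable appear exactly five times and each clause contain exactly three distinct variables, adjusting $\epsilon$ by another constant factor.

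The main obstacle will be the regularity analysis: showing that after expander-replacement the gap survives requires the expander mixing argument and a careful case analysis that distinguishes between assignments that are approximately consistent within each variable's copies (where original clause violations translate directly) and assignments that are not (where many consistency clauses must be violated). The precise constants are not needed; only that some constant $\epsilon > 0$ survives the whole chain of reductions. Since this entire pipeline is exactly what is carried out in~\cite{Arora:1998:PVH:278298.278306} and follow-ups, citing the result as stated is sufficient for our purposes, and the rest of the hardness argument in this appendix will treat the constant $\epsilon$ as a fixed parameter inherited from this theorem.
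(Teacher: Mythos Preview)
The paper does not prove this theorem at all: it is stated with the parenthetical ``(Proved by \cite{Arora:1998:PVH:278298.278306})'' and used as a black box in the hardness reduction that follows. Your proposal is thus strictly more than what the paper provides; you sketch the standard derivation (PCP theorem $\Rightarrow$ gap Max-3SAT $\Rightarrow$ expander-based regularization to bounded occurrence), which is correct as an outline and is indeed how this result is obtained in the literature, though the exact regularization to the ``each variable in exactly five clauses'' form is usually attributed to follow-up work rather than the original PCP paper itself. For the purposes of matching the paper, simply citing the result is what the authors do, and your final sentence acknowledging this is the part that aligns with the paper's treatment.
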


\subsection{Graph construction}
Our construction closely follows that of~\cite{karloff+kmr:0-extension}.  Given formula $\phi$, we construct an edge weighted guest graph $G$ over union of two sets of vertices $V'$ and $T$. We first build graphs $G_{V'}$ and $G_{T}$, then obtain graph $G$ by combining them. We call vertex set $V'$ nonterminals, and vertex set $T$ terminals (or labels).  The host graph $H$ will defined by the set $T$ and an associated metric $d_T$.  In the following we describe each of these vertex sets and graphs. 
\\
{\bf Graph $G_{V'}$}: Set $V'$ consists of two types of nonterminal vertices.
\begin{itemize}
\item For each $r \in R$, there is a constraint nonterminal vertex $v(r)$
\item For each $1 \leq i \leq k$ and $q_i \in Q_i$, there is a query nonterminal $v(i,q_i)$. 
\end{itemize}
For each $r$ and $1 \leq i \leq k$, there is an edge between $v(r)$ and $v(i,q_i(r))$ with length $1/2$ and weight $\sqrt{k}$.\\
{\bf Graph $G_T$}: Set $T$ consists of two types of terminal vertices. 
\begin{itemize}
\item For each $r\in R$, and for every k-tuple $(A_1, ..., A_k)$ of pairwise strongly consistent answers that $A_i \in \mathcal{A}_i(q_i(r))$, for $1 \leq i \leq k$,  there is a constraint terminal vertex $(v(r), (A_1, ..., A_k))$.

\item For each $1 \leq i \leq k$, and $q_i \in Q_i$ and each answer $A_i \in \mathcal{A}_i(q_i)$ to $q_i$, there is a query terminal vertex $(v(i, q_i) , A_i)$. 
\end{itemize}
Each vertex $(v(r), (A_1, A_2, ..., A_k))$ is adjacent to vertex $(v(i,q_i(r)), A_i)$ and the length of the edge is $1/2$ and weight $1$. 
\\
{\bf Graph $G_{(V' \cup T)}$} (or simply $G$):  In addition to edges of $E(G_{V'})$ and $E(G_T)$, 
there are also edges between a nonterminal and a terminal vertex with weight 1 in the following way; There is an edge between $v(r)$ and $(v(r ), (A_1, ..., A_k))$, if $(A_1, ..., A_k)$ are strongly consistent answers which $A_i \in \mathcal{A}_i(q_i(r))$ for each $1 \leq i \leq k$. Also, there is an edge between $v(i,q_i)$ and $(v(i, q_i),A_i)$, if $A_i \in \mathcal{A}_i(q_i)$. 

Now we define three metrics in the following way: 
\begin{itemize}
\item 
{\bf Metric $M$ on $G_{V'}$}: for any $x, x' \in V'$: $M(x,x') = \min \{k, d_{G_{V'}}(x,x')\}$ 
\item
{\bf Metric $\Delta$ on $G_T$}: for any $t,t' \in T$: $  \Delta(t,t') = min \{k,d_{G_T}(t,t')\}$ 
\item
{\bf Metric  $d_T$} on Vertex set $T$: for any $(x,y), (x',y') \in T$,  $d_T((x,y),(x',y')) = \sqrt{k}M(x,x') + \Delta((x,y), (x',y'))$
\end{itemize}
where $d_{G_{V'}}$ and $d_{G_T}$ respectively denote shortest path distance over graph $G_{V'}$ and $G_T$.  

Given an instance $\phi$ of MAX-3SAT(5) with $k$-provers, we have thus constructed an instance seeking the retraction of guest graph $G$ to the host defined by the set $T$ with metric $d_{T}$.  Our analysis distinguishes between the two cases depending on whether $\phi$ is a Yes instance.    
\subsection{Yes instance}
If $\phi$ is a Yes instance, then there is a strategy of the provers that verifier accepts the SAT formula with probability 1. In this case, provers initially agree on a same satisfactory assignment. For each random string $r$ and $i = 1, ..., k$, let $f_i(q_i(r))$ be the answers of prover $P_i$ to query $q_i(r)$ regarding their strategy. Note that it is the case that $f_i(q_i(r))$s are pairwise strongly consistent. Now we define the assignment of nonterminals to terminals in the following way.  For every random string $r$, assign nonterminal $v(r)$ to terminal $(v(r), ( f_1(q_1(r)), ...,  f_k(q_k(r)))$, for every random string $r$ and each $1 \leq i \leq k$, assign the nonterminal $v(i,q_i(r))$ to terminal $v((i,q_i(r)), f_i(q_i(r)))$. 

Now we show an upper bound of $O(k)$ for the maximum weighted stretch of an edge in the Yes instance. 

We look at three cases: 
\begin{itemize}
\item An edge between two nonterminal vertices, a constraint nonterminal $v(r)$ and a query nonterminal $v(i, q_i(r))$; 
Let $a = (v(r), (f_1(q_1(r)), ...,  f_k(q_k(r))) ), b = (v(i, q_i(r)), f_i(q_i(r)))$ be mapped vertices of $v(r)$ and $v(i, q_i(r))$ respectively. Then $d_T(a,b) = \sqrt{k}M(v(r), v(i, q_i(r))) + \Delta(a, b) \leq 1/2(\sqrt{k}+1)$. Also weight of such an edge is $\sqrt[]{k}$, thus the cost in this case would be $O(k)$. 

\item An edge between a constraint nonterminal $v(r)$ and a constraint terminal $b = (v(r), (A_1, ..., A_k))$; 
Then from the assignment $v(r)$ is assigned to $a = (v(r), (f_1(q_1(r)), ...,  f_k(q_k(r))))$. Then stretch of such edge is $d_T(a,b) = \sqrt[]{k} M(v(r), v(r)) + \Delta(a,b) \leq  0 + k = k$. Since the weight of such an edge is $1$, the cost in this case would be O(k). 

\item An edge between a query nonterminal and a query terminal; 
This case is identical to the second case. 
\end{itemize}

\subsection{No instance}
In this case we show a lower bound of $\Omega(k\sqrt{k})$ for cost of any arbitrary assignment. 
Let $f: V' \rightarrow T$ be an arbitrary assignment of nonterminals to terminals. 
For $v \in V'$, define $g(v)$ and $h(v)$ by $f(v) = (g(v), h(v))$. Let $V_1 = \{ v \in V' : M(v, g(v))  \geq \gamma k \} $ for some small constant $0 < \gamma < 1$. 
We consider two cases: 
\begin{itemize}
\item $V_1$ has at least one member. 

Let vertex $u$ be any nonterminal vertex in $V_1$. Vertex $u$ is either a constraint nonterminal of form $v(r)$ or a query nonterminal  of form $v(i,q_i)$. Let $u$ be of form $v(r)$. Then, there is an edge between $u$ and terminal vertex $x = (v(r), (A_1, ... A_k))$. $d_T(f(u), x)$ is the stretch of edge (u,x) by assignment function $f$. 
\begin{eqnarray}
d_T(f(u), x) & = &  d_T( (g(u),h(u)), (v(r), (A_1, ... A_k)))\\
 & = & \sqrt[]{k} M(g(u), u) + \Delta((g(u),h(u)), (v(r), (A_1, ... A_k))) \\
 & \geq & \gamma k \sqrt{k}  
\end{eqnarray}

The case where vertex $u$ is of form $v(i, q_i)$ is identical to this case. 

\item $V_1$ is an empty set.  

In this case, we initially change the assignment from $f=(g,h)$ to $f'(g', h')$ such that for all $v\in V'$, $g'(v) = v$, and also $f'(v)$ is closest to $f(v)$, according to distance $\Delta$ on $T$.

We show that for some nonterminal-nonterminal edge, the unweighted stretch is at least $k$. 
We know that: 
\begin{eqnarray}
\label{nonterminal_sum}
\sum_{r, i} d_T( f'(v(r)), f'(v(i,q_i(r))) ) \geq \sum_{r, i} \Delta( f'(v(r)), f'(v(i,q_i(r))) ) 
\end{eqnarray}

By Proposition 4.4 and Lemma 4.5 in \cite{Chuzhoy:2006:HML:1272948.1272956}, we have that the right-hand side of (\ref{nonterminal_sum}) is at least ${k \choose 2} \frac{\epsilon}{3} |R|$. Since the total number of nonterminal-nonterminal edges is $k|R|$, there exists a nonterminal-nonterminal edge $(u,v)$ that  $\Delta(f'(u), f'(v)) \geq \epsilon' k$.  

Using triangle inequality over metric $\Delta$, we have the following: 
\[
\Delta(f'(u), f'(v)) - \Delta(f(u), f'(u)) - \Delta(f(v), f'(v)) \leq \Delta (f(u), f(v))
\]

In the following, we show that for all $v \in V'$, $ \Delta(f(v), f'(v)) \leq \gamma k$. Note that the structure of graph $G$ implies that if two nonterminals $x,x'$ are adjacent, and $(x,y)$ is terminal vertex, then there is a terminal vertex $(x',y')$ that is adjacent to terminal $(x,y)$ in graph $G$. This implies that if there is a path between two nonterminals $x, x'$ of length $l$, then there is a corresponding path with length at most $l$ between two terminals $(x,y)$ and $(x',y')$. Thus we get the following for any vertex $v \in V'$: 
\begin{eqnarray}
\Delta(f(v), f'(v)) & = & \Delta((g(v), h(v)), (v, h'(v))) \\
& \leq & M(g(v), v)\\
& \leq & \gamma k
\end{eqnarray}

Thus for edge $(u,v)$ we have that: 
\[
(\epsilon'- 2 \gamma) k \leq \Delta (f(u), f(v))
\]

Choosing $\gamma$ small enough, $(\epsilon' - 2 \gamma)$ is a positive constant. Then the weighted stretch of edge $(u,v)$ is $\Omega(k\sqrt{k})$. This completes the proof. 
\end{itemize}

The main result now follows using the same calculations of the parameters as in~\cite{Chuzhoy:2006:HML:1272948.1272956,karloff+kmr:0-extension}.
\begin{theorem}
 For any constant $\delta > 0$, there is no $O((\log n)^{1/4-\delta})$-approximation algorithm for minimizing the maximum weighted stretch problem unless $NP \subseteq DTIME(n^{poly(\log n)})$. 
\end{theorem}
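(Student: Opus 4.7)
The proof plan is to combine the two cases already analyzed -- Yes-instance stretch $O(k)$ and No-instance stretch $\Omega(k\sqrt{k})$ -- to obtain a hardness gap of $\Omega(\sqrt{k})$ between satisfiable and $\epsilon$-unsatisfiable Max-3SAT(5) instances via the reduction of the previous subsection. An $\alpha$-approximation algorithm with $\alpha < c\sqrt{k}$ (for an appropriate constant $c$) would then distinguish Yes- from No-instances of Max-3SAT(5), and hence decide satisfiability. The remaining task is purely parameter bookkeeping: express $k$ as a function of the instance size $n$, and show that the resulting gap is $\Omega((\log n)^{1/4 - \delta})$.

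First I would bound the total number $n$ of vertices in the constructed graph $G$. The number of nonterminals is $|R| + \sum_i |Q_i|$, which is polynomial in the size of $\phi$ times $2^{O(k)}$; the number of terminals is at most $|R| \cdot U^k$ plus $\sum_i |Q_i| \cdot U$, where $U$ is the maximum size of $\mathcal{A}_i(q_i)$, i.e.\ the number of satisfying partial assignments. As in \cite{Chuzhoy:2006:HML:1272948.1272956,karloff+kmr:0-extension}, running the $k$-prover protocol on a Max-3SAT(5) instance of size $m$ yields $U = 2^{O(k)}$ and the number of random strings $|R| = m^{O(k)}$, so the final graph size satisfies $\log n = O(k^2 \log m)$.

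Next I would set $k$ to balance the two sides. Taking $k = \Theta((\log m)^{1/2 - \delta'})$ for a suitably small $\delta' > 0$ gives $\log n = O(k^2 \log m) = O((\log m)^{2 - 2\delta'})$, hence $k = \Omega((\log n)^{1/2 - \delta})$ after adjusting constants, and therefore the gap $\sqrt{k}$ is $\Omega((\log n)^{1/4 - \delta})$. The reduction itself runs in time $2^{O(k)} \cdot m^{O(k)} = n^{\mathrm{poly}(\log n)}$, so the hypothesis $NP \not\subseteq DTIME(n^{\mathrm{poly}(\log n)})$ is precisely what is needed to rule out approximation algorithms of ratio $o((\log n)^{1/4 - \delta})$.

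The main obstacle is not conceptual but one of cross-checking: one must verify that the Yes-case upper bound $O(k)$ and No-case lower bound $\Omega(k\sqrt{k})$ really hold uniformly in $k$ with the parameter regime used by the $k$-prover protocol, so that the constants $\gamma$ and $\epsilon'$ in the No-case analysis remain bounded away from $0$ as $k$ grows. Provided this is done as in \cite{Chuzhoy:2006:HML:1272948.1272956,karloff+kmr:0-extension}, the theorem follows by contrapositive: any $O((\log n)^{1/4 - \delta})$-approximation would, via the reduction, yield a quasi-polynomial-time decision procedure for Max-3SAT(5), contradicting the assumed complexity hypothesis.
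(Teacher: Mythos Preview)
Your proposal is correct and essentially identical to the paper's argument: combine the Yes/No gap of $\Omega(\sqrt{k})$ with the size bound $\log N = O(k^2 \log m)$ for the constructed instance, then choose $k$ polylogarithmic in $m$ so that $\sqrt{k} = \Omega((\log N)^{1/4-\delta})$. One small correction: in the $k$-prover protocol each query has $\binom{k}{2}$ coordinates, so in fact $|R| = m^{O(k^2)}$ and $U = 2^{O(k^2)}$ rather than $m^{O(k)}$ and $2^{O(k)}$; your final bound $\log n = O(k^2 \log m)$ is nonetheless the right one, so the conclusion is unaffected.
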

\junk{
\begin{proof}
As in \cite{Chuzhoy:2006:HML:1272948.1272956}, assuming the size of an instance of MAX-3SAT(5) is $n$ , the size of $R$ is at most $(5n)^{k^2}$, and  $|Q_i| \leq (5n)^{k^2} $ for $1 \leq i \leq k$, and for each $q \in Q_i$, the size of answer set to $q$ is at most $7^{k^2}$. Thus the size of query labels is at most $O( k \cdot 7^{k^2} \cdot (5n)^{k^2})$, and the size of  constraint labels is  $O( 7^{k^2} \cdot (5n)^{k^2} )$ since the number of consistent answers is at most $7^{k^2}$, which implies the size of the construction ($N$) is bounded by $ n^{O({k^2})}$. 

Choosing $k = poly(\log n)$, we get that $k = (\log N)^{1/2-\delta}$ for arbitrary small constant $\delta$. As we showed, there is a gap ratio of $\Omega(\sqrt{k})$ between a No instance and a Yes instance.  This implies that there is no $O((\log N)^{1/4-\delta})$-approximation algorithm for any constant $\delta > 0$, for minimizing the maximum weighted stretch problem unless $NP \subseteq DTIME(n^{poly(\log n)})$.  

\end{proof}
}

\clearpage

\end{document}